\documentclass[11pt,a4paper]{article}
\usepackage{tikz}
\usepackage{amsfonts,amsmath,amssymb,amsthm}
\usepackage{latexsym,amscd}
\usepackage{amsbsy}
\usepackage{CJK}
\usepackage{indentfirst,mathrsfs}
\usepackage{latexsym,amscd}
\usepackage{amsbsy}
\usepackage[noadjust]{cite}
\usepackage{color}
\usepackage{multirow}
\usepackage{makecell}

\usepackage{float}
\usepackage[margin=2.5cm]{geometry}

\usepackage{scalerel,stackengine}
\usepackage{arydshln}
\usepackage{nicematrix}
\stackMath
\newcommand\reallywidehat[1]{
	\savestack{\tmpbox}{\stretchto{
			\scaleto{
				\scalerel*[\widthof{\ensuremath{#1}}]{\kern-.6pt\bigwedge\kern-.6pt}
				{\rule[-\textheight/2]{1ex}{\textheight}}
			}{\textheight}
		}{0.5ex}}
	\stackon[1pt]{#1}{\tmpbox}
}

\numberwithin{equation}{section}
\newtheorem{theo}{Theorem}
\newtheorem{lem}{Lemma}
\newtheorem{coro}{Corollary}

\newtheorem{definition}{Definition}
\newtheorem{example}{Example}
\newtheorem{rem}{Remark}
\newtheorem{conjecture}{Conjecture}
\newtheorem{construction}{Construction}

\title{Optimal Ferrers Diagram Rank-Metric Codes: New Constructions, Diagram Combinations, and Applications to Constant-Dimension Subspace Codes}
\author{Fang-Wei Fu\textsuperscript{1} \and Xuan Gao\textsuperscript{2}\and Sihem Mesnager\textsuperscript{3,4,5}\and Gang Wang\textsuperscript{6$^{\ast}$}}
\date{\small\textsuperscript{1}Chern Institute of Mathematics and LPMC, Nankai University, 300071, Tianjin, China. E-mail: fwfu@nankai.edu.cn\\
	\small\textsuperscript{2}School of Cyberspace Security, Beijing University of Posts and Telecommunications, 100876, Beijing, China. E-mail: gaoxuanhd@163.com\\
	\small\textsuperscript{3}Department of Mathematics, University of Paris VIII, F-93526 Saint-Denis, France.  E-mail: smesnager@univ-paris8.fr\\ 
	\small\textsuperscript{4}Laboratory Geometry, Analysis and Applications, LAGA, University Sorbonne Paris Nord, CNRS, UMR 7539, F-93430, Villetaneuse, France.\\
	\small\textsuperscript{5}Telecom Paris, Polytechnic Institute of Paris, 91120 Palaiseau, France.\\
	\small\textsuperscript{6}College of Science, Civil Aviation University of China, 300300, Tianjin, China. E-mail: gwang06080923@mail.nankai.edu.cn\\
	$^*$Corresponding author}

\begin{document}
	
	\maketitle

\begin{abstract}

Subspace codes, particularly constant-dimension subspace codes (CDCs), have attracted considerable attention because of their fundamental role in random network coding. Among the most powerful algebraic tools for constructing CDCs are Ferrers diagram rank-metric (FDRM) codes, whose study has become a central in coding theory because of their close connection to multilevel constructions and lifted rank-metric codes.

In this paper, we present three new constructions of optimal FDRM codes, all derived from subcodes of maximum rank-distance (MRD) codes. The first construction (Theorem~\ref{theo5}) is based on a new family of generator matrices for systematic MRD codes and yields several previously unknown optimal FDRM codes. In particular, for $q\geq 7$, it establishes the optimality of $[\mathcal{F},7]_q$ FDRM codes with $
\mathcal{F}=[1,2,3,4,8,8,8,8,8]$, thereby resolving an open problem posed by Zhang \emph{et al.} (Des. Codes Cryptogr., 87(1):107--121, 2019).

Our second construction exploits structural properties of generator matrices of a family of systematic MRD codes to obtain new optimal FDRM codes whenever each of the rightmost $\delta-2$ columns of the Ferrers diagram $\mathcal{F}$ contains at least $n-1$ dots. Building upon this approach, we further develop a third construction by substantially relaxing this requirement: it is sufficient to assume that each of the rightmost $\delta-2$ columns of $\mathcal{F}$ contains at least $n-r$ dots, where $r<\kappa$ and $\kappa=n-\delta+1$.

Furthermore, by exploiting the notion of proper combinations of Ferrers diagrams, we develop several recursive constructions that produce large FDRM codes from smaller building blocks, yielding a number of new optimal families. In particular, for an $n\times n$ Ferrers diagram $\mathcal{F}$ with prescribed parameters, one of these constructions establishes the optimality of $[\mathcal{F},\frac{n}{2}-1]_q$ FDRM codes whenever $n$ is even, thereby settling an open problem posed by Etzion \emph{et al.} (IEEE Trans. Inf. Theory, 62(4):1616--1630, 2016).

Finally, by incorporating the newly constructed optimal FDRM codes from Theorem~\ref{theo1} into the multilevel construction together with pending blocks, we obtain new constant-dimension subspace codes attaining the largest currently known cardinalities for the corresponding parameter sets.

\end{abstract}

\noindent
\textit{Keywords.}
Random network coding; constant-dimension subspace codes; rank-metric codes; Ferrers diagrams; Ferrers diagram rank-metric codes.

\section{Introduction}

In 2008, K\"{o}tter and Kschischang \cite{ref20} introduced the concept of subspace codes for random network coding \cite{ref1}, thereby initiating an active research area at the intersection of coding theory and network communications.

Let $q$ be a prime power, let $\mathbb{F}_q$ denote the finite field of order $q$, and let $\mathbb{F}_q^n$ be the $n$-dimensional vector space over $\mathbb{F}_q$. The collection of all subspaces of $\mathbb{F}_q^n$ is called the \emph{projective space} of order $n$ over $\mathbb{F}_q$, denoted by $\mathcal{P}_q(n)$. Endowed with the subspace distance
\[
d_S(\mathcal{U},\mathcal{V})
\triangleq
\dim(\mathcal{U}+\mathcal{V})
-
\dim(\mathcal{U}\cap\mathcal{V}),
\]
for every pair of distinct subspaces
$\mathcal{U},\mathcal{V}\in\mathcal{P}_q(n)$,
the projective space $\mathcal{P}_q(n)$ becomes a metric space.
Any subset of $\mathcal{P}_q(n)$ is called a \emph{subspace code} (or \emph{projective code}).

For an integer $0\le k\le n$, the set of all $k$-dimensional subspaces of $\mathbb{F}_q^n$ forms the \emph{Grassmannian} over $\mathbb{F}_q$, denoted by $\mathcal{G}_q(n,k)$. A nonempty subset
$\mathcal{C}\subseteq\mathcal{G}_q(n,k)$
is called an $(n,d,k)_q$ constant-dimension subspace code (CDC) if $
d_S(\mathcal{U},\mathcal{V})\ge d$
holds for every pair of distinct codewords
$\mathcal{U},\mathcal{V}\in\mathcal{C}$.
If, moreover, $|\mathcal{C}|=M$, then $\mathcal{C}$ is referred to as an
$(n,M,d,k)_q$-CDC.

The literature on subspace codes has expanded rapidly over the past decade, leading to numerous constructions together with upper and lower bounds on their cardinalities. We refer the reader to
\cite{ref4,ref5,ref7,ref10,ref14,ref15,ref16,ref17,ref18,ref19,ref22,ref23,ref39,ref37}
for representative contributions and further references.

A major breakthrough was achieved by Silva \emph{et al.} \cite{ref33}, who showed that lifted maximum rank-distance (MRD) codes produce asymptotically optimal CDCs. Motivated by this result, Etzion and Silberstein \cite{ref7} introduced Ferrers diagram rank-metric (FDRM) codes and the multilevel construction, establishing one of the most influential frameworks for constructing large constant-dimension codes. They also proposed an optimal construction of FDRM codes based on $q$-cyclic MRD codes.

Subsequently, Etzion \emph{et al.} \cite{ref8} developed four different constructions of optimal FDRM codes. The first employs maximum distance separable (MDS) codes placed along the diagonals of matrices. The second is based on subcodes of MRD codes under the condition that each of the rightmost $\delta-1$ columns of the Ferrers diagram $\mathcal{F}$ contains at least $n-1$ dots. The remaining two constructions combine FDRM codes defined on smaller Ferrers diagrams to obtain codes for larger diagrams. Among these approaches, constructions based on subcodes of MRD codes have become a central paradigm in the theory of FDRM codes.

Building upon this paradigm, Antrobus and Gluesing-Luerssen \cite{ref 2} generalized the construction of \cite{ref7} through a suitable choice of basis while proving that not every optimal FDRM code can be obtained as a subcode of an MRD code. Later, Liu \emph{et al.} \cite{ref25} proposed a new MRD-subcode construction based on Gabidulin codes, allowing Ferrers diagrams whose columns contain only $n-r$ dots and thus generalizing the second construction of \cite{ref8}. In the same work, they also extended the combining constructions introduced in \cite{ref8} through the notion of proper combinations of Ferrers diagrams.

Independently, Zhang and Ge \cite{ref38} presented two additional MRD-subcode constructions and derived several new families of optimal FDRM codes from previously known ones. Subsequently, Liu \emph{et al.} \cite{ref24} constructed optimal FDRM codes by exploiting restricted Gabidulin codes together with generator matrices of systematic MRD codes. They further proposed two additional constructions based on two different representations of the extension field $\mathbb{F}_{q^m}$. These results were later unified and significantly generalized by Liu \cite{ref26}, who improved the criterion for selecting subcodes. The resulting framework produces optimal FDRM codes whose cardinalities are not necessarily of the form $q^{v_0}$ and encompasses all previously known constructions based on Gabidulin subcodes.

Most recently, Neri and Stanojkovski \cite{ref29} introduced the notions of \emph{strictly monotone} and \emph{initially convex} Ferrers diagrams, leading to new infinite families of optimal FDRM codes.

Despite substantial progress in recent years, the optimality of FDRM codes remains unknown in many parameter regimes. Consequently, the literature a identified several challenging open problems, reflecting the current limits of existing construction techniques. In this paper, we make progress on two of these open problems, which we briefly recall below.

\medskip

\noindent
\textbf{Open Problem 1.}
Zhang \emph{et al.} \cite{ref38} observed that an optimal FDRM code supported on a relatively small Ferrers diagram may remain optimal when viewed as a code over a larger Ferrers diagram. Since explicit constructions of optimal FDRM codes are often highly nontrivial, a natural and appealing question is whether optimality is preserved under suitable enlargements of Ferrers diagrams. More specifically, one may ask whether optimal FDRM codes continue to exist for Ferrers diagrams obtained by deleting ``vain'' points from known feasible Ferrers diagrams, namely, Ferrers diagrams for which optimal FDRM codes have already been constructed. As an illustration, take $\mathcal{F}=[1,2,3,4,8,8,8,8,9] $, for which an optimal $[\mathcal{F}, 6, 7]_q$ code exists. Does there exist
		an optimal $[\mathcal{F}^{\prime}, 6, 7]_q$ code for the Ferrers diagram displayed below, where 
		$$
		\mathcal{F}^{\prime}=\begin{array}{ccccccccc}
			\bullet&\bullet&\bullet&\bullet&\bullet&\bullet&\bullet&\bullet&\bullet\\
			&\bullet&\bullet&\bullet&\bullet&\bullet&\bullet&\bullet&\bullet\\
			&       &\bullet&\bullet&\bullet&\bullet&\bullet&\bullet&\bullet\\ 
			&        &      &\bullet&\bullet&\bullet&\bullet&\bullet&\bullet\\ 
			&        &      &&\bullet&\bullet&\bullet&\bullet&\bullet\\ 
			&        &      &&\bullet&\bullet&\bullet&\bullet&\bullet\\ 
			&        &      &&\bullet&\bullet&\bullet&\bullet&\bullet\\ 
			&        &      &&\bullet&\bullet&\bullet&\bullet&\bullet     
		\end{array}.
		$$

\medskip

\noindent
\textbf{Open Problem 2.}
Etzion \emph{et al.} \cite{ref 2} identified the existence of optimal FDRM codes over $n\times n$ square Ferrers diagrams as another challenging and fundamental open problem. For this important class of Ferrers diagrams, they proved that the upper bound given in Lemma~3 is attainable for $\delta=2$, $\delta=3$, and for a family of parameters with $\delta=n$. Subsequently, Zhang \emph{et al.} \cite{ref38} established additional optimal families corresponding to the cases $\delta=n$ and $\delta=n-1$. Independently, Antrobus \emph{et al.} \cite{ref8} proposed two distinct constructions establishing the existence of optimal FDRM codes supported on $n\times n$ upper-triangular Ferrers diagrams with $\delta=n-1$.

\medskip

Motivated by these open questions and by the continuing search for new families of optimal FDRM codes, the primary objective of this paper is twofold. First, we develop several new constructions of optimal FDRM codes. Second, we incorporate these new codes into the multilevel construction in order to obtain new constant-dimension subspace codes with the largest currently known cardinalities for a broad range of parameters.

More specifically, our main contributions can be summarized as follows.

\begin{enumerate}
\item
We introduce a first construction of optimal FDRM codes based on carefully designed generator matrices of MRD codes (Theorem~\ref{theo5}). As an application, we establish the optimality of $[\mathcal{F},7]_q$ FDRM codes for
\[
\mathcal{F}=[1,2,3,4,8,8,8,8,8],
\]
for every $q\ge7$, thereby providing a partial solution to Open Problem~1.

\item
Our second construction exploits structural properties of generator matrices of systematic MRD codes. It applies whenever each of the rightmost $\delta-2$ columns of the Ferrers diagram contains at least $n-1$ dots (Theorem~\ref{theo6}).

\item
Our third construction further extends the second one by significantly relaxing the required conditions on the Ferrers diagram. More precisely, it is sufficient to assume that each of the rightmost $\delta-2$ columns contains at least $n-r$ dots, where $r<\kappa$ and $\kappa=n-\delta+1$ (Theorem~\ref{theo4}).
\end{enumerate}

Beyond these three fundamental constructions, we further develop additional families of FDRM codes through the notion of proper combinations of Ferrers diagrams (Constructions~\ref{con3} and~\ref{con4}). In particular, for arbitrary $n\times n$ Ferrers diagrams with prescribed parameters, Theorem~\ref{theo7} establishes the optimality of $[\mathcal{F},\frac{n}{2}-1]_q$ FDRM codes whenever $n$ is even, thereby providing a partial solution to Open Problem~2.

Finally, we incorporate the newly constructed optimal FDRM codes, together with pending blocks, into the multilevel construction. This approach yields several new families of constant-dimension subspace codes attaining the largest currently known cardinalities for the corresponding parameter sets.

The remainder of this paper is organized as follows. Section~2 introduces the notation, fundamental concepts, and existing results on rank-metric codes, Ferrers diagram rank-metric (FDRM) codes, and constant-dimension subspace codes used throughout the paper. Section~3 presents three new infinite families of optimal FDRM codes constructed from carefully designed subcodes of systematic MRD codes. Section~4 presents further applications, comparisons, and additional consequences of the proposed constructions. Section~5 develops several recursive constructions based on proper combinations of Ferrers diagrams, yielding additional infinite families of optimal FDRM codes and providing a solution to an open problem on square Ferrers diagrams. Section~6 is devoted to applications of the proposed constructions to constant-dimension subspace codes. By combining the newly obtained optimal FDRM codes with pending blocks within the multilevel construction framework, we derive new families of constant-dimension codes attaining the largest currently known cardinalities for a broad range of parameter sets. Finally, Section~7 concludes the paper by summarizing the main contributions, highlighting the significance of the proposed results, and discussing several promising directions for future research.

\section{Preliminaries}

In this section, we introduce the notation and basic concepts used throughout the paper. For completeness, we also recall several fundamental definitions and existing constructions concerning rank-metric codes, Ferrers diagram rank-metric (FDRM) codes, and constant-dimension subspace codes, which form the foundation of our subsequent developments.

Let $\mathbb{F}_q$ be the finite field of order $q$, and let $\mathbb{F}_{q^m}$ denote its extension field of degree $m$. We write $\mathbb{F}_q^{m\times n}$ for the vector space of all $m\times n$ matrices over $\mathbb{F}_q$, and $\mathbb{F}_{q^m}^{n}$ for the vector space of all row vectors of length $n$ over $\mathbb{F}_{q^m}$. For a matrix
$A\in\mathbb{F}_q^{m\times n}$,
its rank is denoted by
$\operatorname{rank}(A)$.

Throughout the paper, we index rows and columns of a matrix from $0$. More precisely, row indices belong to $[m]=\{0,1,\ldots,m-1\}$ and column indices belong to $[n]=\{0,1,\ldots,n-1\}$. The entry located in the $i$-th row and the $j$-th column is referred to as the cell $(i,j)$, where $i\in[m]$ and $j\in[n]$. Finally, $I_k$ denotes the $k\times k$ identity matrix.

The matrix space $\mathbb{F}_q^{m\times n}$ is equipped with the \emph{rank distance}
\[
d_R(A,B)=\operatorname{rank}(A-B),
\qquad
A,B\in\mathbb{F}_q^{m\times n}.
\]

An $(m\times n,M,\delta)_q$ rank-metric code is a subset
$\mathcal{D}\subseteq\mathbb{F}_q^{m\times n}$
of cardinality $M$ whose minimum rank distance is
\[
\delta=
\min_{\substack{A,B\in\mathcal{D}\\A\neq B}}
d_R(A,B).
\]

Rank-metric codes satisfy a Singleton-like upper bound
$
M
\le
q^{\max\{m,n\}(\min\{m,n\}-\delta+1)}
$
(see \cite{ref6}).
Codes attaining this bound are called \emph{maximum rank-distance (MRD) codes}. Because of their optimal distance properties, MRD codes play a fundamental role in the theory of rank-metric codes and are among the principal building blocks for constructing optimal FDRM codes.

When $\mathcal{D}$ is a $k$-dimensional $\mathbb{F}_q$-linear subspace of $\mathbb{F}_q^{m\times n}$, it is called a \emph{linear rank-metric code} and is denoted by $[m\times n,k,\delta]_q.$
Linear MRD codes exist for every admissible set of parameters (\cite{ref6,ref12,ref31}), a remarkable property that underpins many modern constructions in rank-metric coding theory.

The rank of a vector $a=(a_0,a_1,\ldots,a_{n-1})\in\mathbb{F}_{q^m}^{n}$ is the dimension of the linear spaces generated over $\mathbb{F}_q$ by its entries, i.e., $\operatorname{rank}(a)=\operatorname{dim}_{\mathbb{F}_q} \langle a_0,a_1,\ldots,a_{n-1} \rangle$. The \emph{rank distance} between vectors $ a,b \in\mathbb{F}_{q^m}^{n}$ is $d_G(a,b)=\operatorname{rank}(a-b)$. A rank-metric code over $\mathbb{F}_{q^m}$ is a non-empty subset $\mathcal{C} \subseteq \mathbb{F}_{q^m}^{n}$. The minimum distance of $\mathcal{C}$ is $$d_G(\mathcal{C})=\operatorname{min}\{d_G(a,b):a,b\in \mathcal{C}, a \neq b\}.$$ The code $\mathcal{C}$ is linear if it is an $\mathbb{F}_{q^m}$-linear subspace of $\mathbb{F}_{q^m}^{n}$.

The correspondence between vectors over the extension field and matrices over the base field plays a central role in the study of rank-metric codes. We therefore recall the standard vector--matrix representation, which allows linear rank-metric codes over $\mathbb{F}_{q^m}$ to be viewed as matrix codes over $\mathbb{F}_q$.

Let
$
\{\beta_0,\beta_1,\ldots,\beta_{m-1}\}
$
be an ordered $\mathbb{F}_q$-basis of the extension field $\mathbb{F}_{q^m}$. We define the map
\[
\psi_m:\mathbb{F}_{q^m}^{\,n}\longrightarrow\mathbb{F}_q^{m\times n},
\]
which is a bijection between vectors over $\mathbb{F}_{q^m}$ and matrices over $\mathbb{F}_q$. More precisely, every vector
\[
a=(a_0,a_1,\ldots,a_{n-1})\in\mathbb{F}_{q^m}^{\,n}
\]
is mapped to an $m\times n$ matrix
$A=(A_{ij})\in\mathbb{F}_q^{m\times n}$ according to

\[
\psi_m:
a=(a_0,a_1,\ldots,a_{n-1})
\longmapsto
A.
\]

The entries of the matrix $A=\psi_m(a)$ are uniquely determined by expanding each coordinate of $a$ with respect to the chosen basis, namely,
\[
a_j=\sum_{i=0}^{m-1}A_{ij}\beta_i,
\qquad
j\in[n].
\]

The following characterization of linear MRD codes is one of the fundamental tools used throughout this paper. It provides a simple and powerful criterion for determining whether a linear rank-metric code is an MRD code by examining the maximal minors of suitably transformed generator matrices.

\begin{lem}\cite{ref35}\label{lemb}
Let $m\ge n$, and let
$G\in\mathbb{F}_{q^m}^{k\times n}$
be a generator matrix of the linear rank-metric code
$\mathcal{C}\subseteq\mathbb{F}_{q^m}^{\,n}$.
Then $\mathcal{C}$ is an MRD code if and only if, for every matrix
$B\in UT_n^{*}(q)$,
all maximal minors of the product $GB$ are nonzero, where  $UT_n^{*}(q)$
denotes the set of all $n\times n$ unipotent upper triangular matrices over $\mathbb{F}_q$, that is, upper triangular matrices whose diagonal entries are all equal to~$1$.
\end{lem}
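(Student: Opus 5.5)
The plan is to reduce the MRD property to a statement about codewords after $\mathbb{F}_q$-linear column operations, and then to read that statement off the maximal minors of $GB$.

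\textbf{Step 1: rank of a codeword.} By the Singleton-like bound, a $k$-dimensional $\mathbb{F}_{q^m}$-linear code of length $n\le m$ is MRD exactly when its minimum rank distance is $n-k+1$. Equivalently, every nonzero codeword $c=xG$ has $\operatorname{rank}(c)\ge n-k+1$.

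For a vector $c\in\mathbb{F}_{q^m}^n$, we have $\operatorname{rank}(c)=n-\dim_{\mathbb{F}_q}\{v\in\mathbb{F}_q^n: cv^{T}=0\}$, since the rank equals the rank of $\psi_m(c)$. So $\operatorname{rank}(c)\le n-k$ holds iff there is an invertible $A\in GL_n(\mathbb{F}_q)$ such that $cA$ has at least $k$ zero coordinates. Concretely, the last $k$ columns of $A$ span a $k$-dimensional subspace of the $\mathbb{F}_q$-kernel.

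\textbf{Step 2: reduction to minors of $GA$.} A nonzero codeword $xG$ satisfies $xGA$ vanishing on a $k$-subset $S$ of coordinates iff the $k\times k$ submatrix $(GA)_S$ is singular, because $x\neq 0$ lies in its left kernel. Hence $\mathcal{C}$ is MRD iff every maximal minor of $GA$ is nonzero for every $A\in GL_n(\mathbb{F}_q)$.

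\textbf{Step 3: reduction from $GL_n(\mathbb{F}_q)$ to $UT_n^*(q)$.} This is the main obstacle. Write $A=PLU$ by Bruhat/LU decomposition over $\mathbb{F}_q$. The column space of $(GA)_S$, namely the span of the chosen columns, should be reducible to the unipotent upper triangular case. The relevant fact is that the span of any $k$ columns of $GA$ equals $G$ applied to a $k$-dimensional subspace $W\subseteq\mathbb{F}_q^n$, and $\det((GA)_S)\neq 0$ iff $G|_W$ has full rank.

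So the condition only depends on the subspace $W$. Every $k$-subspace $W$ of $\mathbb{F}_q^n$ has a reduced column echelon basis. In that basis, after reordering pivot positions, the vectors are columns of a unipotent upper triangular matrix $B$ restricted to a $k$-subset $S$ of columns. One takes the pivots as diagonal positions and fills in the remaining columns with standard basis vectors $e_j$. This requires choosing the echelon form with pivots at the bottom entries (lowest nonzero row), so that each basis vector has the form $e_{p}+\sum_{i<p}(\cdot)e_i$, i.e.\ an upper triangular column with $1$ on the diagonal.

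Therefore the minors of $GB$ for $B\in UT_n^*(q)$ cover all subspaces $W$. Conversely, each $B\in UT_n^*(q)$ is invertible, so the necessity direction follows from Step 2. Combining both directions gives the lemma.

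The delicate point is verifying the echelon normalization, namely that the pivot indices give distinct diagonal positions and the completion remains unipotent. I expect this to be routine once the pivot convention is fixed.
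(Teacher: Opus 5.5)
The paper does not prove this lemma. It is quoted from \cite{ref35} and used as a black box (for instance, in the proof of Lemma~\ref{lemc}), so there is no internal argument to compare yours against.

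Your proof is correct and self-contained. It follows the standard route. First, $\operatorname{rank}(xG)$ equals $n$ minus the dimension of the $\mathbb{F}_q$-kernel $\{v\in\mathbb{F}_q^n : xGv^{T}=0\}$. Second, this gives that $\mathcal{C}$ is MRD iff $\det(GM)\neq 0$ for every full-rank $M\in\mathbb{F}_q^{n\times k}$. This condition depends only on the column space $W$ of $M$, because a change of basis multiplies the determinant by an element of $\mathbb{F}_q^{*}$; this is the precise meaning of your phrase ``$G|_W$ has full rank''. Third, every $k$-dimensional $W$ is the span of $k$ columns of some $B\in UT_n^{*}(q)$. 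Necessity is immediate because the columns of any such $B$ are linearly independent.

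Two small remarks.

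First, the sentence about writing $A=PLU$ is superfluous. It would not give the reduction by itself, since right multiplication by a triangular factor changes the maximal minors. The subspace argument does all the work, so you can drop it.

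Second, the point you flag as delicate is indeed routine. Row-reduce a basis matrix of $W$ scanning the coordinates from right to left. This yields basis vectors $w_1,\dots,w_k$ and indices $p_1>\dots>p_k$ such that $w_j$ has entry $1$ in position $p_j$ and zeros in all positions $>p_j$. The distinctness of the $p_j$ is built into echelon form. Now set the $p_j$-th column of $B$ equal to $w_j$, and every other column $i$ equal to $e_i$. Then $B_{ij}=0$ for $i>j$ and $B_{jj}=1$, so $B\in UT_n^{*}(q)$. The maximal minor of $GB$ on the columns $\{p_1,\dots,p_k\}$ is $\pm\det\bigl(G\,(w_1\ \cdots\ w_k)\bigr)$, which completes the sufficiency direction.
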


Article \cite{ref 2} introduced a remarkable family of systematic MRD codes that has since become one of the fundamental tools for constructing optimal FDRM codes. Because of their particularly convenient generator matrices, these codes have played a central role in many subsequent constructions and will also be a key ingredient in our new results.

\begin{lem}\cite{ref 2}\label{lemd}
Fix integers satisfying
$m\ge n\ge\delta\ge2$,
and let
$k=n-\delta+1$.
For any prime power $q$ and any elements
$a_1,a_2,\ldots,a_k\in\mathbb{F}_{q^m}$
such that
$1,a_1,a_2,\ldots,a_k$
are linearly independent over $\mathbb{F}_q$,
there exists a matrix
$A\in\mathbb{F}_{q^m}^{k\times(n-k)}$
whose first column is
$(a_1,\ldots,a_k)^t$,
such that
$
G=(I_k\,|\,A)
$
is a generator matrix of a systematic
$[m\times n,k,\delta]_q$
MRD code.
\end{lem}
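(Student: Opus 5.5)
Our plan is to start from a Gabidulin code and convert it into systematic form, tracking what happens to the first column of the redundancy part. Set $k=n-\delta+1$ and choose $g_0=1$, $g_1=a_1,\dots,g_k=a_k$. Since $1,a_1,\dots,a_k$ are $\mathbb{F}_q$-independent and $n\le m$, we extend this list to $n$ elements $g_0,g_1,\dots,g_{n-1}\in\mathbb{F}_{q^m}$ that are linearly independent over $\mathbb{F}_q$.

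We use the ordering $h=(a_1,\dots,a_k,1,g_{k+1},\dots,g_{n-1})$, so the element $1$ sits in coordinate $k$, the first non-systematic position. The Moore matrix $M=(h_j^{q^i})_{0\le i<k,\,0\le j<n}$ generates the Gabidulin code, which is an $[m\times n,k,\delta]_q$ MRD code. Write $M=(M_1\,|\,M_2)$ with $M_1$ the $k\times k$ block on the first $k$ coordinates. Because $a_1,\dots,a_k$ are $\mathbb{F}_q$-independent, $M_1$ is an invertible Moore matrix. Hence $G=M_1^{-1}M=(I_k\,|\,A)$ generates the same code, and that code is systematic MRD.

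The main step is to identify the first column of $A$, namely $M_1^{-1}c$ with $c=(1,1^q,\dots,1^{q^{k-1}})^t=(1,\dots,1)^t$. With this ordering, that column is not $(a_1,\dots,a_k)^t$. We therefore need a different normalization.

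\begin{itemize}
\item Multiplying the code by a nonzero scalar, or applying an $\mathbb{F}_q$-invertible column operation, preserves the MRD property. The latter holds by Lemma~\ref{lemb}, or directly because rank is invariant under $\mathbb{F}_q$-linear maps on coordinates.
\item Instead, we consider the dual description. Solving $M_1x=c$ means finding $x$ with $\sum_j x_j a_j^{q^i}=1$ for all $i<k$.
\item The cleaner route is to look for any systematic generator $(I_k\,|\,A)$ of an MRD code whose first redundancy column is some vector $v=(v_1,\dots,v_k)^t$ with $1,v_1,\dots,v_k$ independent. We then transform $v$ into $(a_1,\dots,a_k)^t$.
\end{itemize}

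For that transformation, pick $T\in GL_k(\mathbb{F}_q)$ together with an $\mathbb{F}_q$-semilinear relabeling. The key observation is that the entries of the first redundancy column span, together with $1$, the $\mathbb{F}_q$-space spanned by the first $k+1$ coordinates of codewords. The intended normalization is that MRD-ness is invariant under replacing $G$ by $(S\,|\,0;0\,|\,T')\,G\,\mathrm{diag}(S^{-1},T)$-type operations with $\mathbb{F}_q$-matrices. We also use the fact that the $\mathbb{F}_q$-linear map $\phi$ of $\mathbb{F}_{q^m}$ fixing $1$ and sending $v_i\mapsto a_i$ extends to an $\mathbb{F}_q$-automorphism of $\mathbb{F}_q^m$. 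Applying $\phi$ entrywise to matrices in $\psi_m$ form is left multiplication by an invertible $m\times m$ matrix over $\mathbb{F}_q$, so it preserves ranks. Consequently, the image code is an $\mathbb{F}_q$-linear MRD code with the same parameters.

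The obstacle is that $\phi$ is only $\mathbb{F}_q$-linear, so the image need not be $\mathbb{F}_{q^m}$-linear, while the lemma asks for a generator matrix over $\mathbb{F}_{q^m}$. We would resolve this by avoiding $\phi$ and instead choosing the Gabidulin evaluation points so that $v=(a_1,\dots,a_k)^t$ directly. Concretely, we take evaluation points $h=(h_0,\dots,h_{n-1})$ with $h_k=1$. We then require $M_1^{-1}(1,\dots,1)^t=(a_1,\dots,a_k)^t$, that is,
\[
\textstyle\sum_{j<k}a_{j+1}h_j^{q^i}=1 \quad\text{for all } i<k .
\]
This is a system of $k$ $q$-polynomial equations in $h_0,\dots,h_{k-1}$, solved by choosing $h$ in the dual basis relation to $(a_i)$ with respect to a suitable trace-type form on the span of $1,a_1,\dots,a_k$. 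Showing that such $h$ exist with $h_0,\dots,h_{n-1}$ independent is the hard step, and we would attack it by counting solutions of this linearized system over $\mathbb{F}_{q^m}$.
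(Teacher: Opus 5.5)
Your framework is sound. You take a Gabidulin code with evaluation points $h_0,\dots,h_{n-1}$ and normalize $h_k=1$, which is harmless because scaling all evaluation points by a nonzero constant does not change the code. You then observe that the first redundancy column of $M_1^{-1}M$ is $M_1^{-1}(1,\dots,1)^t$. This correctly reduces the lemma to solving $\sum_{j<k}a_{j+1}h_j^{q^i}=1$ for $0\le i\le k-1$ with $h_0,\dots,h_{k-1},1$ linearly independent over $\mathbb{F}_q$. The extension by $h_{k+1},\dots,h_{n-1}$ is then trivial because $n\le m$, and it changes neither $M_1$ nor $h_k$.

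But the proposal stops exactly there. The existence of such $h$ is the whole content of the lemma, and ``a suitable trace-type form'' and ``counting solutions of this linearized system'' are not arguments, so as written there is a genuine gap. The paper does not prove this lemma itself (it quotes it from the cited work), so your argument has to be self-contained. The scalings, $\mathbb{F}_q$-column operations and the map $\phi$ in the middle are never used and should be deleted. Some claims there are also incorrect, e.g.\ the one about what the first redundancy column spans.

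The missing observation is that the system is $\mathbb{F}_{q^m}$-linear in disguise. Applying the automorphism $x\mapsto x^{q^{-i}}$ to the $i$-th equation turns it into
\[
\sum_{j=0}^{k-1}a_{j+1}^{q^{-i}}\,h_j=1,\qquad 0\le i\le k-1.
\]
Raise every entry of the coefficient matrix $N=(a_{j+1}^{q^{-i}})_{i,j}$ to the power $q^{k-1}$. The result is a row permutation of the Moore matrix of the independent elements $a_1,\dots,a_k$, so $N$ is invertible and $h_0,\dots,h_{k-1}$ exist (uniquely).

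For independence, set $w=(h_0,\dots,h_{k-1},1)$ and $b=(a_1,\dots,a_k,-1)$, so that $\sum_{j=0}^{k}b_j^{q^{-i}}w_j=0$ for $0\le i\le k-1$. Suppose the $w_j$ spanned an $r$-dimensional $\mathbb{F}_q$-space with $r\le k$. Write $w=uQ$, where $u\in\mathbb{F}_{q^m}^{r}$ is a basis of that span and $Q\in\mathbb{F}_q^{r\times(k+1)}$ has rank $r$. Then $\sum_{l}u_l\beta_l^{q^{-i}}=0$ for $i<k$, where the entries of $\beta=Qb^t$ are $\mathbb{F}_q$-independent because those of $b$ are. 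The $r\times r$ matrix $(\beta_l^{q^{-i}})_{i<r}$ is invertible for the same reason as $N$, which forces $u=0$, a contradiction. With this step inserted, your Gabidulin argument proves the lemma.

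An equivalent route avoids the computation. The code generated by $(I_k\mid a)$ has dual spanned by $b$, which is a one-dimensional Gabidulin code since the entries of $b$ are independent. Duals of Gabidulin codes are Gabidulin, so that code is $\mathcal{G}_k(g_0,\dots,g_k)$ for independent $g_j$. Extending the evaluation points to length $n$ gives an MRD code whose systematic generator restricts on the first $k+1$ columns to $(I_k\mid a)$, by uniqueness of the systematic generator of the punctured code.
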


We next recall the notion of Ferrers diagrams, which provide the combinatorial framework underlying Ferrers diagram rank-metric codes. Their geometric structure determines the support of the matrices in an FDRM code and therefore plays a fundamental role in both the construction and the analysis of such codes.

An $m\times n$ Ferrers diagram $\mathcal{F}$ is an $m\times n$ array consisting of dots and empty cells satisfying the following monotonicity conditions:

\begin{enumerate}
\item All dots in each row are right-justified.
\item The number of dots in each row is less than or equal to that in the row immediately above.
\item The first row contains exactly $n$ dots, while the last column contains exactly $m$ dots.
\end{enumerate}

For a Ferrers diagram $\mathcal{F}$, we denote by
$|\mathcal{F}|$
its total number of dots. When
$|\mathcal{F}|=mn$,
$\mathcal{F}$ is called a \emph{full Ferrers diagram}. Let
$\gamma_i$
denote the number of dots in the $i$-th column, and let
$\rho_i$
denote the number of dots in the $i$-th row.
Given a sequence
$\gamma_0,\gamma_1,\ldots,\gamma_{n-1}$,
there exists a unique
$m\times n$
Ferrers diagram having exactly
$\gamma_i$
dots in its $i$-th column for every
$0\le i\le n-1$.
We write this diagram as
$
\mathcal{F}
=
[\gamma_0,\gamma_1,\ldots,\gamma_{n-1}].
$

We will frequently use two operations on Ferrers diagrams throughout the paper.

\begin{itemize}
\item
The \emph{reverse Ferrers diagram} of $\mathcal{F}$ is
$
\hat{\mathcal{F}}
=
[\gamma_{n-1},\gamma_{n-2},\ldots,\gamma_0].
$

\item
The \emph{transpose Ferrers diagram} of $\mathcal{F}$ is
$
\mathcal{F}^t
=
[\rho_{m-1},\rho_{m-2},\ldots,\rho_0].
$
\end{itemize}

\begin{example}\label{example4}
Let
$\mathcal{F}=[2,3,4,5]$.
Its reverse and transpose Ferrers diagrams are illustrated below.

\centering
$
\mathcal{F}=
\begin{matrix}
&\bullet&\bullet&\bullet&\bullet\\
&\bullet&\bullet&\bullet&\bullet\\
&&\bullet&\bullet&\bullet\\
&&&\bullet&\bullet\\
&&&&\bullet
\end{matrix},
$
\qquad
$
\hat{\mathcal{F}}=
\begin{matrix}
&\bullet&\bullet&\bullet&\bullet\\
&\bullet&\bullet&\bullet&\bullet\\
&\bullet&\bullet&\bullet\\
&\bullet&\bullet\\
&\bullet
\end{matrix},
$
\qquad
$
\mathcal{F}^{t}=
\begin{matrix}
&\bullet&\bullet&\bullet&\bullet&\bullet\\
&&\bullet&\bullet&\bullet&\bullet\\
&&&\bullet&\bullet&\bullet\\
&&&&\bullet&\bullet
\end{matrix}.
$
\end{example}

The class of Ferrers diagrams has been enriched by several important structural notions that have proved useful for constructing optimal FDRM codes. Among them, Neri \emph{et al.} \cite{ref29} introduced the concept of \emph{initially convex} Ferrers diagrams, whose formal definition is recalled below.

Neri \emph{et al.} \cite{ref29} introduced the notion of \emph{initially convex} Ferrers diagrams, which has proved particularly useful in the construction of several new families of optimal FDRM codes. We recall the definition below.

\begin{definition}\cite{ref29}\label{def4}
A Ferrers diagram$\mathcal{F}=[\gamma_0,\ldots,\gamma_{n-1}]$
is said to be \emph{initially convex} if the following two conditions hold:
$
\gamma_0\le 1
$
and
$
\gamma_{i+1}-\gamma_i\le1,\ 0\le i\le n-2.
$
\end{definition}

We now recall the definition of Ferrers diagram rank-metric codes, which constitute the principal objects of study throughout this paper.

Let $\mathcal{F}$ be a fixed $m\times n$ Ferrers diagram. An
$[\mathcal{F},k,\delta]_q$
Ferrers diagram rank-metric (FDRM) code, or simply an
$[\mathcal{F},k,\delta]_q$
code, is an
$[m\times n,k,\delta]_q$
rank-metric code in which every codeword matrix has support entirely contained in $\mathcal{F}$; equivalently, every entry outside $\mathcal{F}$ is equal to zero.

A useful symmetry property immediately follows from the definition: whenever an
$[\mathcal{F},k,\delta]_q$
code exists, there also exist
$[\hat{\mathcal{F}},k,\delta]_q$
and
$[\mathcal{F}^{t},k,\delta]_q$
codes. Consequently, many existence results transfer directly to reverse or transposed Ferrers diagrams.

One of the cornerstones of the theory of FDRM codes is the Singleton-type upper bound established by Etzion and Silberstein \cite{ref7}, which provides a universal upper bound on the dimension of every FDRM code.

\begin{lem}\cite{ref7}\label{lem1}
Let $\delta$ be a positive integer. For each
$0\le i\le\delta-1$,
let
$v_i(\mathcal{F},\delta)$
denote the number of dots in $\mathcal{F}$ that do not lie in the first $i$ rows and in the rightmost $\delta-1-i$ columns. Then every
$[\mathcal{F},k,\delta]_q$
FDRM code satisfies
\[
k=\dim(\mathcal{F},\delta)
\le
\min_{i\in\{0,1,\ldots,\delta-1\}}
v_i(\mathcal{F},\delta)
=:v_{\min}(\mathcal{F},\delta).
\]
\end{lem}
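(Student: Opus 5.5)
The plan is to prove the bound separately for each fixed $i\in\{0,1,\ldots,\delta-1\}$ by a puncturing (projection) argument. The minimum over $i$ then follows at once. Fix $i$ and let $\mathcal{C}$ be an $[\mathcal{F},k,\delta]_q$ code. Let $S_i$ be the set of cells of $\mathcal{F}$ that lie in the first $i$ rows or in the rightmost $\delta-1-i$ columns. By definition, the cells of $\mathcal{F}$ outside $S_i$ number exactly $v_i(\mathcal{F},\delta)$.

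Consider the $\mathbb{F}_q$-linear map $\pi_i$ that sends a matrix $A\in\mathcal{C}$ to the vector of its entries at the dots of $\mathcal{F}$ not in $S_i$. This vector lies in $\mathbb{F}_q^{v_i(\mathcal{F},\delta)}$. The key step is to show that $\pi_i$ is injective on $\mathcal{C}$, which gives $k=\dim\mathcal{C}\le v_i(\mathcal{F},\delta)$.

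Since $\mathcal{C}$ is linear, injectivity amounts to showing that $\ker\pi_i=\{0\}$. Take $A\in\mathcal{C}$ with $\pi_i(A)=0$. Every entry of $A$ outside $\mathcal{F}$ is already zero, so all nonzero entries of $A$ lie in $S_i$. Hence $A=A_1+A_2$, where:
\begin{itemize}
\item $A_1$ is supported on the first $i$ rows, so $\operatorname{rank}(A_1)\le i$;
\item $A_2$ is supported on the last $\delta-1-i$ columns, so $\operatorname{rank}(A_2)\le\delta-1-i$.
\end{itemize}
By subadditivity of rank, $\operatorname{rank}(A)\le i+(\delta-1-i)=\delta-1<\delta$. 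Because $\mathcal{C}$ is linear with minimum rank distance $\delta$, every nonzero codeword has rank at least $\delta$. Therefore $A=0$.

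The argument uses only linearity and subadditivity of rank, so there is no real obstacle. The one point needing care is the overlap of the two regions, the top-right block where the first $i$ rows meet the last $\delta-1-i$ columns. Each cell of that block must be assigned to exactly one of $A_1$, $A_2$ (say to $A_1$), so that $A=A_1+A_2$ holds with the stated support conditions.

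The argument also makes no use of the Ferrers shape beyond the support condition. For a nonlinear code of size $M$ the same projection argument gives $M\le q^{v_i}$: two distinct codewords with equal projections would have a difference of rank less than $\delta$. This is consistent with the statement being phrased through $\dim(\mathcal{F},\delta)$.
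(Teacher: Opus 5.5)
Your proposal is correct and follows essentially the same route as the standard proof of Etzion and Silberstein, which the paper cites without reproducing. In both, a nonzero codeword vanishing on the $v_i(\mathcal{F},\delta)$ dots outside the first $i$ rows and the rightmost $\delta-1-i$ columns would have rank at most $i+(\delta-1-i)=\delta-1<\delta$, so projecting onto those dots is injective.
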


FDRM codes attaining the upper bound of Lemma~\ref{lem1} are called \emph{optimal FDRM codes}. The existence of such optimal codes has become a central theme in the theory of Ferrers diagram rank-metric codes. Motivated by the remarkable generality of the Singleton-type bound, Etzion and Silberstein \cite{ref7} formulated the following fundamental conjecture, which remains open in full generality and continues to drive much of the current research in the area.

\begin{conjecture}\cite{ref7}\label{conjecture1}
For every choice of parameters $q$, $\mathcal{F}$, and $\delta$, the upper bound given in Lemma~\ref{lem1} is attainable.
\end{conjecture}

Substantial progress has been made toward Conjecture~\ref{conjecture1} over the past decade. In particular, Neri \emph{et al.} \cite{ref29} established a constructive solution for an important family of Ferrers diagrams, namely the initially convex ones. Their construction is valid over arbitrary finite fields and does not impose any additional restrictions on the minimum rank distance~$\delta$.

\begin{lem}\cite{ref29}\label{lema}
Let $\delta$ and $n$ be positive integers satisfying
$1\le\delta\le n$.
Then Conjecture~\ref{conjecture1} holds for every initially convex Ferrers diagram over an arbitrary finite field.
\end{lem}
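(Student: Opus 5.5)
The plan is to prove the statement by induction on the number of columns $n$. The inductive step adds one column on the left of the diagram. Initial convexity ($\gamma_0\le 1$ and $\gamma_{j+1}-\gamma_j\le 1$) is exactly the condition that makes each such extension cost at most one new row of dots. We work with the transpose/reverse symmetry stated earlier, so we may assume $m\ge n$ and identify a matrix with support in $\mathcal{F}$ with a vector $(c_0,\dots,c_{n-1})\in\mathbb{F}_{q^m}^n$ via $\psi_m$. The support condition then says that $c_j$ lies in the $\mathbb{F}_q$-span $V_j=\langle\beta_0,\dots,\beta_{\gamma_j-1}\rangle$. Initial convexity gives the chain $V_0\subseteq V_1\subseteq\cdots$ with $\dim V_0\le 1$ and $\dim V_{j+1}/V_j\le 1$. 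Our target is a linear $\mathbb{F}_q$-space of such vectors with minimum rank distance $\delta$ and dimension $v_{\min}(\mathcal{F},\delta)$ from Lemma~\ref{lem1}.

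\textbf{Step 1 (the bound).} First we determine which $v_i(\mathcal{F},\delta)$ realises the minimum. Using $\gamma_j\le \gamma_{j-1}+1$, we compare $v_i$ with $v_{i+1}$. Passing from $i$ to $i+1$ removes one more row and frees one more column among the rightmost $\delta-1$. Initial convexity bounds how many dots that freed column gains relative to the removed row. This gives a closed formula for $v_{\min}$ in terms of the $\gamma_j$ and a threshold index.

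\textbf{Step 2 (the code).} We build the code as an $\mathbb{F}_q$-subcode of a Gabidulin-type evaluation code. A codeword is $c_j=f(\alpha_j)$ for a linearized polynomial $f$ of $q$-degree at most $n-\delta$, where the evaluation points $\alpha_j$ are chosen adapted to the flag $V_j$. Concretely, we take $\alpha_j$ so that $\langle\alpha_0,\dots,\alpha_j\rangle$ is compatible with the chain. The support condition $c_j\in V_j$ becomes a system of $\mathbb{F}_q$-linear constraints on the coefficients of $f$. Because the dimension of $V_j$ grows by at most one per column, these constraints can be imposed column by column. At each stage only the new coordinate must be forced into a space that is at most one dimension larger than the previous one.

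\textbf{Step 3 (counting and distance).} The minimum distance is at least $\delta$ automatically, since the code is a subcode of an MRD code. By Lemma~\ref{lemb}, one can equivalently verify nonvanishing of maximal minors of $GB$. The main obstacle is showing that the imposed constraints are independent enough that the subcode has dimension exactly $v_{\min}$. The naive count gives $mn-\sum_j\gamma_j$ constraints on a space of dimension $m(n-\delta+1)$, and this usually undercounts. The first thing we would try is an inductive argument on $n$. Deleting column $0$ leaves an initially convex diagram after removing the top row if $\gamma_0=1$. We would then show that the fibre of the restriction map has the expected dimension. The convexity hypothesis enters precisely here: it ensures that the kernel contributed by the new column equals the increment $v_{\min}(\mathcal{F},\delta)-v_{\min}(\mathcal{F}',\delta)$ computed in Step~1. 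If this linear-algebra count resists, the fallback is to handle the base cases $\delta\le 2$ and $\delta=n$ directly and use the combining constructions of \cite{ref8} along the column decomposition.
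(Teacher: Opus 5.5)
The paper itself does not prove this lemma; it quotes it from \cite{ref29}. So your outline has to stand on its own, and it has a genuine gap. The only step with real content is left open, as your own fallback clause concedes: that for suitable evaluation points the $\mathbb F_q$-space of $q$-polynomials $f$ of $q$-degree at most $n-\delta$ with $f(\alpha_j)\in V_j$ for all $j$ has dimension exactly $v_{\min}(\mathcal F,\delta)$.

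Step~2 supplies no mechanism for this:
\begin{itemize}
\item The points $\alpha_j$ are never specified; ``compatible with the chain'' is not a checkable condition.
\item The claim that the constraints can be imposed column by column, at the cost of one new dimension per step, is false as a count. The condition $f(\alpha_j)\in V_j$ has codimension exactly $M-\gamma_j$ in the coefficient space, no matter how $V_j$ sits over $V_{j-1}$.
\item Frobenius powers $\alpha_j^{q^t}$ have no reason to stay in the span of the first few $\beta_i$. This is why support control in the literature goes through systematic generator matrices with explicit polynomial entries (Lemmas~\ref{lemd} and~\ref{lemc}), not through evaluation maps.
\end{itemize}

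The induction is not well defined either. $\gamma_0=1$ always holds, deleting column $0$ and the top row creates an empty column whenever $\gamma_1=1$, and you give no reason why restriction to the smaller diagram keeps distance $\delta$. The fallback through Lemma~\ref{lem7} cannot reach the bound in general: its output has only the common dimension of the two pieces, and it needs a full block $\mathcal D$.

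Step~1, which you left undone, is where to start, because it reorients the problem. With $k=n-\delta+1$ one has
$v_{i+1}-v_i=\max\{\gamma_{k+i}-i-1,0\}-\#\{j\le k-1+i:\gamma_j\ge i+1\}$.
When $p=\gamma_{k+i}-i-1\ge 1$, the condition $\gamma_{j+1}\le\gamma_j+1$ forces the $p$ columns just left of column $k+i$ to have at least $i+1$ dots each. Hence $v_0\ge v_1\ge\cdots\ge v_{\delta-1}$, and
$v_{\min}(\mathcal F,\delta)=\sum_j\max\{\gamma_j-\delta+1,0\}$,
the number of dots outside the first $\delta-1$ rows. This is usually below $v_0$: for $[1,2,2,3]$ and $\delta=2$, $v_0=5>v_1=4$. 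So a column-wise information count is aimed at the wrong quantity.

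Initial convexity is equivalent to the row lengths being strictly decreasing. Hence $\mathcal F^t$ has strictly increasing column lengths and $v_{\min}(\mathcal F,\delta)=v_0(\mathcal F^t,\delta)$. In the transposed orientation you are in the information-set regime of Lemma~\ref{lem2}. What remains is to keep the $\delta-1$ redundancy columns inside $\mathcal F^t$ over every $\mathbb F_q$.

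Your reduction ``transpose so that $m\ge n$'' is incompatible with your chain $V_j$. Since $\gamma_j\le j+1$, we have $m\le n$, with equality only for $[1,2,\dots,n]$, and transposing destroys the column condition. Pad with zero rows instead.

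Finally, every diagonal $\{(i,i+t)\}$ of an initially convex diagram is an initial segment starting in row $0$. So the MDS-on-diagonals construction of \cite{ref8} already attains $v_{\delta-1}$ when $q$ is large. The substance of the lemma is therefore the small-field case, which is exactly where your missing dimension argument would have to do all the work.
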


Since the pioneering work of Etzion and Silberstein, numerous constructions of optimal FDRM codes have been developed
\cite{ref 2,ref8,ref24,ref25,ref26,ref29,ref38}. Among these, the constructions based on subcodes of systematic MRD codes have proved particularly influential and constitute the main starting point for the new constructions presented in this paper. For the reader's convenience, we recall below only the results used in the subsequent sections.

\begin{lem}\cite{ref8}\label{lem2}
Let $m\ge n$, and let
$G=(I_k\,|\,A)$
be the generator matrix of a systematic
$[m\times n,k,\delta]_q$
MRD code, where
$k=n-\delta+1$.
Let
$
0\le\lambda_0\le\lambda_1\le\cdots\le\lambda_{k-1}\le m.
$
Define
\[
U=
\left\{
(u_0,\ldots,u_{k-1})\in\mathbb{F}_{q^m}^{\,k}
\;\middle|\;
\psi_m(u_i)
=
(u_{i,0},\ldots,u_{i,\lambda_i-1},0,\ldots,0)^t,
\;
u_{i,j}\in\mathbb{F}_q,\;
i\in[k],\;
j\in[\lambda_i]
\right\}.
\]Then
$
\mathcal{C}
=
\left\{
\psi_m(c)
:
c=uG,\;
u\in U
\right\}
$
is a linear FDRM code over $\mathbb{F}_q$ with dimension
$\sum_{i=0}^{k-1}\lambda_i$
and minimum rank distance at least $\delta$.
\end{lem}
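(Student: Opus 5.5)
The plan is to verify three things: that $\mathcal{C}$ is an $\mathbb{F}_q$-linear code, that it has the stated dimension, and that it has minimum rank distance at least $\delta$.

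\emph{Linearity and dimension.} The set $U$ is an $\mathbb{F}_q$-subspace of $\mathbb{F}_{q^m}^{\,k}$. Each coordinate $u_i$ ranges over the $\lambda_i$-dimensional $\mathbb{F}_q$-span of $\beta_0,\ldots,\beta_{\lambda_i-1}$, so
\[
\dim_{\mathbb{F}_q}U=\sum_{i=0}^{k-1}\lambda_i .
\]
The map $u\mapsto \psi_m(uG)$ is $\mathbb{F}_q$-linear, because $\psi_m$ is $\mathbb{F}_q$-linear and right multiplication by $G$ is $\mathbb{F}_{q^m}$-linear. It is also injective: since $G=(I_k\,|\,A)$, the first $k$ coordinates of $uG$ are exactly $u$. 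Hence $\mathcal{C}$ is an $\mathbb{F}_q$-linear code with $\dim\mathcal{C}=\sum_i\lambda_i$.

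\emph{Minimum distance.} The set $\{uG:u\in\mathbb{F}_{q^m}^{\,k}\}$ is the full systematic MRD code, and $\mathcal{C}$ is the image under $\psi_m$ of a subset of it. Because $\psi_m$ is rank-preserving, $\operatorname{rank}(\psi_m(c))=\operatorname{rank}(c)$, where the right-hand side is the $\mathbb{F}_q$-dimension of the span of the entries of $c$. Therefore every nonzero codeword has rank at least $\delta$. Since $\mathcal{C}$ is linear, $d_R(A,B)=\operatorname{rank}(A-B)$ with $A-B\in\mathcal{C}\setminus\{0\}$, so the minimum distance of $\mathcal{C}$ is at least $\delta$.

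\emph{Support shape.} The lemma as stated asks only for a linear code with the given dimension and distance, which the two steps above already establish. I would still explain why these codes are FDRM codes. The first $k$ columns of $\psi_m(c)$ are the vectors $\psi_m(u_i)$, and these vanish below row $\lambda_i$. With the nondecreasing $\lambda_i$, they fit a Ferrers diagram whose $i$-th column has $\lambda_i$ dots. The remaining $n-k=\delta-1$ columns may be full, with $m$ dots each.

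I expect no step to be a serious obstacle. The only point needing care is the definition of $\psi_m$: the zero-padding of $\psi_m(u_i)$ must match the chosen ordering of the basis $\beta_0,\ldots,\beta_{m-1}$, so that truncating a column corresponds exactly to restricting $u_i$ to the span of the first $\lambda_i$ basis elements.
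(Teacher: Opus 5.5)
Your argument is correct and is the standard one. The paper quotes this lemma from Etzion et al.\ without proof, but where it uses the same idea (the distance claim and Step~1 in the proof of Theorem~\ref{theo5}) it relies on exactly your two observations: the identity block of $G=(I_k\,|\,A)$ makes $u\mapsto\psi_m(uG)$ injective and forces the first $k$ columns to be $\psi_m(u_i)$, and restricting the information symbols of an MRD code cannot lower the minimum rank distance because $\psi_m$ preserves rank.
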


The above construction is a fundamental technique for deriving optimal FDRM codes from systematic MRD codes. Nevertheless, as observed below, its applicability is restricted to a particular family of parameter sets.

\begin{rem}\label{rem1}
Lemma~\ref{lem2} produces only optimal FDRM codes whose dimension satisfies
\[
v_0
=
\sum_{i=0}^{n-\delta}\lambda_i
=
v_{\min}(\mathcal{F},\delta).
\]
\end{rem}

This limitation has motivated a series of refinements and generalizations aimed at enlarging the class of Ferrers diagrams for which optimal constructions are available. Along this direction, Etzion and Silberstein \cite{ref7} first proved the existence of optimal
$[\mathcal{F},k,\delta]_q$
codes for $m\times n$ Ferrers diagrams ($m\ge n$) in which each of the rightmost $\delta-1$ columns contains exactly $m$ dots. Building upon this foundational result, Etzion \emph{et al.} \cite{ref8} subsequently established the following stronger sufficient condition.

\begin{lem}\cite{ref8}\label{lem3}
Let
$\mathcal{F}=[\gamma_0,\gamma_1,\ldots,\gamma_{n-1}]$
be an $m\times n$ Ferrers diagram. If each of the rightmost $\delta-1$ columns contains at least $n$ dots, then there exists an optimal
$
[\mathcal{F},\sum_{i=0}^{n-\delta}\gamma_i,\delta]_q
$
FDRM code for every prime power $q$.
\end{lem}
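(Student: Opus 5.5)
The plan is to build the code from Lemma~\ref{lem2} with $\lambda_i=\gamma_i$ for $i\in[k]$, where $k=n-\delta+1$. Then I check two things: the supports of all codewords lie inside $\mathcal{F}$, and the dimension $\sum_{i=0}^{n-\delta}\gamma_i$ equals $v_{\min}(\mathcal{F},\delta)$.

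First, reduce to $m\ge n$ if needed. If the diagram has fewer than $n$ rows this cannot happen, since the rightmost columns have at least $n$ dots, so $m\ge n$ automatically. By Lemma~\ref{lemd}, fix a systematic $[m\times n,k,\delta]_q$ MRD code with generator matrix $G=(I_k\,|\,A)$. The sequence $\gamma_0\le\cdots\le\gamma_{k-1}\le m$ is nondecreasing because $\mathcal{F}$ is a Ferrers diagram, so Lemma~\ref{lem2} applies with $\lambda_i=\gamma_i$. It yields a linear code of dimension $\sum_{i<k}\gamma_i$ and minimum distance at least $\delta$.

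Next, check the support. For $c=uG$, column $j<k$ of $\psi_m(c)$ is $\psi_m(u_j)$, which is supported in its top $\gamma_j$ rows, exactly as $\mathcal{F}$ requires. For columns $j\ge k$, the entry $c_j=\sum_i u_iA_{ij}$ is an arbitrary element of $\mathbb{F}_{q^m}$, so its column may be full. This column lies among the rightmost $n-k=\delta-1$ columns, which are only guaranteed to have $\gamma_j\ge n$ dots, not $m$. This is the main obstacle. The fix I would try first is to choose the basis $\{\beta_i\}$ and exploit the freedom in $A$ so that each $c_j$, $j\ge k$, lies in the span of the first $n$ basis elements. Alternatively, reduce to the case $m=n$: the entries in rows $\ge n$ of the rightmost $\delta-1$ columns are then forced to zero by a shortening/transposition argument. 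Concretely, I expect to work with $m'=n$ for those columns, treating the first $k$ columns of $\mathcal{F}$ as having at most $\min(\gamma_i,n)$ dots. Rows below $n$ can then be handled by transposing $\mathcal{F}$, which turns them into columns, and applying the construction to $\mathcal{F}^t$. If that fails, I would fall back on the explicit $q$-cyclic/Gabidulin choice from \cite{ref7} with $m=n$ for the square part, and extend columns using the symmetry remarks.

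Finally, for optimality, compute $v_0=\sum_{i=0}^{n-\delta}\gamma_i$, which is the number of dots outside the rightmost $\delta-1$ columns. Combined with Lemma~\ref{lem1}, this gives $v_{\min}\le v_0$. For the reverse inequality, for each $i\ge1$, passing from $v_{i-1}$ to $v_i$ removes the first $i$ rows but returns one column. The returned column is column $n-\delta+i$, which contains at least $n\ge i$ dots, and at least $\gamma_{n-\delta+i}-(i-1)$ of them lie outside the first $i-1$ rows. The $i$-th row loses at most $n-\delta+i$ dots in the left columns that are not already counted. A careful count shows $v_i-v_{i-1}\ge\gamma_{n-\delta+i}-i-(n-\delta)\cdot[\text{row}]\ge0$, using $\gamma_{n-\delta+i}\ge n$. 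Hence $v_0=v_{\min}$, and the constructed code is optimal.
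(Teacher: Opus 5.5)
\textbf{Gap 1: the support of the rightmost $\delta-1$ columns.} The construction step fails at exactly the point you flagged. Over $\mathbb{F}_{q^m}$, the columns $j\ge k$ of $\psi_m(uG)$ can be full of height $m$, while $\mathcal{F}$ only guarantees $\gamma_j\ge n$ dots there. None of your proposed repairs closes this.
\begin{itemize}
\item Adjusting the basis or $A$ over $\mathbb{F}_{q^m}$ is not substantiated. It is unclear how to force every $u_iA_{ij}$ into a fixed $\gamma_j$-dimensional subspace while keeping the MRD property.
\item Your reduction to $m'=n$ works only when $\gamma_{n-\delta}\le n$. If $\gamma_{n-\delta}>n$, capping the first $k$ columns at $\min\{\gamma_i,n\}$ dots loses $\sum_{i<k}\max\{\gamma_i-n,0\}$ dimensions, so the code is no longer optimal. ``Handling the rows below $n$ by transposing'' is not an argument.
\end{itemize}
The missing idea is to choose the extension degree correctly: put $m'=\gamma_{n-\delta+1}$.
\begin{itemize}
\item By hypothesis $m'\ge n$, so Lemma~\ref{lemd} gives a systematic $[m'\times n,k,\delta]_q$ MRD code over $\mathbb{F}_{q^{m'}}$.
\item By monotonicity, $\gamma_0\le\cdots\le\gamma_{n-\delta}\le m'$, so Lemma~\ref{lem2} applies with $m'$ in place of $m$ and $\lambda_i=\gamma_i$.
\item Every column $j\ge k$ of a codeword then has at most $m'\le\gamma_j$ nonzero entries.
\item Padding with $m-m'$ zero rows gives a code supported on $\mathcal{F}$, of dimension $\sum_{i=0}^{n-\delta}\gamma_i$ and minimum distance at least $\delta$.
\end{itemize}
Equivalently, the top $m'$ rows of $\mathcal{F}$ contain every dot of the first $k$ columns and have their rightmost $\delta-1$ columns full; this is the Etzion--Silberstein situation. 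The paper only cites this lemma from \cite{ref8} without proof; this truncation is the standard argument.

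\textbf{Gap 2: the optimality count.} Your claim $v_i-v_{i-1}\ge 0$ is false. The full $4\times 4$ diagram with $\delta=3$ satisfies the hypothesis, yet $v_0=8$, $v_1=9$, $v_2=8$. In general the correct step bound is only $v_i-v_{i-1}\ge \delta-2i$, which is negative for $i>\delta/2$. Compare each $v_i$ with $v_0$ directly instead. Since the dots in each column are top-justified, $v_i=\sum_{j=0}^{n-\delta+i}\max\{\gamma_j-i,0\}$. Hence, using $\gamma_j\ge n>i$ for $j\ge k$ and $k=n-\delta+1$,
\[
v_i-v_0=-\sum_{j=0}^{k-1}\min\{\gamma_j,i\}+\sum_{j=k}^{k+i-1}(\gamma_j-i)\ \ge\ -ik+i(n-i)=i(\delta-1-i)\ \ge\ 0
\]
for every $0\le i\le \delta-1$. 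Therefore $v_{\min}(\mathcal{F},\delta)=v_0$, and the truncated construction above is optimal.
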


Lemma~\ref{lem3} provides one of the first general sufficient conditions guaranteeing the existence of optimal FDRM codes. Shortly afterwards, Etzion \emph{et al.} \cite{ref8} significantly relaxed the structural assumptions on the Ferrers diagram, leading to the following stronger existence theorem.

\begin{lem}\cite{ref8}\label{lem4}
Fix integers
$\delta$ and $n$
such that
$2\le\delta\le n-1$.
Let
$\mathcal{F}=[\gamma_0,\gamma_1,\ldots,\gamma_{n-1}]$
be an $m\times n$ Ferrers diagram satisfying

\begin{enumerate}
\item
$\gamma_{n-1}\ge n-1+\gamma_0$;

\item
$\gamma_{n-\delta+1}\ge n-1$.
\end{enumerate}Then, for every prime power $q$, there exists an optimal
$
[\mathcal{F},\sum_{i=0}^{n-\delta}\gamma_i,\delta]_q
$
FDRM code.
\end{lem}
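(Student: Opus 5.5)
The idea is to realize the code as a subcode of a systematic MRD code, following Lemma~\ref{lem2}, with the generator matrix taken from Lemma~\ref{lemd}.

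First reduce to the optimality target. Set $k=n-\delta+1$. Since $\gamma_{n-1}\ge n-1+\gamma_0\ge n-1$ and $\gamma_{n-\delta+1}\ge n-1$, the rightmost $\delta-1$ columns all contain at least $n-1$ dots. Counting dots outside the first $i$ rows and the last $\delta-1-i$ columns then gives $v_0=\sum_{i=0}^{n-\delta}\gamma_i$ and $v_i\ge v_0$ for every $i$. So by Lemma~\ref{lem1} it suffices to build a linear code of dimension $\sum_{i=0}^{k-1}\gamma_i$ with distance at least $\delta$ and support inside $\mathcal{F}$. We may take $m=\gamma_{n-1}$.

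Second, the construction. Choose $G=(I_k\,|\,A)$ as in Lemma~\ref{lemd}, with $1,a_1,\dots,a_k$ linearly independent over $\mathbb{F}_q$. Take $u_i$ supported on the first $\lambda_i=\gamma_i$ coordinates, as in Lemma~\ref{lem2}. The first $k$ columns of $\psi_m(uG)$ are $\psi_m(u_i)$, which fit the left part of $\mathcal{F}$. The distance bound and the dimension $\sum\gamma_i$ then follow from Lemma~\ref{lem2}.

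Third, the support condition on the remaining $\delta-1$ columns, which I expect to be the main obstacle. Column $k+j$ of $\psi_m(uG)$ is the coordinate vector of $\sum_i u_iA_{i,j}$. For $j\ge1$ this column only has $\gamma_{k+j}\ge n-1$ dots, not $m$. Hence we need every $u_iA_{i,j}$ to lie in the span of the first $\gamma_{k+j}$ basis elements $\beta_0,\dots,\beta_{\gamma_{k+j}-1}$. Since the $u_i$ range over the span of the first $\gamma_i$ basis vectors, the products must be controlled.

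The plan is to exploit the freedom in choosing the basis and the $a_i$:
\begin{itemize}
\item Choose a normal or polynomial basis $\beta_t=\alpha^t$, and take the entries of $A$ to be low powers of $\alpha$ (or the $a_i$ to be $\alpha^{\,\cdot}$ shifted).
\item Multiplication by $A_{i,j}$ then shifts support by a bounded amount. The gap condition $\gamma_{n-1}\ge n-1+\gamma_0$ gives exactly room for shifts up to $n-1$ in the last column.
\item The columns with at least $n-1$ dots absorb the smaller shifts.
\end{itemize}
I would first try to verify the construction directly via Lemma~\ref{lemb}, checking the maximal minors of $GB$.

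If this direct approach fails, I would fall back on a different argument, since a fully general $A$ from Lemma~\ref{lemd} does not obviously give bounded shifts. The fallback is to split $\mathcal{F}$ and use a combining argument, in the spirit of placing an MDS-on-diagonal code over the extra $\gamma_0$ rows of the last column. Concretely:
\begin{itemize}
\item Apply Lemma~\ref{lem3} to the subdiagram formed by the bottom $n-1$ rows of the right part.
\item Handle the top $\gamma_0$ rows separately, noting that they only interact with the first columns.
\item Add the two codes, and bound the rank of a sum of matrices with disjoint row blocks from below.
\end{itemize}
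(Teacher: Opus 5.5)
Your reduction to $v_{\min}(\mathcal F,\delta)=v_0=\sum_{i=0}^{k-1}\gamma_i$ is fine; hypothesis~1 is what handles $i=\delta-1$. The gap is at Step~3, and the mechanism you propose there cannot work.

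Take $\mathcal F=[n-1,\ldots,n-1,2n-2]$ with $3\le\delta\le n-1$; it satisfies both hypotheses. Here $u_0$ ranges over all of $W=\langle\beta_0,\ldots,\beta_{n-2}\rangle$, while column $k\le n-2$ has only $n-1$ dots. So you need $A_{0,0}W\subseteq W$, which means no shift at all is allowed.

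With a polynomial basis $\beta_t=\alpha^t$ of $\mathbb F_{q^m}$, $m\ge n$, this forces $A_{0,0}\in\mathbb F_q$. Indeed, if $A_{0,0}=p(\alpha)$ with $1\le e=\deg p\le n-2$, then $\alpha^{n-1-e}A_{0,0}\notin W$, because $x^{n-1-e}p(x)$ has degree exactly $n-1<m$. But then $(1,0,\ldots,0)G$ has its first $k+1$ entries in $\mathbb F_q$, so its rank is at most $1+(\delta-2)<\delta$, contradicting the MRD property.

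So the claim that columns with $n-1$ dots ``absorb the smaller shifts'' is false. For any basis, $A_{i,j}W\subseteq W$ makes $W$ a vector space over the subfield $\mathbb F_q(A_{i,j})$. What is needed is a subfield structure on the top $n-1$ rows, not bounded shifts.

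Your fallback does not close the gap either. The extra dots provided by $\gamma_{n-1}\ge n-1+\gamma_0$ are rows $n-1,\ldots,n-2+\gamma_0$ at the bottom of the last column. They are not ``the top $\gamma_0$ rows'', which meet every column. Moreover, a direct sum of a Lemma~\ref{lem3} code on columns $1,\ldots,n-1$ with a $\gamma_0$-dimensional code supported on column~$0$ and those extra rows fails. The latter code's nonzero codewords have rank at most $2<\delta$ for $\delta\ge 3$, and rank is not additive over disjoint row blocks.

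The missing idea is the mechanism behind the cited result, which the proof of Theorem~\ref{theo6} reproduces: a nested pair of MRD codes over $\mathbb F_{q^{n-1}}$.
\begin{itemize}
\item Take a systematic $[(n-1)\times(n-1),k,\delta-1]_q$ generator $G$ on columns $0,\ldots,n-2$.
\item Append a column $(0,\alpha_{1,n-1},\ldots,\alpha_{k-1,n-1})^t$ such that rows $1,\ldots,k-1$ on columns $1,\ldots,n-1$ generate a $[(n-1)\times(n-1),k-1,\delta]_q$ MRD code.
\item Copy the coordinates of $u_0$ into rows $n-1,\ldots,n-2+\gamma_0$ of the last column.
\end{itemize}
If $u_0\neq0$, the codeword has block form $\bigl(\begin{smallmatrix}T&t\\0&b\end{smallmatrix}\bigr)$ with $b\neq0$ and $\operatorname{rank}(T)\ge\delta-1$, so its rank is $\operatorname{rank}(T)+1\ge\delta$. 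If $u_0=0$, the second code gives rank at least $\delta$.

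Such a pair exists, for example from Gabidulin codes. Evaluate linearized polynomials $f$ of $q$-degree $<k$ at a basis $g_0,\ldots,g_{n-2}$ of $\mathbb F_{q^{n-1}}$. Those with $f(g_0)=0$ are exactly $h\circ L$ with $L=x^q-g_0^{q-1}x$ and $h$ of $q$-degree $<k-1$. Fix $w\notin\operatorname{Im}L$, set the appended coordinate to $h(w)$ on this subcode and to $0$ on systematic row~$0$.

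Finally, working over $\mathbb F_{q^{n-1}}$ requires $\gamma_i\le n-1$ for $i<k$. This holds when $\gamma_k=n-1$, and if $\gamma_k\ge n$ then Lemma~\ref{lem3} applies directly.
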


The search for increasingly general existence criteria has continued in subsequent years. A major step in this direction was achieved by Liu \cite{ref26}, who developed a unified framework encompassing all previously known constructions of optimal FDRM codes obtained from subcodes of Gabidulin codes, including those introduced in
\cite{ref7,ref8,ref24,ref25,ref38}. We recall this comprehensive result below, as it will serve as an important point of comparison for several of our new constructions.

\begin{lem}\cite{ref26}\label{lem5}
Let $l$ be a positive integer, and take an increasing sequence of integers
\[
1=t_0<t_1<t_2<\cdots<t_l
\]
such that
$
t_1\mid t_2\mid\cdots\mid t_l.
$
For $l>1$, let
$t_2=st_1$
for some integer $s$.

Let
$r\ge0$,
and let positive integers
$\delta$, $n$, and $k$
satisfy
\[
r+1\le\delta\le n-r,\qquad
t_{l-1}<n-r\le t_l,\qquad
k=n-\delta+1,\qquad
k\le t_1.
\]Let
$\mathcal{F}
=
[\gamma_0,\gamma_1,\ldots,\gamma_{n-1}]
$
be an $m\times n$ Ferrers diagram, and define
\[
\varepsilon
=
v_0-v_{\min}(\mathcal{F},\delta).
\]Suppose that there exists an integer $w$ such that $\mathcal{F}$ satisfies the following conditions.

\begin{enumerate}
\item
$\gamma_{k-1}\le wt_1$, where
$w=1$ if $l=1$, and
$1\le w\le s$ if $l\ge2$.

\item
\[
\sum_{\theta=k}^{t_1-1}
\max\{wt_1-\gamma_\theta,0\}
+
\sum_{\rho=2}^{l}
\sum_{\theta=t_{\rho-1}}^{\min\{t_\rho,n\}-1}
\max\{t_\rho-\gamma_\theta,0\}
\le
\varepsilon.
\]

\item
For every integer
$0\le h\le r-1$,
\[
\gamma_{n-r+h}
\ge
t_l+\sum_{j=0}^{h}\gamma_j.
\]
\end{enumerate}
Then, for every prime power $q$, there exists an optimal
$
[\mathcal{F},q^{\,v_0-\varepsilon},\delta]_q
$
FDRM code.
\end{lem}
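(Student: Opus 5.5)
This is Liu's unified construction (the cited Lemma~\ref{lem5}), so the plan follows the Gabidulin-subcode paradigm of Lemma~\ref{lem2}, with a nested tower of subfields.

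\textbf{Step 1: the ambient MRD code.} Work in $\mathbb{F}_{q^{M}}$ with $M$ a multiple of $t_l$ large enough to hold $m$ coordinates. Because $t_1\mid t_2\mid\cdots\mid t_l$, we have the chain $\mathbb{F}_{q^{t_1}}\subseteq\mathbb{F}_{q^{t_2}}\subseteq\cdots\subseteq\mathbb{F}_{q^{t_l}}$. Take a Gabidulin code of dimension $k=n-\delta+1$. Its evaluation points are built from bases adapted to this chain:
\begin{itemize}
\item the first $t_1$ coordinates form a basis of $\mathbb{F}_{q^{t_1}}$;
\item coordinates $t_{\rho-1},\dots,t_\rho-1$ extend this to a basis of $\mathbb{F}_{q^{t_\rho}}$;
\item the last $r$ coordinates are generic elements outside $\mathbb{F}_{q^{t_l}}$.
\end{itemize}
The code is MRD. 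Since $k\le t_1$, after row reduction on the first $k$ positions the generator matrix is systematic, $(I_k\mid A)$. The key structural fact follows from the subfield structure: the entries of $A$ in columns $\theta<t_\rho$ lie in $\mathbb{F}_{q^{t_\rho}}$, up to a fixed scaling.

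\textbf{Step 2: the message space.} Restrict the message $u$, as in Lemma~\ref{lem2}, so that coordinate $u_i$ has support in the top $\gamma_i$ rows; this handles the first $k$ columns. Now take a column $\theta\ge k$ with $\theta<t_\rho$. Its entry is $c_\theta=\sum_i u_iA_{i\theta}$, and $u_i$ times an element of $\mathbb{F}_{q^{t_\rho}}$ lands, via $\psi$, in the top $wt_1$ or $t_\rho$ rows. This uses condition~1 ($\gamma_{k-1}\le wt_1$), together with the option of taking $u_i\in\mathbb{F}_{q^{t_1}}$-multiples of a basis of length $w$.

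So column $\theta$ automatically has support in about $t_\rho$ rows. Where $\gamma_\theta<t_\rho$, we must force the entries in rows $\gamma_\theta,\dots,t_\rho-1$ to vanish. Each such entry is one $\mathbb{F}_q$-linear condition on $u$. The total number of conditions is exactly the left-hand side of condition~2, which is at most $\varepsilon$. Hence the subspace of admissible $u$ has dimension at least $v_0-\varepsilon$.

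\textbf{Step 3: the last $r$ columns.} These columns involve generic elements and can occupy more rows. Condition~3, $\gamma_{n-r+h}\ge t_l+\sum_{j\le h}\gamma_j$, ensures the support of $c_{n-r+h}$ stays inside $\gamma_{n-r+h}$ rows. To arrange this, choose the last evaluation points as $\beta_h=\alpha^{t_l+\sum_{j\le h}\gamma_j}$-type shifted elements, in the style of Liu et al.'s $n-r$ construction, so that the product degrees are bounded by that sum.

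\textbf{Step 4: conclusion.} The resulting code is a subcode of an MRD code, so its minimum distance is at least $\delta$. Its dimension is at least $v_0-\varepsilon=v_{\min}(\mathcal{F},\delta)$. By Lemma~\ref{lem1} it is therefore optimal.

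\textbf{Main obstacle.} I expect Step~3 to be the hardest: simultaneously keeping the MRD property, via the minor criterion of Lemma~\ref{lemb}, and controlling the row-support of the last $r$ columns. My first attempt would be a polynomial-basis representation of $\mathbb{F}_{q^{M}}$, so that support control becomes degree counting.
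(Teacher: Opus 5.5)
The paper only quotes this lemma from \cite{ref26} and gives no proof, so I am judging your plan on its own. Your Steps~1--2 are the right mechanism for the first $n-r$ columns. With tower-adapted evaluation points, the leading $k\times k$ Moore block $M_0$ has entries in $\mathbb{F}_{q^{t_1}}$; this is what $k\le t_1$ is really for, since any MRD code is systematic on any $k$ positions. Hence column $\theta\in[t_{\rho-1},t_\rho)$ of $A$ lies in $\mathbb{F}_{q^{t_\rho}}$, with no scaling needed. Take each $u_i$ in a $wt_1$-dimensional $\mathbb{F}_{q^{t_1}}$-subspace of $\mathbb{F}_{q^{t_2}}$ spanned by the first $wt_1$ basis vectors. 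Then killing the surplus rows costs at most the left-hand side of condition~2, hence at most $\varepsilon$ linear conditions. For $r=0$ this is essentially a complete proof.

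The gap is Step~3, and with it the distance claim of Step~4, whenever $r\ge 1$.
\begin{itemize}
\item If the last $r$ coordinates are genuine Gabidulin coordinates with points $g\notin\mathbb{F}_{q^{t_l}}$, their generator columns are $M_0^{-1}(g,g^{q},\dots,g^{q^{k-1}})^t$. So $c_{n-r+h}$ depends on all of $u_0,\dots,u_{k-1}$. It ranges over an $\mathbb{F}_q$-space of dimension up to $\min\{v_0-\varepsilon,M\}$, with no reason to lie inside $t_l+\sum_{j\le h}\gamma_j$ prescribed basis vectors.
\item Choosing $g=\alpha^{e}$ gives no degree control. The column involves $\alpha^{eq^{j}}$, which wraps around modulo the minimal polynomial. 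Moreover, a polynomial basis clashes with the tower-adapted basis that Step~2 needs.
\item Conversely, once you engineer the last columns to have the right supports, the code is no longer a subcode of a length-$n$ MRD code. Step~4's one-line distance argument then no longer applies.
\end{itemize}

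The missing idea is the structure that condition~3 is shaped for; the paper's Theorem~\ref{theo4} uses the same mechanism, with $n-r$ in place of $t_l$.
\begin{itemize}
\item Take the Gabidulin code over $K=\mathbb{F}_{q^{t_l}}$ of length only $n-r\le t_l$, so its distance is $\delta-r$.
\item Append $r$ columns with top parts in $K$ such that, for each $0\le i\le r$, the subcode with $u_0=\dots=u_{i-1}=0$, restricted to columns $i,\dots,n-r+i-1$, is MRD of distance $\delta-r+i$, as in Lemma~\ref{lemz}. Such columns exist. That subcode is $\{h\circ L_i\}$, where $L_i$ is the subspace polynomial of $\langle g_0,\dots,g_{i-1}\rangle$, so it is again a Gabidulin code with points $L_i(g_\ell)$. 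At each level you add one new point of $K$ outside the span of the other $n-r-1$ points, which is possible because $n-r\le t_l$. The first $n-r$ columns, and hence your Step~2, are untouched.
\item Below row $t_l$, place in column $n-r+h$ the stacked vectors $\bar\psi(u_h),\dots,\bar\psi(u_0)$. This is exactly where the height $t_l+\sum_{j\le h}\gamma_j$ in condition~3 comes from.
\end{itemize}
The distance then follows from a case split on $i^*=\min\{i:u_i\neq 0\}$. If $i^*<r$, the window of columns $i^*,\dots,n-r+i^*-1$ gives rank at least $\delta-r+i^*$. Everything below row $t_l$ in earlier columns vanishes, and the staircase of $\bar\psi(u_{i^*})$ in columns $n-r+i^*,\dots,n-1$ adds $r-i^*$. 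If $i^*\ge r$, the top part alone has rank at least $\delta$. The condition-2 constraints are imposed on this code exactly as in your Step~2.
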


The constructions recalled above are mainly derived from carefully designed subcodes of systematic MRD codes. An alternative, highly effective approach exploits explicit generator matrices of particular families of systematic MRD codes. This idea was developed in \cite{ref24}, leading to the following general existence theorem for optimal FDRM codes.

\begin{lem}\cite{ref24}\label{lem6}
Fix integers satisfying
$m\ge n\ge\delta\ge2$,
and let
$k=n-\delta+1$.
Let
$\mathcal{F}
=
[\gamma_0,\gamma_1,\ldots,\gamma_{n-1}]
$
be an $m\times n$ Ferrers diagram satisfying the following conditions.

\begin{enumerate}
\item
Either
$\gamma_k\ge n$
or
\[
\gamma_k-k
\ge
\gamma_i-i,
\qquad
0\le i\le k-1.
\]

\item
\[
\gamma_{k+1}\ge n.
\]
\end{enumerate}
Then, for every prime power $q$, there exists an optimal
$
[\mathcal{F},\sum_{i=0}^{k-1}\gamma_i,\delta]_q
$
FDRM code.
\end{lem}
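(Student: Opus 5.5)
The plan is to build the code explicitly as a subcode of a systematic MRD code. Only the achievability direction needs work. Every $[\mathcal{F},k',\delta]_q$ code satisfies $k'\le v_{\min}(\mathcal{F},\delta)\le v_0(\mathcal{F},\delta)=\sum_{i=0}^{k-1}\gamma_i$ by Lemma~\ref{lem1}. So it suffices to exhibit a linear FDRM code supported on $\mathcal{F}$ with minimum distance at least $\delta$ and dimension $\sum_{i=0}^{k-1}\gamma_i$; optimality and the equality $v_0=v_{\min}$ then follow.

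\textbf{Step 1 (choosing the field degree).} I would not work with $m$ rows directly. Instead I would pick $m'$ with $n\le m'\le m$ and $\gamma_j\ge m'$ for every $j\ge k+1$. The natural choice is
\[
m'=\gamma_k \ \text{ if }\gamma_k\ge n,\qquad m'=\gamma_{k+1}\ \text{ otherwise}.
\]
In either case $m'\ge n$, using condition~2 in the second case. Since $\gamma_{k+1}\le\cdots\le\gamma_{n-1}$, the columns $k+1,\ldots,n-1$ of $\mathcal{F}$ contain the full top $m'$ rows. The codes are built as $m'\times n$ matrices via $\psi_{m'}$ and then placed in the top $m'$ rows of an $m\times n$ matrix. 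This padding preserves rank and, since the dots of $\mathcal{F}$ are top-justified within each column, keeps the support inside $\mathcal{F}$.

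\textbf{Step 2 (the subcode).} Fix an $\mathbb{F}_q$-basis $\beta_0,\ldots,\beta_{m'-1}$ of $\mathbb{F}_{q^{m'}}$. Take $G=(I_k\mid A)$ from Lemma~\ref{lemd}, with first column of $A$ equal to $(a_1,\ldots,a_k)^t$; the $a_i$ are still to be chosen. Let $U$ be the set of $u=(u_0,\ldots,u_{k-1})$ with $u_i\in\langle\beta_0,\ldots,\beta_{\gamma_i-1}\rangle$, as in Lemma~\ref{lem2}. Then $\dim U=\sum_{i<k}\gamma_i$, because $\gamma_i\le\gamma_{k+1}$ and $\gamma_i\le\gamma_k$, so $\gamma_i\le m'$ in both cases. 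The code $\{\psi_{m'}(uG):u\in U\}$ has distance at least $\delta$ because it is a subcode of an MRD code. The support conditions are checked column by column:
\begin{itemize}
\item For $i<k$, column $i$ is $\psi_{m'}(u_i)$, which fits into $\gamma_i$ rows by the choice of $U$.
\item For $j\ge k+1$, column $j$ fits because $\gamma_j\ge m'$.
\item Column $k$ carries $c_k=\sum_{i=0}^{k-1}u_ia_{i+1}$, and this is the only real constraint.
\end{itemize}

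\textbf{Step 3 (column $k$, the main obstacle).} If $\gamma_k\ge n$, then $m'=\gamma_k$ and column $k$ is automatically inside $\mathcal{F}$. Any $a_i$ with $1,a_1,\ldots,a_k$ linearly independent will do, and these exist since $k+1\le n\le m'$.

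Otherwise $\gamma_k-k\ge\gamma_i-i$ for all $i<k$, and $m'=\gamma_{k+1}$. I would choose a primitive element (or any generator) $\alpha$ of $\mathbb{F}_{q^{m'}}$ and use the polynomial basis $\beta_j=\alpha^j$. I would set $a_{i+1}=\alpha^{k-i}$, so that $1,a_1,\ldots,a_k$ is $1,\alpha^k,\ldots,\alpha^1$, which is independent since $k<m'$. Then $u_ia_{i+1}$ is a combination of $\alpha^{k-i},\ldots,\alpha^{\gamma_i-1+k-i}$. Because $\gamma_i-1+k-i\le\gamma_k-1<m'$, no reduction modulo the minimal polynomial occurs. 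Hence $c_k\in\langle\beta_0,\ldots,\beta_{\gamma_k-1}\rangle$, i.e. column $k$ lies in the top $\gamma_k$ rows.

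The delicate point is that Lemma~\ref{lemd} must allow this prescribed first column $(\alpha^k,\ldots,\alpha)^t$, with the remaining columns of $A$ completing it to an MRD generator. That is exactly what the lemma guarantees. This completes the construction of a code of dimension $\sum_{i=0}^{k-1}\gamma_i$, and hence its optimality.
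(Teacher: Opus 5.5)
Your proposal is correct. It is the standard subcode-of-a-systematic-MRD-code argument: you use Lemma~\ref{lemd} with first column $(\alpha^k,\ldots,\alpha)^t$ in a polynomial basis, restrict the information vectors as in Lemma~\ref{lem2}, and bound the valid length of column $k$ by $\max_i(\gamma_i-i+k)\le\gamma_k$. This is exactly the mechanism the paper uses in the proofs of Theorems~\ref{theo6} and~\ref{theo4}; the paper itself only cites this lemma from \cite{ref24} and gives no proof of its own. Your choice of the auxiliary degree $m'$ ($\gamma_k$ or $\gamma_{k+1}$) is a good detail. It also handles the case $\gamma_{k-1}>n$, and the case $\delta=2$, where $\gamma_{k+1}$ does not exist but $\gamma_k=\gamma_{n-1}=m\ge n$.
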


Beyond direct constructions, another fruitful line of research combines several existing FDRM codes to produce new ones with larger parameters. Such combination techniques have proved particularly useful for generating infinite families of optimal FDRM codes from simpler building blocks. Etzion \emph{et al.} first established a general combination construction \cite{ref8}.

\begin{lem}\cite{ref8}\label{lem7}
For $i=1,2$, let
$\mathcal{F}_i$
be an
$m_i\times n_i$
Ferrers diagram supporting an
$[\mathcal{F}_i,k,\delta_i]_q$
code
$\mathcal{C}_i$. Let
$\mathcal{D}$
be an
$m_3\times n_3$
full Ferrers diagram satisfying
$m_3\ge m_1$
and
$n_3\ge n_2$. Define
\[
\mathcal{F}
=
\left(
\begin{array}{cc}
\mathcal{F}_1 & \mathcal{D}\\
& \mathcal{F}_2
\end{array}
\right),
\]
which is an
$m\times n$
Ferrers diagram with
$m=m_2+m_3$
and
$n=n_1+n_3$. Then there exists an
$
[\mathcal{F},k,\delta_1+\delta_2]_q
$
FDRM code.
\end{lem}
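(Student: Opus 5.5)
The natural construction places the two codes in diagonal blocks. Since $\mathcal{C}_1$ and $\mathcal{C}_2$ are both $k$-dimensional $\mathbb{F}_q$-linear, I fix bases $\{A_1,\ldots,A_k\}$ of $\mathcal{C}_1$ and $\{B_1,\ldots,B_k\}$ of $\mathcal{C}_2$. Via the linear isomorphism $\varphi:\mathcal{C}_1\to\mathcal{C}_2$, $A_j\mapsto B_j$, I define
\[
\mathcal{C}=\left\{\begin{pmatrix} A & 0\\ 0 & \varphi(A)\end{pmatrix} : A\in\mathcal{C}_1\right\}\subseteq\mathbb{F}_q^{m\times n}.
\]
Here $A$ occupies the top-left $m_1\times n_1$ block. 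The top-right $m_3\times n_3$ block, which is the region of $\mathcal{D}$, is set to zero. The bottom-right $m_2\times n_2$ block carries $\varphi(A)$, placed in the region of $\mathcal{F}_2$.

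I first check that $\mathcal{C}$ is an FDRM code for $\mathcal{F}$. This is where the assumption $m_3\ge m_1$, $n_3\ge n_2$ is used: it guarantees that $\mathcal{F}_1$ sits in the first $m_1\le m_3$ rows and first $n_1$ columns, and that $\mathcal{F}_2$ sits in the last $m_2$ rows and last $n_2\le n_3$ columns. So the supports of both blocks lie inside $\mathcal{F}$. Depending on the exact geometry of the displayed array, the block of $A$ may need to be justified so that its rows are the top $m_1$ rows and its columns the leftmost $n_1$ columns. Right-justification of $\mathcal{F}_1$ together with the full block $\mathcal{D}$ directly to its right ensures that $\mathcal{F}$ is a Ferrers diagram and contains both pieces. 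The map $A\mapsto \operatorname{diag}(A,\varphi(A))$ is $\mathbb{F}_q$-linear and injective, so $\dim\mathcal{C}=k$.

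For the distance, take a nonzero codeword. By linearity it has the form $\operatorname{diag}(A,\varphi(A))$ with $A\neq0$. Since $\varphi$ is an isomorphism, $\varphi(A)\neq0$ as well. The two blocks occupy disjoint row sets and disjoint column sets, so the rank of a block-diagonal matrix is
\[
\operatorname{rank}A+\operatorname{rank}\varphi(A)\ge\delta_1+\delta_2 .
\]
Hence the minimum distance is at least $\delta_1+\delta_2$, which proves the claim.

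The only real obstacle is the bookkeeping that the embedded blocks of $\mathcal{F}_1$ and $\mathcal{F}_2$ genuinely lie in disjoint rows and columns of $\mathcal{F}$ and inside its dots. I would settle this by writing row indices $[m_3]$ versus $m_3+[m_2]$ and column indices $[n_1]$ versus $n_1+[n_3]$ explicitly, noting $m_1\le m_3$ and that the $\mathcal{F}_2$ columns are the last $n_2$ of the $n_3$ columns. Disjointness then makes the block-diagonal rank identity valid.
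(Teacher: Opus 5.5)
Your proposal is correct and follows essentially the standard proof from \cite{ref8}; the paper itself only cites this lemma and gives no proof. The key steps are pairing bases of $\mathcal{C}_1$ and $\mathcal{C}_2$ through a linear isomorphism, placing $A$ in rows $[m_1]\subseteq[m_3]$ and columns $[n_1]$, and placing $\varphi(A)$ in rows $m_3+[m_2]$ and columns $n-n_2,\ldots,n-1$, which are disjoint from $[n_1]$ because $n_2\le n_3$; then every nonzero codeword has rank $\operatorname{rank}A+\operatorname{rank}\varphi(A)\ge\delta_1+\delta_2$. If you want the minimum distance to equal $\delta_1+\delta_2$ exactly, as the paper's definition literally reads, choose the bases so that $A_1$ and $B_1$ are minimum-rank codewords of $\mathcal{C}_1$ and $\mathcal{C}_2$.
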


The above construction initiated the systematic study of combination methods for Ferrers diagram rank-metric codes. Building upon this idea, Liu \cite{ref26} introduced the more general notion of \emph{proper combinations} of Ferrers diagrams, thereby considerably extending the scope and applicability of the original combination construction. We recall this framework in the next result.

The combination construction of Lemma~\ref{lem7} was subsequently generalized by Liu \cite{ref26} through the introduction of the notion of \emph{proper combinations} of Ferrers diagrams. This concept provides a flexible combinatorial framework for merging Ferrers diagrams while preserving their essential structural properties, thereby significantly enlarging the range of applicable combination techniques.

\begin{definition}\cite{ref26}\label{def2}
Let $\mathcal{F}_1$, $\mathcal{F}_2$, and $\mathcal{F}$ be Ferrers diagrams of sizes
$m_1\times n_1$,
$m_2\times n_2$,
and
$m\times n$,
respectively. For each
$l\in\{1,2\}$,
let
$
\phi_l:\mathcal{F}_l\longrightarrow\mathcal{F}
$
be an injective mapping. We say that $\mathcal{F}$ is a \emph{proper combination} of
$\mathcal{F}_1$
and
$\mathcal{F}_2$
with respect to
$(\phi_1,\phi_2)$
if the following conditions are satisfied.

\begin{enumerate}
\item
The images of the two embeddings are disjoint, that is,
$
\phi_1(\mathcal{F}_1)\cap\phi_2(\mathcal{F}_2)=\varnothing.
$

\item
The total number of dots is preserved:
$
|\mathcal{F}_1|+|\mathcal{F}_2|
=
|\mathcal{F}|.
$

\item
The embeddings preserve row and column incidences. More precisely, let
$(i_{l,1},j_{l,1})$
and
$(i_{l,2},j_{l,2})$
be two distinct dots of
$\mathcal{F}_l$,
whose images under
$\phi_l$
are
$(i'_{l,1},j'_{l,1})$
and
$(i'_{l,2},j'_{l,2})$,
respectively. If
$i_{l,1}=i_{l,2}$,
then
$i'_{l,1}=i'_{l,2}$;
if
$j_{l,1}=j_{l,2}$,
then
$j'_{l,1}=j'_{l,2}$.
\end{enumerate}

Condition (3) guarantees that the relative row and column structure of each Ferrers diagram is preserved under the corresponding embedding, thereby ensuring that the combinatorial geometry of the original diagrams is faithfully retained inside the combined diagram.
\end{definition}

The notion of proper combinations leads to the following powerful recursive construction, which substantially generalizes Lemma~\ref{lem7} and has become one of the principal tools for constructing large families of optimal FDRM codes.

\begin{lem}\cite{ref26}\label{lem8}
For
$i=1,2$,
let
$\mathcal{F}_i$
be an
$m_i\times n_i$
Ferrers diagram supporting an
$[\mathcal{F}_i,k_i,\delta_i]_q$
code
$\mathcal{C}_i$. Let
$\mathcal{D}$
be an
$m_3\times n_3$
Ferrers diagram supporting an
$[\mathcal{D},k_3,\delta]_q$
code, where
$m_3\ge m_1$
and
$n_3\ge n_2$. Set
$
m=m_2+m_3,\ n=n_1+n_3,
$
and define the block diagram
\[
\mathcal{F}
=
\left(
\begin{array}{cc}
\mathcal{F}_1 & \hat{\mathcal{D}}\\
& \mathcal{F}_2
\end{array}
\right).
\]Here,
$\hat{\mathcal{D}}$
is obtained from
$\mathcal{D}$
by adding the minimum number of dots to the bottom-left corner so that the resulting block diagram satisfies the defining properties of a Ferrers diagram.

Then there exists an
\[
[\mathcal{F},
\min\{k_1,k_2\}+k_3,
\min\{\delta_1+\delta_2,\delta\}]_q
\]
code
$\mathcal{C}$
such that, for every codeword
$C\in\mathcal{C}$,
the restriction
$C|_{\mathcal{F}_1}$
is the zero matrix if and only if
$C|_{\mathcal{F}_2}$
is the zero matrix. Here,
$C|_{\mathcal{F}_i}$
denotes the restriction of the codeword
$C$
to the positions corresponding to the Ferrers diagram
$\mathcal{F}_i$,
for
$i=1,2$.
\end{lem}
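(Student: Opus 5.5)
The natural approach is a direct construction by placing codewords blockwise and pairing the two diagonal codes. Fix linear codes $\mathcal{C}_1$, $\mathcal{C}_2$, $\mathcal{C}_3$ on $\mathcal{F}_1$, $\mathcal{F}_2$, $\mathcal{D}$ with dimensions $k_1,k_2,k_3$ and distances $\delta_1,\delta_2,\delta$. Set $k_0=\min\{k_1,k_2\}$, and choose $k_0$-dimensional subcodes $\mathcal{C}_1'\subseteq\mathcal{C}_1$ and $\mathcal{C}_2'\subseteq\mathcal{C}_2$ together with an $\mathbb{F}_q$-linear isomorphism $\tau:\mathcal{C}_1'\to\mathcal{C}_2'$. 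For $A\in\mathcal{C}_1'$ and $B\in\mathcal{C}_3$, define the $m\times n$ matrix
\[
M(A,B)=\left(\begin{array}{cc} A & B'\\ 0 & \tau(A)\end{array}\right),
\]
where $A$ occupies the top-left $m_1\times n_1$ block and $B'$ is $B$ placed in the top-right $m_3\times n_3$ block. The block $B'$ is aligned so that its support lies in $\mathcal{D}\subseteq\hat{\mathcal{D}}$; since $\mathcal{D}$ is right-justified and top-aligned, $B$ should sit in the top-right corner. The block $\tau(A)$ occupies the bottom-right $m_2\times n_2$ block, aligned to the bottom-right. We then take $\mathcal{C}=\{M(A,B)\}$.

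\textbf{Step 1: support and dimension.} We check that every $M(A,B)$ is supported in $\mathcal{F}$. This holds because the supports of $A$, $B$, and $\tau(A)$ lie in $\mathcal{F}_1$, $\mathcal{D}$, and $\mathcal{F}_2$ respectively, and the definition of $\hat{\mathcal{D}}$ only adds dots. The map $(A,B)\mapsto M(A,B)$ is linear and injective because the blocks are disjoint, so $\dim\mathcal{C}=k_0+k_3$. The property that $C|_{\mathcal{F}_1}=0$ if and only if $C|_{\mathcal{F}_2}=0$ holds because $\tau$ is an isomorphism.

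\textbf{Step 2: distance.} By linearity it suffices to bound the rank of a nonzero $M(A,B)$. There are two cases.
\begin{itemize}
\item If $A=0$, then $B\neq0$ and $\operatorname{rank}M=\operatorname{rank}B\ge\delta$.
\item If $A\neq 0$, then $\tau(A)\neq0$. Deleting the middle columns or rows does not increase rank, so rank is bounded below by that of the block-triangular submatrix formed by the blocks of $A$ and $\tau(A)$ (with the relevant part of $B'$). Since $A$ and $\tau(A)$ lie in disjoint row sets and disjoint column sets, we get $\operatorname{rank}M\ge\operatorname{rank}A+\operatorname{rank}\tau(A)\ge\delta_1+\delta_2$.
\end{itemize}
In both cases the minimum distance is at least $\min\{\delta_1+\delta_2,\delta\}$.

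\textbf{Main obstacle.} The hard part is the rank inequality in the second case, because of the geometry: when $m_3>m_1$ or $n_3>n_2$, the rows of $A$ and the columns of $\tau(A)$ need not be the only nonzero parts of their lines. I would handle this by restricting $M$ to the rows $0,\dots,m_1-1$ and $m_3,\dots,m-1$ and to the columns $0,\dots,n_1-1$ and $n-n_2,\dots,n-1$. The resulting submatrix has the form $\left(\begin{smallmatrix}A&X\\0&\tau(A)\end{smallmatrix}\right)$, and the standard fact $\operatorname{rank}\left(\begin{smallmatrix}A&X\\0&D\end{smallmatrix}\right)\ge\operatorname{rank}A+\operatorname{rank}D$ finishes the argument. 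The condition on the lower-left block follows since $\tau(A)$ sits in rows $\ge m_3\ge m_1$, so the block below $A$ in the chosen rows is indeed zero.
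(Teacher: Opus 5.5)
Your proof is correct. The paper does not prove this lemma; it quotes it from \cite{ref26}. Your construction is the standard argument behind this combining construction and extends the mechanism of Lemma~\ref{lem7}. It consists of three ingredients: you pair $\min\{k_1,k_2\}$-dimensional subcodes of $\mathcal{C}_1$ and $\mathcal{C}_2$ through a linear isomorphism $\tau$; you place a codeword of the $\mathcal{D}$-code on $\mathcal{D}\subseteq\hat{\mathcal{D}}$; and you bound the rank using the submatrix on rows $\{0,\ldots,m_1-1\}\cup\{m_3,\ldots,m-1\}$ and columns $\{0,\ldots,n_1-1\}\cup\{n-n_2,\ldots,n-1\}$.

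One point of wording should be tightened. In Step~2, the fact that $A$ and $\tau(A)$ occupy disjoint row and column sets does not by itself give $\operatorname{rank}M\ge\operatorname{rank}A+\operatorname{rank}\tau(A)$. What makes the inequality hold is that the chosen submatrix has a zero lower-left block. Your final paragraph supplies this, but the real reason is simply that no dots of $\mathcal{F}$ lie in rows $\ge m_3$ and columns $<n_1$, so $M$ vanishes there by construction. The role of $m_3\ge m_1$ and $n_3\ge n_2$ is only to make the two chosen row sets, and the two chosen column sets, disjoint.
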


Besides being of independent interest, FDRM codes are one of the principal ingredients in constructing constant-dimension codes (CDCs). In particular, many of the best known lower bounds on the size of CDCs are obtained by combining lifted MRD codes with suitable families of optimal FDRM codes. We therefore conclude this preliminary section by recalling several fundamental results on CDCs from
\cite{ref3,ref7,ref11,ref34}, which will be used throughout the remainder of the paper.

We now recall the standard correspondence between Ferrers diagram rank-metric codes and constant-dimension codes introduced by Etzion and Silberstein \cite{ref7}. This connection relies on the reduced row echelon representation of subspaces and is a fundamental tool for constructing large constant-dimension codes from FDRM codes.

\begin{definition}\cite{ref7}\label{def3}
A matrix is said to be in \emph{reduced row echelon form} (RREF) if it satisfies the following conditions.

\begin{itemize}
\item The pivot of each row is strictly to the right of the pivot in the preceding row.
\item Every pivot is equal to $1$.
\item Each pivot is the only nonzero entry in its column.
\end{itemize}
\end{definition}

Let $\mathcal{U}$ be a $k$-dimensional subspace of $\mathbb{F}_q^n$. Such a subspace can be represented by a $k\times n$ generator matrix whose rows form a basis of $\mathcal{U}$. A classical result from linear algebra ensures that every subspace admits a unique generator matrix in reduced row echelon form, which we denote by $E(\mathcal{U}).$

Associated with $\mathcal{U}$ is its \emph{identifying vector}
$v(\mathcal{U})$,
namely the binary vector of length $n$ and Hamming weight $k$ whose ones occur precisely in the pivot columns of
$E(\mathcal{U})$.

Conversely, let
$v$
be a binary vector of length $n$ and Hamming weight $k$. The \emph{echelon Ferrers form} associated with $v$, denoted by
$EF(v)$,
is the matrix in reduced row echelon form whose pivot columns are specified by the nonzero entries of $v$. Every remaining free position is represented by a dot~$\bullet$, while all prescribed entries remain fixed to $0$ or $1$.

The corresponding Ferrers diagram,
denoted by
$\mathcal{F}_v$,
is obtained from
$EF(v)$
by deleting all pivot columns and then right-justifying the dots in each row.

Let
$\mathcal{C}_v$
be an
$[\mathcal{F}_v,\delta]_q$
FDRM code.
Each codeword
$M\in\mathcal{C}_v$
can be inserted into the dot positions of
$EF(v)$,
thereby producing a matrix in reduced row echelon form. The row space of this matrix defines a $k$-dimensional subspace of
$\mathbb{F}_q^n$.
The collection of all such subspaces is called the \emph{lifting} of
$\mathcal{C}_v$
and is denoted by $\mathcal{L}(\mathcal{C}_v).$ A fundamental result of \cite{ref7} shows that
$\mathcal{L}(\mathcal{C}_v)$
forms an
$(n,2\delta,k)_q$
constant-dimension subspace code.

\begin{example}\label{example1}

Let
$\mathcal{U}\in\mathcal{G}_2(7,3)$
be a $3$-dimensional subspace of
$\mathbb{F}_2^7$.
Its unique generator matrix in reduced row echelon form is
\[
E(\mathcal{U})
=
\left[
\begin{array}{lllllll}
\mathbf{1} & 1 & 0 & 0 & 0 & 0 & 1\\
0 & 0 & \mathbf{1} & 0 & 1 & 0 & 1\\
0 & 0 & 0 & \mathbf{1} & 0 & 1 & 0
\end{array}
\right].
\]The pivot columns of
$E(\mathcal{U})$
determine the identifying vector
\[
v(\mathcal{U})=(1011000).
\]The corresponding echelon Ferrers form
$EF(v)$
is therefore

\[
EF(v)
=
\left[
\begin{array}{lllllll}
\mathbf{1} & \bullet & 0 & 0 & \bullet & \bullet & \bullet\\
0 & 0 & \mathbf{1} & 0 & \bullet & \bullet & \bullet\\
0 & 0 & 0 & \mathbf{1} & \bullet & \bullet & \bullet
\end{array}
\right].
\]Finally, deleting the pivot columns and right-justifying the remaining dots in each row yields the associated Ferrers diagram
$\mathcal{F}_v$:

\[
\begin{array}{llll}
\bullet & \bullet & \bullet & \bullet\\
& \bullet & \bullet & \bullet\\
& \bullet & \bullet & \bullet
\end{array}.
\]This example illustrates the complete correspondence
\[
\mathcal{U}
\longrightarrow
E(\mathcal{U})
\longrightarrow
v(\mathcal{U})
\longrightarrow
EF(v)
\longrightarrow
\mathcal{F}_v,
\]
which forms the basis of the lifting construction described above.
\end{example}

The lifting construction naturally leads to the celebrated multilevel construction of Etzion and Silberstein \cite{ref7}, one of the most powerful general methods for constructing constant-dimension codes. Its effectiveness relies on the following fundamental relationship between the subspace distance, the Hamming distance between identifying vectors, and the rank distance of the associated Ferrers diagram rank-metric codes.

\begin{lem}\cite{ref7}\label{lem9}
Let
$\mathcal{U},\mathcal{V}\in\mathcal{G}_q(n,k)$,
and let
$U,V\in\mathbb{F}_q^{k\times n}$
be their unique generator matrices in reduced row echelon form, so that
\[
\mathcal{U}=\operatorname{rowspace}(U),
\qquad
\mathcal{V}=\operatorname{rowspace}(V).
\]Then the subspace distance satisfies
\[
d_S(\mathcal{U},\mathcal{V})
\ge
d_H(v(\mathcal{U}),v(\mathcal{V})).
\]Furthermore, if the identifying vectors coincide, that is,
$
v(\mathcal{U})=v(\mathcal{V}),
$
then the subspace distance is completely determined by the rank distance of the associated Ferrers diagram matrices:
\[
d_S(\mathcal{U},\mathcal{V})
=
2d_R(C_U,C_V),
\]
where
$C_U$
and
$C_V$
are obtained from
$U$
and
$V$,
respectively, by deleting all pivot columns.
\end{lem}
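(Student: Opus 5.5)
We prove the two assertions of Lemma~\ref{lem9} separately.

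\emph{Distance bound.} We use $\dim(\mathcal{U}+\mathcal{V})=\operatorname{rank}\binom{U}{V}$ and $d_S(\mathcal{U},\mathcal{V})=2\dim(\mathcal{U}+\mathcal{V})-2k$. Write $P_U$ and $P_V$ for the pivot sets (the supports of $v(\mathcal{U})$ and $v(\mathcal{V})$). Since $|P_U|=|P_V|=k$, we have $d_H(v(\mathcal{U}),v(\mathcal{V}))=2|P_U\setminus P_V|$. So it suffices to show $\operatorname{rank}\binom{U}{V}\ge k+|P_U\setminus P_V|$.

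Take the rows of $U$ and append the rows of $V$ whose pivots lie in $P_V\setminus P_U$. This gives $k+|P_V\setminus P_U|=k+|P_U\setminus P_V|$ vectors. Sort them by pivot position. All of their leading positions are distinct: the pivots of $U$ are distinct among themselves, the appended pivots of $V$ are distinct among themselves, and the appended ones avoid $P_U$. Vectors in echelon form with distinct leading positions are linearly independent, which gives the rank bound and the first inequality. (This step only uses the leading-1 property, not full reducedness.)

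\emph{Equality case.} Suppose $v(\mathcal{U})=v(\mathcal{V})$, and let $P$ be the common pivot set. Apply the same column permutation to both matrices, moving the pivot columns to the front. Because $U$ and $V$ are in RREF, their restrictions to the pivot columns are both $I_k$, so after permutation $U\sim(I_k\mid C_U)$ and $V\sim(I_k\mid C_V)$. A common column permutation does not change ranks. Then
\[
\operatorname{rank}\begin{pmatrix} I_k & C_U\\ I_k & C_V\end{pmatrix}
=\operatorname{rank}\begin{pmatrix} I_k & C_U\\ 0 & C_V-C_U\end{pmatrix}
=k+\operatorname{rank}(C_V-C_U),
\]
where the first equality is subtracting the top block from the bottom, and the second uses that $I_k$ can clear the upper right block by column operations. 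Substituting into $d_S=2\dim(\mathcal{U}+\mathcal{V})-2k$ gives $d_S(\mathcal{U},\mathcal{V})=2\operatorname{rank}(C_U-C_V)=2d_R(C_U,C_V)$.

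One bookkeeping point needs checking. The $C_U$ here are the $k\times(n-k)$ matrices obtained by deleting pivot columns, not the right-justified Ferrers arrangement. But right-justification is the same column-to-position map for every matrix with the same identifying vector, so it does not change rank differences. Apart from this, the main point is the independence argument in the first part, which is where the echelon structure is used.
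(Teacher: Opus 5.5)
Your proof is correct and follows the standard argument (the paper itself gives no proof, quoting the lemma from \cite{ref7}): bound $\operatorname{rank}\begin{pmatrix}U\\ V\end{pmatrix}$ from below by the number of distinct pivot positions, and in the equal-pivot case reduce to $\begin{pmatrix}I_k & C_U\\ 0 & C_V-C_U\end{pmatrix}$. Your closing bookkeeping remark reaches the right conclusion for the wrong reason: a fixed relocation of entries that does not act on whole columns can change ranks, so being the same map for all matrices with a given $v$ is not enough. The conclusion holds because, after deleting the pivot columns of an RREF matrix, the free entries of row $i$ are exactly the non-pivot columns to the right of the $i$-th pivot; these already form a right-aligned suffix, so right-justification moves nothing. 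In any case, the lemma as stated only concerns the pivot-deleted matrices.
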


The correspondence between FDRM codes and constant-dimension subspace codes culminates in the celebrated multilevel construction of Etzion and Silberstein \cite{ref7}. By combining a constant-weight code with suitable families of FDRM codes, this construction provides a general and highly effective framework for producing large CDCs.

\begin{lem}\cite{ref7}\label{lem10}
\textbf{(Multilevel construction)}
Let $\mathcal{A}$ be a binary constant-weight code of length $n$, Hamming weight $k$, and minimum Hamming distance $2\delta$. For each identifying vector
$v\in\mathcal{A}$,
let
$EF(v)$
be its echelon Ferrers form, and let
$\mathcal{F}_v$
denote the corresponding Ferrers diagram determined by the dot positions of
$EF(v)$.
Assume that, for every
$v\in\mathcal{A}$,
there exists an
$[\mathcal{F}_v,k_v,\delta]_q$
FDRM code
$\mathcal{D}_v$,
and let
$\mathcal{L}(\mathcal{D}_v)$
be its lifted code. Then
$
\bigcup_{v\in\mathcal{A}}
\mathcal{L}(\mathcal{D}_v)
$
is an
$(n,2\delta,k)_q$
constant-dimension subspace code.
\end{lem}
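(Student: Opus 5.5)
The statement to prove is Lemma~\ref{lem10}, the multilevel construction. Fix the code $\mathcal{C}=\bigcup_{v\in\mathcal{A}}\mathcal{L}(\mathcal{D}_v)$. I would first check that every element is a $k$-dimensional subspace of $\mathbb{F}_q^n$. This holds because each codeword of $\mathcal{D}_v$, once inserted into the dot positions of $EF(v)$, gives a $k\times n$ matrix in reduced row echelon form with $k$ pivots. Its row space therefore has dimension $k$, and its identifying vector is exactly $v$, since the pivot positions are fixed by $EF(v)$ and filling the dots does not move them. In particular the identifying vector of a lifted subspace determines the $v\in\mathcal{A}$ it came from. Hence the sets $\mathcal{L}(\mathcal{D}_v)$ for distinct $v$ are pairwise disjoint, and no distinct pair of codewords is accidentally identified.

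Next, take two distinct subspaces $\mathcal{U},\mathcal{V}\in\mathcal{C}$. I would split into two cases according to their identifying vectors, using Lemma~\ref{lem9} in each.

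\emph{Case 1:} $v(\mathcal{U})\neq v(\mathcal{V})$. Both vectors are distinct codewords of $\mathcal{A}$, so $d_H(v(\mathcal{U}),v(\mathcal{V}))\ge 2\delta$. The first part of Lemma~\ref{lem9} then gives $d_S(\mathcal{U},\mathcal{V})\ge 2\delta$.

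\emph{Case 2:} $v(\mathcal{U})=v(\mathcal{V})=v$. Both subspaces lie in $\mathcal{L}(\mathcal{D}_v)$, say as liftings of distinct codewords $M_U,M_V\in\mathcal{D}_v$. They are distinct because RREF is unique, so distinct matrices give distinct subspaces. Deleting the pivot columns of the RREF matrices $U,V$ recovers $M_U,M_V$ up to the common right-justification permutation of entries within rows. This permutation is the same for both and sends $M_U-M_V$ to a matrix of the same rank. The second part of Lemma~\ref{lem9} then yields $d_S=2d_R(M_U,M_V)\ge 2\delta$.

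The only point needing care is in Case 2. The matrix $C_U$ obtained by deleting pivot columns and the Ferrers-diagram codeword differ by the right-justification step, and I must argue that this step preserves rank differences. The cleanest way is to note that the dots of $EF(v)$ in the non-pivot columns already form a Ferrers shape after column deletion. In a row echelon form, deleting the pivot columns leaves the row-$i$ free entries in exactly the non-pivot columns to the right of pivot $i$, which is a right-justified set. So the right-justification is in fact the identity on supports and no issue arises. Combining the two cases gives minimum distance at least $2\delta$, so $\mathcal{C}$ is an $(n,2\delta,k)_q$ CDC.
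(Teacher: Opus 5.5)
Your proof is correct and follows the standard Etzion--Silberstein argument~\cite{ref7}, which the paper cites rather than reproves: split according to whether the identifying vectors coincide, and apply the two parts of Lemma~\ref{lem9}. Your remark that right-justification is the identity correctly settles the only technical point, since row $i$'s dots already occupy a suffix of the non-pivot columns. One small wording fix: in Case~2, $M_U\neq M_V$ follows simply because lifting is a function, whereas uniqueness of RREF gives the converse (injectivity), which is needed only for counting.
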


The multilevel construction has become a principal technique for constructing large constant-dimension subspace codes. Subsequent research has focused on further increasing the cardinality of the resulting CDCs by exploiting additional structural properties of Ferrers diagrams. Silberstein and Trautmann \ cite {ref34} introduced one of the most influential developments in this direction through the notion of \emph{pending blocks}, recalled below.

\begin{definition}\cite{ref34}\label{def1}
Let $\mathcal{F}$ be a Ferrers diagram whose rightmost column contains $m$ dots and whose top row contains $\ell$ dots. Let
$\ell_1<\ell$.
The first $\ell_1$ leftmost columns of $\mathcal{F}$ are called a \emph{pending block} (of length $\ell_1$) if deleting these $\ell_1$ columns does not change the upper bound on the cardinality of the corresponding FDRM code $\mathcal{C}_{\mathcal{F}}$ given by Lemma~\ref{lem1}.
\end{definition}

Pending blocks provide additional flexibility in the multilevel construction by allowing different lifted codewords to share the same Ferrers diagram structure while maintaining a prescribed subspace distance. The following result quantifies this improvement and has become an important tool for constructing large CDCs.

\begin{lem}\cite{ref34}\label{lem13}
Let
$\mathcal{U},\mathcal{V}\in\mathcal{G}_q(k,n)$.
Assume that the corresponding Ferrers diagrams
$\mathcal{F}_{v(\mathcal{U})}$
and
$\mathcal{F}_{v(\mathcal{V})}$
each contain an
$m_1\times\ell_1$
pending block occupying the same leftmost column positions. Suppose that
\[
d_H(v(\mathcal{U}),v(\mathcal{V}))=2d.
\]Let
$B_{\mathcal{U}}$
and
$B_{\mathcal{V}}$
denote the corresponding pending-block submatrices of
$\mathcal{F}_{v(\mathcal{U})}$
and
$\mathcal{F}_{v(\mathcal{V})}$,
respectively.

Then
\[
d_S(\mathcal{U},\mathcal{V})
\ge
2d
+
2\operatorname{rank}
(B_{\mathcal{U}}-B_{\mathcal{V}}).
\]
\end{lem}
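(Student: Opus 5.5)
The plan is to compute $d_S(\mathcal{U},\mathcal{V})$ through the rank of the stacked matrix. Since $\dim\mathcal{U}=\dim\mathcal{V}=k$, we have
\[
d_S(\mathcal{U},\mathcal{V})=2\dim(\mathcal{U}+\mathcal{V})-2k=2\operatorname{rank}\begin{pmatrix}U\\ V\end{pmatrix}-2k .
\]
So it suffices to show that $\operatorname{rank}\binom{U}{V}\ge k+d+\operatorname{rank}(B_{\mathcal{U}}-B_{\mathcal{V}})$.

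First I fix the structure. Let $\Lambda$ be the set of leftmost columns of the ambient $k\times n$ matrices, up to and including the columns of the pending block. The hypothesis that the pending blocks occupy the same leftmost column positions is read as follows: inside $\Lambda$, $U$ and $V$ have the same $m_1$ pivots, located in the top $m_1$ rows. The remaining columns of $\Lambda$ are exactly the pending-block columns, and the entries there are $B_{\mathcal{U}}$ and $B_{\mathcal{V}}$. The bottom $k-m_1$ rows of both matrices vanish on $\Lambda$, because their pivots lie to the right of $\Lambda$. Write $U=\binom{U_{\rm top}}{U_{\rm low}}$ and $V=\binom{V_{\rm top}}{V_{\rm low}}$. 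Since the top pivots coincide, all $2d$ positions where $v(\mathcal{U})$ and $v(\mathcal{V})$ differ come from the pivots of $U_{\rm low}$ and $V_{\rm low}$.

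Next I replace $V_{\rm top}$ by $V_{\rm top}-U_{\rm top}$, which does not change the row space of the stacked matrix. I then order the rows as $U_{\rm top},\,V_{\rm top}-U_{\rm top},\,U_{\rm low},\,V_{\rm low}$. With respect to the column split $(\Lambda,\overline\Lambda)$ the result is block upper triangular:
\[
\begin{pmatrix} X & * \\ 0 & Y\end{pmatrix},\qquad X=\begin{pmatrix}U_{\rm top}|_\Lambda\\ (V_{\rm top}-U_{\rm top})|_\Lambda\end{pmatrix},\quad Y=\begin{pmatrix}U_{\rm low}\\ V_{\rm low}\end{pmatrix}\Big|_{\overline\Lambda}.
\]
I then apply the inequality $\operatorname{rank}\begin{pmatrix} X&*\\0&Y\end{pmatrix}\ge\operatorname{rank}X+\operatorname{rank}Y$.

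For $X$, the rows of $U_{\rm top}|_\Lambda$ contain an identity on the pivot columns. On those pivot columns, $(V_{\rm top}-U_{\rm top})|_\Lambda$ is zero, and on the pending columns it equals $B_{\mathcal{V}}-B_{\mathcal{U}}$. Hence $\operatorname{rank}X=m_1+\operatorname{rank}(B_{\mathcal{U}}-B_{\mathcal{V}})$.

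For $Y$, the rows of $U_{\rm low}$ and of $V_{\rm low}$ are RREF generator matrices of two $(k-m_1)$-dimensional subspaces. Their identifying vectors, restricted to $\overline\Lambda$, are at Hamming distance $2d$. Lemma~\ref{lem9} then gives subspace distance at least $2d$ between these two subspaces. This translates to $\operatorname{rank}Y\ge (k-m_1)+d$.

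Adding the two bounds gives $\operatorname{rank}\binom{U}{V}\ge k+d+\operatorname{rank}(B_{\mathcal{U}}-B_{\mathcal{V}})$, which is the claim.

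The step I expect to need the most care is the first structural one. I must justify that $U$ and $V$ genuinely share their pivots inside $\Lambda$, and that the low rows vanish there. This is where the definition of a pending block, as the leftmost columns of $\mathcal{F}_v$ in the same positions for both diagrams, must be unpacked via the echelon Ferrers forms $EF(v(\mathcal{U}))$ and $EF(v(\mathcal{V}))$. Once that is settled, the block-triangular rank argument and the appeal to Lemma~\ref{lem9} are routine.
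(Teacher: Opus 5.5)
The paper gives no proof of this lemma; it is quoted from \cite{ref34}. Your argument is correct and complete, and it is essentially the standard one: stack $E(\mathcal{U})$ over $E(\mathcal{V})$, subtract the common top rows, and use $\operatorname{rank}\begin{pmatrix}X&*\\0&Y\end{pmatrix}\ge\operatorname{rank}X+\operatorname{rank}Y$. You then read $m_1+\operatorname{rank}(B_{\mathcal{U}}-B_{\mathcal{V}})$ off the identity block and apply Lemma~\ref{lem9} to the tails.

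The step you flag is routine. The last pending column has height $m_1$, which is the number of pivots to its left. So if the pending columns occupy the same ambient columns in both echelon forms, then $v(\mathcal{U})$ and $v(\mathcal{V})$ agree on $\Lambda$. Any leading run of zero columns has the forced length $|\Lambda|-m_1-\ell_1$ and is zero in both matrices. The RREF condition then kills the lower $k-m_1$ rows on $\Lambda$.

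Your reading of the hypothesis is necessary, not just convenient. If the two pending blocks only had the same size and sat in the leftmost columns of their own diagrams, the inequality can fail. Take $n=9$, $k=3$, $\delta=3$, $\mathcal{U}=\langle e_0,e_2,e_4\rangle$ and $\mathcal{V}=\langle e_0+e_2,e_1+e_3,e_4\rangle$. Then $v(\mathcal{U})=101010000$ and $v(\mathcal{V})=110010000$, with $\mathcal{F}_{v(\mathcal{U})}=[1,2,3,3,3,3]$ and $\mathcal{F}_{v(\mathcal{V})}=[2,2,3,3,3,3]$. In each diagram the first two columns form a $2\times 2$ pending block, since the bound of Lemma~\ref{lem1} stays $4$ after deleting them. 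However, these blocks lie in ambient columns $\{1,3\}$ and $\{2,3\}$ respectively. Here $B_{\mathcal{U}}=0$, $B_{\mathcal{V}}=I_2$ and $d_H=2$, so the claimed bound is $6$, while $d_S(\mathcal{U},\mathcal{V})=2$.

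Note also that your proof never uses the defining property of a pending block. The inequality holds for any common leading segment on which the identifying vectors agree. The pending-block property only serves to avoid losing cardinality in the multilevel construction.
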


The results presented in this section constitute the principal tools used throughout the remainder of the paper. In particular, our new constructions combine recent advances on optimal FDRM codes with the multilevel construction and its refinements to derive new families of constant-dimension subspace codes with improved parameters.

\section{Construction of New Optimal FDRM Codes from MRD Codes}

The primary objective of this section is to develop new constructions of optimal Ferrers diagram rank-metric (FDRM) codes from subcodes of MRD codes. Our approach combines new algebraic properties of generator matrices with the flexibility of systematic MRD codes, yielding three general construction methods.

The first construction introduces a new family of generator matrices for MRD codes, obtained through an explicit algebraic characterization. The remaining two constructions are derived from a family of systematic MRD codes and exploit their internal structure to generate new optimal FDRM codes. These constructions considerably enlarge the classes of Ferrers diagrams known to admit optimal FDRM codes.

The starting point of our approach is the characterization of linear MRD codes presented in Lemma~\ref{lemb}. Using this characterization, we first derive a new explicit construction of generator matrices for systematic MRD codes, which serves as one of the principal building blocks for the results established throughout this section.

Our first contribution is an explicit construction of a new family of generator matrices for systematic MRD codes. The construction is based on a carefully chosen algebraic structure that combines a polynomial basis of the extension field with a matrix over the base field whose minors satisfy a strong non-singularity condition. Besides being of independent interest, this result will serve as the main algebraic ingredient in our subsequent constructions of optimal FDRM codes.

\begin{lem}\label{lemc}
Let $q$ be a prime power, and let $m$, $n$, and $\delta$ be positive integers satisfying
$
m>n\ge \delta\ge2.
$
Set
$
k=n-\delta+1\  and\  m\ge kn-k^2.
$
Let
$
(1,\beta,\beta^2,\ldots,\beta^{m-1})
$
be an ordered polynomial basis of the extension field
$\mathbb{F}_{q^m}$
over
$\mathbb{F}_q$.

Consider the $k\times n$ matrix
\[
G=
\begin{pmatrix}
1& & & & & a_{1,k}\beta^{k-1}& a_{1,k+1}\beta^{k}& \cdots&a_{1,n-1}\beta^{n-2}\\
&1& & & &a_{2,k}\beta^{k-2}& a_{2,k+1}\beta^{k-1}&\cdots&a_{2,n-1}\beta^{n-3}\\
& &\ddots& & & \vdots& \vdots& &\vdots\\
& & &1& &a_{k-1,k}\beta& a_{k-1,k+1}\beta^{2}&\cdots&a_{k-1,n-1}\beta^{n-k}\\
& & & &1&a_{k,k}& a_{k,k+1}\beta&\cdots&a_{k,n-1}\beta^{n-k-1}
\end{pmatrix}.
\]Assume that
$
a_{i,j}\in\mathbb{F}_q^{*},
1\le i\le k,\;
k\le j\le n-1,
$
and that every minor of the matrix
\[
A=
\begin{pmatrix}
a_{1,k}&\cdots&a_{1,n-1}\\
\vdots&\ddots&\vdots\\
a_{k,k}&\cdots&a_{k,n-1}
\end{pmatrix}
\]
is nonzero. Then $G$ is the generator matrix of a systematic $[m\times n,k,\delta]_q$
MRD code.
\end{lem}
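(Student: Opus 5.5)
The plan is to verify the criterion of Lemma~\ref{lemb} directly. Fix $B=(b_{jl})\in UT_n^{*}(q)$ and a $k$-subset of columns $l_1<\cdots<l_k$ of $GB$. It suffices to show that the corresponding $k\times k$ minor $\Delta$ of $GB$ is nonzero. Because $B$ is unipotent upper triangular, column $l$ of $GB$ equals $\sum_{j\le l} b_{jl}\,g_j$, where $g_j$ is column $j$ of $G$. Every entry of $G$ is an $\mathbb{F}_q$-multiple of a power of $\beta$: row $i$ ($1\le i\le k$) and column $j\ge k$ carry $a_{i,j}\beta^{j-i}$, and the identity part carries $\beta^0$. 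Hence every entry of $GB$ is a polynomial in $\beta$ over $\mathbb{F}_q$. The $(i,l)$ entry has degree at most $\max\{l-i,0\}$, and when $l\ge k$ its coefficient of $\beta^{l-i}$ is exactly $a_{i,l}$, since $b_{ll}=1$ and the lower columns contribute strictly smaller powers.

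Consequently $\Delta=P(\beta)$ for some $P\in\mathbb{F}_q[x]$. First I bound $\deg P$: each permutation term has degree at most $\sum_{s}\max\{l_s-\sigma(s),0\}$, and this is at most $\sum_{i=1}^{k}(n-1-i)<k(n-k)\le m$. By the hypothesis $m\ge kn-k^2$, this gives $\deg P<m$. Since $1,\beta,\ldots,\beta^{m-1}$ is an $\mathbb{F}_q$-basis of $\mathbb{F}_{q^m}$, $P(\beta)\neq0$ as soon as $P\neq 0$ as a polynomial. So the problem reduces to exhibiting one nonzero coefficient of $P$.

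Next I split the chosen columns into $T=\{l_s<k\}$ (say $t$ of them, the ``systematic'' ones) and $S=\{l_s\ge k\}$ ($k-t$ of them). I would look at the top-degree part of $P$ with respect to a grading in which columns in $S$ contribute $l-i$ and columns in $T$ contribute $0$. For columns in $T$, the entry $(i,l)$ of $GB$ is $b_{i-1,l}$, which vanishes for $i-1>l$ and equals $1$ for $i-1=l$. This forces the rows paired with $T$ to lie in $\{1,\ldots,l_{\max}+1\}$, i.e., among the smallest indices. To maximize $\sum_{l\in S}(l-\sigma^{-1}(l))$, one wants the rows paired with $S$ to be as small as possible, which conflicts with $T$'s constraint. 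I expect the extremal degree to decouple the determinant into a block-triangular shape: a part coming from the $T$-columns restricted to some $t$ rows $R$, of triangular form with unit diagonal contributions from $b_{ll}=1$, times the leading-coefficient matrix $(a_{i,l})_{i\notin R,\,l\in S}$. That second factor is a square submatrix of $A$, nonzero by hypothesis. The coefficient of the top monomial would then be $\pm$ this minor of $A$, hence nonzero.

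The main obstacle is making this extremal-term argument rigorous. Different permutations can attain the same maximal degree, so the decoupling must be checked carefully: one has to show the top-degree coefficient is exactly a product (determinant of a unitriangular block) $\times$ (minor of $A$), rather than a sum that might cancel. My first attempt would be the opposite extremal choice: take the lowest-degree part, where the $T$-columns are matched to rows $l+1$ via the unit entries, and then do a Laplace expansion along the $T$-columns. Failing that, I would use induction on $t$, eliminating one systematic column at a time by row operations over $\mathbb{F}_q$. This preserves the ``monomial times $a_{i,j}$'' structure of the remaining rows, and the hypothesis that \emph{every} minor of $A$ is nonzero is exactly what lets arbitrary row subsets survive.
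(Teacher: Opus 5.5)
Your key step is false at one entry, and that entry sinks the statement itself. You claim that for $l\ge k$ the coefficient of $\beta^{l-i}$ in the $(i,l)$ entry of $GB$ is exactly $a_{i,l}$, because lower columns contribute strictly smaller powers. At $(i,l)=(k,k)$ the target power is $\beta^{0}$. The systematic column $k-1$, whose $1$ lies in row $k$, contributes $b_{k-1,k}\beta^{0}$ to that same power, so the coefficient is $a_{k,k}+b_{k-1,k}$, and this vanishes when $b_{k-1,k}=-a_{k,k}$.

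Take $B=I_n$ except for this one entry. Then the minor of $GB$ on columns $0,\ldots,k-2,k$ has an identically zero last row, so the criterion of Lemma~\ref{lemb} fails. More directly, consider the last row of $G$:
\[
e_kG=\bigl(0,\ldots,0,1,a_{k,k},a_{k,k+1}\beta,\ldots,a_{k,n-1}\beta^{n-k-1}\bigr).
\]
Its entries $1$ and $a_{k,k}$ are $\mathbb{F}_q$-linearly dependent, so the entries span $\langle 1,\beta,\ldots,\beta^{n-k-1}\rangle$ and the rank is $n-k=\delta-1$. For instance, $n=3$, $k=2$, $m=4$, $a_{1,2}=a_{2,2}=1$ gives the codeword $(0,1,1)$ of rank $1<\delta=2$. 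So the lemma as stated is false for every admissible choice of parameters. Neither your top-degree decoupling nor the fallback induction on $t$ can be completed. In this configuration, the ``$\pm$ minor of $A$'' you hope to extract is the $1\times1$ minor $a_{k,k}$, while the true top coefficient is $a_{k,k}+b_{k-1,k}$.

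The paper's proof does not avoid this. Its own display of $GB$ shows the entry $u_{k-1,k}+a_{k,k}$, yet both Case~1 and Case~2 assert that the leading coefficient is a minor of $A$. The same defect appears in Example~\ref{example2}, whose second row $(0,1,a_0,a_1\beta,\ldots,a_{n-3}\beta^{n-3})$ has rank $n-2=\delta-1$.

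Separately, your degree bound $\sum_{i=1}^{k}(n-1-i)<k(n-k)$ fails for $k\ge3$. You need to use that the chosen columns are distinct. This gives $\deg P\le s(n-1-s)$, where $s\le\min\{k,n-k\}$ is the number of non-systematic columns, and hence $\deg P<k(n-k)$.

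If you want a true statement in this spirit, raise every non-systematic exponent by one, i.e.\ use $a_{i,j}\beta^{j-i+1}$. Then:
\begin{itemize}
\item each non-systematic entry of $GB$ has exact degree $l-i+1\ge1$ with leading coefficient $a_{i,l}$;
\item a systematic column $l$ can only be matched to a row $i\le l+1$, which forces every top-degree permutation to match $l$ with row $l+1$, where the entry is $b_{ll}=1$;
\item your plan then yields a top coefficient equal to $\pm$ a minor of $A$.
\end{itemize}
The catch is that the top degree can now reach $k(n-k)$, so you need $m>k(n-k)$.
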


The matrix $G$ exhibits a highly structured exponent pattern: the powers of $\beta$ decrease from the first row to the last while increasing from left to right inside each non-systematic column. This particular arrangement is precisely what allows the MRD property to be verified through the characterization of Lemma~\ref{lemb}. Moreover, the requirement that every minor of $A$ be nonzero guarantees the necessary linear independence conditions throughout the proof.

\begin{proof}
By Lemma~\ref{lemb}, it is sufficient to prove that, for every $B\in UT_n^{*}(q),$
every maximal minor of the product $GB$ is nonzero. Let
\[
B=
\begin{pmatrix}
1&u_{0,1}&\cdots&\cdots&u_{0,n-2}&u_{0,n-1}\\
&1&u_{1,2}&\cdots&u_{1,n-2}&u_{1,n-1}\\
&&\ddots&\ddots&\vdots&\vdots\\
&&&\ddots&u_{n-3,n-2}&u_{n-3,n-1}\\
&&&&1&u_{n-2,n-1}\\
&&&&&1
\end{pmatrix},
\]
where
$u_{i,j}\in\mathbb{F}_q$
for
$0\le i<j\le n-1$.

Multiplying $G$ by $B$ gives the matrix $GB$, displayed below.

$$
\small
\begin{pmatrix}
	1& u_{0,1} & \cdots	& u_{0,k-1} &u_{0,k}+a_{1,k}\beta^{k-1}&\cdots&u_{0,n-1}+\sum_{i=k}^{n-2}u_{i,n-1}a_{1,i}\beta^{i-1}+a_{1,n-1}\beta^{n-2}\\
	& 1       & \cdots & u_{1,k-1}&u_{1,k}+a_{2,k}\beta^{k-2}&\cdots&u_{1,n-1}+\sum_{i=k}^{n-2}u_{i,n-1}a_{2,i}\beta^{i-2}+a_{2,n-1}\beta^{n-3}\\
	&         &  \ddots &\vdots &\vdots&&\vdots\\
	&         &          &  1&u_{k-1,k}+a_{k,k}&\cdots&u_{k-1,n-1}+\sum_{i=k}^{n-2}u_{i,n-1}a_{k,i}\beta^{i-k}+a_{k,n-1}\beta^{n-k-1}
\end{pmatrix}.
$$Let $D_k$ be an arbitrary $k\times k$ submatrix of $GB$.
Since every entry of $GB$ is a polynomial in $\beta$, the determinant
$\det(D_k)$
is itself a polynomial in $\beta$.

Our proof is based on two observations.

\begin{enumerate}
\item
The degree of $\det(D_k)$ is strictly smaller than $m$.

\item
Its leading coefficient is equal, up to sign, to a minor of the matrix $A$.
\end{enumerate}

Since every minor of $A$ is nonzero by assumption, the leading coefficient of $\det(D_k)$ is nonzero. Consequently,
$\det(D_k)$
cannot vanish, proving that every maximal minor of $GB$ is nonzero. The verification naturally splits into the following two cases.

\medskip

\noindent\textbf{Case 1.}
Suppose that $D_k$ does not contain any of the first $k$ columns of $GB$. Then every column of $D_k$ is selected from the non-systematic part of $GB$. Let
\[
\{i_1,i_2,\ldots,i_k\}\subseteq\{k,k+1,\ldots,n-1\}
\]
be the corresponding column indices. To determine the highest-degree term of $\det(D_k)$, it is sufficient to consider the matrix
\[
M_1=
\left(
\begin{array}{cccc}
a_{1,i_1}\beta^{i_1-1} &
a_{1,i_2}\beta^{i_2-1} &
\cdots &
a_{1,i_k}\beta^{i_k-1}\\
a_{2,i_1}\beta^{i_1-2} &
a_{2,i_2}\beta^{i_2-2} &
\cdots &
a_{2,i_k}\beta^{i_k-2}\\
\vdots&
\vdots&&
\vdots\\
a_{k,i_1}\beta^{i_1-k} &
a_{k,i_2}\beta^{i_2-k} &
\cdots &
a_{k,i_k}\beta^{i_k-k}
\end{array}
\right).
\]

Indeed, the lower-degree terms arising from the entries of $GB$ cannot affect either the degree or the leading coefficient of the determinant. Consequently,
$\det(M_1)$
and
$\det(D_k)$
have the same degree and the same leading coefficient.

Factoring out the powers of $\beta$ from each column and then from each row gives
\[
\det(M_1)
=
(\beta^{i_1-k}\cdots\beta^{i_k-k})
\det
\left(
\begin{array}{cccc}
a_{1,i_1}\beta^{k-1}&
a_{1,i_2}\beta^{k-1}&
\cdots&
a_{1,i_k}\beta^{k-1}\\
a_{2,i_1}\beta^{k-2}&
a_{2,i_2}\beta^{k-2}&
\cdots&
a_{2,i_k}\beta^{k-2}\\
\vdots&
\vdots&&
\vdots\\
a_{k,i_1}&
a_{k,i_2}&
\cdots&
a_{k,i_k}
\end{array}
\right)
\]
\[
=
(\beta^{i_1-k}\cdots\beta^{i_k-k})
(\beta^{k-1}\cdots\beta)
\det
\left(
\begin{array}{cccc}
a_{1,i_1}&
a_{1,i_2}&
\cdots&
a_{1,i_k}\\
a_{2,i_1}&
a_{2,i_2}&
\cdots&
a_{2,i_k}\\
\vdots&
\vdots&&
\vdots\\
a_{k,i_1}&
a_{k,i_2}&
\cdots&
a_{k,i_k}
\end{array}
\right).
\]Hence,
\[
\deg(\det(M_1))
=
\frac{k(k-1)}2
+
\sum_{j=1}^{k}(i_j-k).
\]Since
$i_j\le n-1$
for every
$j$,
we obtain
\[
\deg(\det(M_1))
\le
\frac{k(k-1)}2+k(n-k-1)
<
m.
\]

Moreover, the leading coefficient of
$\det(M_1)$
is exactly the corresponding
$k\times k$
minor of
$A$.
By assumption, every minor of
$A$
is nonzero. Therefore, the leading coefficient of
$\det(D_k)$
is nonzero, implying
$
\det(D_k)\neq0.
$

\medskip

\noindent\textbf{Case 2.}
Assume now that
$D_k$
contains exactly
$h$
columns among the first
$k$
columns of
$GB$,
where
$1\le h\le k$.
Let these columns have indices
$j_1,j_2,\ldots,j_h$,
and denote by
$U_{k\times h}$
the corresponding submatrix extracted from the systematic part of
$GB$.
The remaining
$k-h$
columns of
$D_k$
are indexed by $i_{h+1},i_{h+2},\ldots,i_k\in
\{k,k+1,\ldots,n-1\}.
$

To determine the leading term of
$\det(D_k)$,
we introduce the auxiliary matrix
\[
\setlength{\extrarowheight}{3pt}
M_2=
\begin{pmatrix}
\begin{array}{c:cccc}	
&a_{1,i_{h+1}}\beta^{i_{h+1}-1}&a_{1,i_{h+2}}\beta^{i_{h+2}-1}     &\cdots&a_{1,i_k}\beta^{i_k-1}\\
U_{k\times h}&a_{2,i_{h+1}}\beta^{i_{h+1}-2}&a_{2,i_{h+2}}\beta^{i_{h+2}-2}&\cdots&a_{2,i_k}\beta^{i_k-2}\\
&\vdots&\vdots& &\vdots\\
&a_{k,i_{h+1}}\beta^{i_{h+1}-k}&a_{k,i_{h+2}}\beta^{i_{h+2}-k}&\cdots&a_{k,i_k}\beta^{i_k-k}
\end{array}
\end{pmatrix}.
\]As in Case~1, replacing the entries of
$GB$
by their highest-degree terms does not modify either the degree or the leading coefficient of the determinant. Thus,
$\det(M_2)$
and
$\det(D_k)$
share the same leading term.

Factoring out the powers of
$\beta$
from the right-hand block yields
\[
\det(M_2)
=
(\beta^{i_{h+1}-k}\cdots\beta^{i_k-k})
\det
\left(
\begin{array}{c:cccc}
&
a_{1,i_{h+1}}\beta^{k-1}&
a_{1,i_{h+2}}\beta^{k-1}&
\cdots&
a_{1,i_k}\beta^{k-1}\\
U_{k\times h}&
a_{2,i_{h+1}}\beta^{k-2}&
a_{2,i_{h+2}}\beta^{k-2}&
\cdots&
a_{2,i_k}\beta^{k-2}\\
&
\vdots&
\vdots&&
\vdots\\
&
a_{k,i_{h+1}}&
a_{k,i_{h+2}}&
\cdots&
a_{k,i_k}
\end{array}
\right).
\]A comparison with Case~1 immediately gives
$
\deg(\det(M_2))
<
m.
$

Next, let
$L$
be the
$(k-h)\times(k-h)$
submatrix obtained by deleting the rows
$j_1,j_2,\ldots,j_h$
from
\[
\left(
\begin{matrix}
a_{1,i_{h+1}}&\cdots&a_{1,i_k}\\
\vdots&&\vdots\\
a_{k,i_{h+1}}&\cdots&a_{k,i_k}
\end{matrix}
\right).
\]We claim that the leading coefficient of
$\det(M_2)$
is equal to
$\pm\det(L)$.
Indeed, after elementary row replacement operations, the block
$U_{k\times h}$
may be transformed into a matrix with at most one nonzero entry equal to~$1$ in each row, without changing the determinant. Since the largest powers of
$\beta$
occur exclusively in the right-hand block, only these highest-degree terms contribute to the leading coefficient. Expanding the determinant therefore yields precisely
$\pm\det(L)$.

Finally,
$L$
is a minor of
$A$,
which is nonzero by hypothesis. Hence the leading coefficient of
$\det(M_2)$
is nonzero, and therefore
$
\det(D_k)\neq0.
$

Since every maximal minor of
$GB$
is nonzero for every
$B\in UT_n^{*}(q)$,
Lemma~\ref{lemb} implies that
$G$
is the generator matrix of a systematic$[m\times n,k,\delta]_q$
MRD code.
\end{proof}

The following notion will be used throughout this section to describe the support of vectors arising in the construction.

\begin{definition}
Let $(v_1,v_2,\ldots,v_n)$
be a vector of length $n$. Assume that
$v_r$
is its rightmost nonzero entry for some
$1\le r\le n$.
The integer
$r$
is called the \emph{valid length} of the vector.
\end{definition}

We now present the first main construction of this section. Building upon the systematic MRD codes obtained in Lemma~\ref{lemc}, we derive a new family of optimal Ferrers diagram rank-metric codes. The construction applies to a broad class of Ferrers diagrams characterized by explicit conditions on their column lengths.

\begin{theo}\label{theo5}
Let $m$, $n$, and $\delta$ be positive integers satisfying
$
m>n\ge\delta\ge2,
$
and set
$
k=n-\delta+1.
$
Assume further that
$
m\ge kn-k^2.
$ Suppose there exists a $k\times n$ matrix $G$ satisfying all the assumptions of Lemma~\ref{lemc}. Equivalently, $G$ is the generator matrix of a systematic
$
[m\times n,k,\delta]_q
$
MRD code.

Then, for every Ferrers diagram
$\mathcal{F}$
of size
$\gamma_{n-1}\times n$
whose column lengths satisfy the following conditions,

\begin{enumerate}
\item
$\gamma_0=\gamma_1\ge3$;

\item
$\gamma_i-\gamma_{i-1}\ge1$
for every
$1\le i\le k-1$;

\item
for every
$k\le i\le n-1$,
\[
\gamma_i
=
\min\left\{
\max\left\{
\gamma_l+i-1-l:\;
l\in[k]
\right\},
m
\right\},
\]
\end{enumerate}there exists an optimal
$
\left[
\mathcal{F},
\sum_{i=0}^{k-1}\gamma_i,
\delta
\right]_q
$
FDRM code. Here,
$\gamma_i$
denotes the number of dots in the
$i$-th column of
$\mathcal{F}$,
for every
$i\in[n]$.
\end{theo}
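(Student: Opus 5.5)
We build the code as the Lemma~\ref{lem2}-type subcode of the MRD code generated by the matrix $G$ of Lemma~\ref{lemc}, using the polynomial basis $(1,\beta,\ldots,\beta^{m-1})$ in the vector--matrix map $\psi_m$. Concretely, we take
\[
U=\{(u_1,\ldots,u_k)\in\mathbb{F}_{q^m}^k:\ \psi_m(u_l)\ \text{has valid length at most}\ \gamma_{l-1}\},
\]
that is, $u_l$ is a polynomial in $\beta$ of degree at most $\gamma_{l-1}-1$. The code is $\mathcal{C}=\{\psi_m(uG):u\in U\}$. It is $\mathbb{F}_q$-linear of dimension $\sum_{i=0}^{k-1}\gamma_i$. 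Its minimum rank distance is at least $\delta$ because it is a subcode of an MRD code with minimum distance $\delta$.

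The remaining issue is the support. Column $l-1$ of $uG$ with $l\le k$ equals $u_l$, so its valid length is at most $\gamma_{l-1}$. This is compatible with $\mathcal{F}$ provided the dots of column $l-1$ occupy the top $\gamma_{l-1}$ rows. We use the paper's convention with the first row full, reading the diagram in the orientation where column lengths increase, and taking reverse/transpose if needed by the symmetry remark.

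For a non-systematic column $j\in\{k,\ldots,n-1\}$, the entry is
\[
\sum_{l=1}^k u_l\,a_{l,j}\beta^{j-l}.
\]
Each summand is a polynomial of degree at most $\gamma_{l-1}-1+j-l=(\gamma_{l-1}+j-1-(l-1))-1$. Reindexing $l'=l-1\in[k]$, the degree is at most $\max_{l'}\{\gamma_{l'}+j-1-l'\}-1$. Since $m\ge kn-k^2$ and the polynomial basis has $m$ elements, no reduction modulo the minimal polynomial of $\beta$ can occur when this bound is below $m$. Otherwise the column is simply full, which is why condition~3 caps $\gamma_j$ at $m$. In both cases the valid length of column $j$ is at most $\gamma_j$, so every codeword lies in $\mathcal{F}$. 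We must also check that condition~3 gives a nondecreasing sequence: this holds since $\gamma_j$ grows with $j$, and $\gamma_k\ge\gamma_{k-1}$ follows from the $l'=k-1$ term.

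Optimality requires showing that $\sum_{i=0}^{k-1}\gamma_i=v_0=v_{\min}(\mathcal{F},\delta)$, that is, $v_i\ge v_0$ for $1\le i\le\delta-1$. The difference is
\[
v_i-v_0=\sum_{t=n-\delta+1}^{n-\delta+i}\gamma_t-(\text{dots in the first } i \text{ rows outside the rightmost } \delta-1-i \text{ columns}).
\]
The subtracted quantity is at most $i\cdot(n-\delta+1+i)$, since each row has at most that many columns available. So it suffices that the $i$ columns $k,\ldots,k+i-1$ each have enough dots, roughly $\gamma_{k+s}\ge k+i$.

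From condition~3 with $l=1$, we get $\gamma_{k+s}\ge\gamma_1+k+s-2$. Combined with $\gamma_1\ge3$ and the strict increase from condition~2, this gives $\gamma_{k+s}\ge k+s+1$. For the first rows, a finer count is needed: row $r$ contains dots only in columns $t$ with $\gamma_t>r$. Conditions 1 and 2 ensure the leftmost columns are short relative to $r$ only in a controlled way. I expect this counting inequality, matching the dots of the first $i$ rows against the added columns using $\gamma_0=\gamma_1\ge3$, to be the main obstacle. I would try induction on $i$, comparing $v_{i}$ with $v_{i-1}$: the change is $\gamma_{k+i-1}$ minus the number of dots in row $i-1$ among the left $k+i$ columns, and I would show this is nonnegative by a direct bound.
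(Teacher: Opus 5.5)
Your construction and support check are the paper's own. You use the same subcode $\{\psi_m(uG)\}$, with each information symbol a polynomial in $\beta$ of degree less than the corresponding $\gamma$. The systematic columns have valid length $\gamma_l$, and the non-systematic columns get the same degree count $\gamma_l+j-1-l$, capped at $m$. Two small points:
\begin{itemize}
\item Under the paper's convention the dots of column $j$ are automatically the top $\gamma_j$ rows, so no reversal or transposition is needed.
\item The hypothesis $m\ge kn-k^2$ plays no role in the support check. No reduction occurs whenever $\max_l(\gamma_l+j-1-l)\le m$; that hypothesis only serves Lemma~\ref{lemc}.
\end{itemize}

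The one place your proposal is incomplete is optimality, which you leave open as the ``main obstacle''. No dot counting is required. Lemma~\ref{lem1} holds for every FDRM code on $\mathcal F$ with minimum distance at least $\delta$, in particular for the code $\mathcal C$ you have just built, so
\[
v_0(\mathcal F,\delta)=\sum_{i=0}^{k-1}\gamma_i=\dim\mathcal C\le v_{\min}(\mathcal F,\delta)\le v_0(\mathcal F,\delta),
\]
which forces $v_{\min}=v_0$. This is exactly how the paper concludes. The inequalities $v_i\ge v_0$ you planned to prove by induction follow from the construction rather than being an extra requirement. Your bound $\gamma_{k+s}\ge k+s+1$ would not have yielded them for $s<i-1$ in any case.

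You should also know that your distance step, exactly like the paper's, rests on $G$ being MRD. That property is imported from Lemma~\ref{lemc} via the ``Equivalently'' in the statement, and it is false.
\begin{itemize}
\item The last row of $G$ is $(0,\ldots,0,1,a_{k,k},a_{k,k+1}\beta,\ldots,a_{k,n-1}\beta^{n-k-1})$ with $a_{k,k}\in\mathbb F_q^{*}$. Its entries therefore lie in $\langle 1,\beta,\ldots,\beta^{n-k-1}\rangle_{\mathbb F_q}$, so its rank is at most $n-k=\delta-1$.
\item In the proof of Lemma~\ref{lemc}, this shows up as the minor of $GB$ on columns $0,\ldots,k-2,k$. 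That minor equals $\pm(u_{k-1,k}+a_{k,k})$, which vanishes for $u_{k-1,k}=-a_{k,k}$.
\item The leading-coefficient argument fails there because the entry has degree $0$, so the constant terms coming from $B$ are not of lower order.
\item The message $(0,\ldots,0,1)$ is admissible in your subcode, so $\mathcal C$ itself contains a nonzero codeword of rank less than $\delta$.
\end{itemize}
Hence ``subcode of an MRD code'' is valid only if the MRD property is an independent hypothesis, and no matrix of the shape in Lemma~\ref{lemc} satisfies it. Raising every exponent of $\beta$ in the non-systematic block by one restores rank $\delta$ for that last row. However, it turns Condition~3 into $\min\{\max_l(\gamma_l+i-l),m\}$.
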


\begin{proof}

By Lemma~\ref{lemc}, the matrix
\[
G=
\begin{pmatrix}
1& & & & & a_{1,k}\beta^{k-1}& a_{1,k+1}\beta^{k}& \cdots&a_{1,n-1}\beta^{n-2}\\
&1& & & &a_{2,k}\beta^{k-2}& a_{2,k+1}\beta^{k-1}&\cdots&a_{2,n-1}\beta^{n-3}\\
& &\ddots& & & \vdots& \vdots& &\vdots\\
& & &1& &a_{k-1,k}\beta& a_{k-1,k+1}\beta^{2}&\cdots&a_{k-1,n-1}\beta^{n-k}\\
& & & &1&a_{k,k}& a_{k,k+1}\beta&\cdots&a_{k,n-1}\beta^{n-k-1}
\end{pmatrix}
\]
generates a systematic
$
[m\times n,k,\delta]_q
$
MRD code. Starting from this MRD code, we construct the following subcode:
\[
\mathcal{C}
=
\left\{
\psi_m(uG):
u=(u_0,\ldots,u_{k-1})
\in
\mathbb{F}_{q^m}^{k},
\
\psi_m(u_i)
=
\left(
\begin{array}{c}
u_{i,0}\\
\vdots\\
u_{i,\gamma_i-1}\\
0\\
\vdots\\
0
\end{array}
\right)
\right\}.
\]

We shall prove that every codeword of $\mathcal C$ has support contained in the prescribed Ferrers diagram $\mathcal F$. Since $\mathcal C$ is obtained by restricting the information symbols of an MRD code, its minimum rank distance remains at least $\delta$. It therefore suffices to verify that the column supports satisfy the shape of $\mathcal F$.

For convenience, recall that the column parameters satisfy

\begin{enumerate}
\item $\gamma_0=\gamma_1\ge3$;

\item $\gamma_i-\gamma_{i-1}\ge1$, for every
$1\le i\le k-1$.
\end{enumerate}
We determine the valid length of each column of an arbitrary codeword in $\mathcal C$.

\medskip

\noindent
\textbf{Step 1. The systematic columns.}

For
$0\le i\le k-1$,
the $i$-th column equals $c_i=u_i.$ Hence

\[
\psi_m(c_i)
=
\psi_m(u_i)
=
(u_{i,0},\ldots,u_{i,\gamma_i-1},0,\ldots,0)^t,
\]whose valid length is exactly $\gamma_i$. Therefore, the first $k$ columns satisfy the prescribed Ferrers diagram by Conditions (1) and (2).

\medskip

\noindent
\textbf{Step 2. The remaining columns.}

Now let
$k\le i\le n-1$. From the definition of $G$,

\[
c_i
=
\sum_{l=0}^{k-1}
u_l
a_{l+1,i}
\beta^{\,i-1-l}.
\]Since $\psi_m$ is $\mathbb F_q$-linear,

\[
\psi_m(c_i)
=
\sum_{l=0}^{k-1}
a_{l+1,i}
\psi_m(u_l\beta^{\,i-1-l}).
\]For every
$l\in[k]$, $
\psi_m(u_l)
=
(u_{l,0},\ldots,u_{l,\gamma_l-1},0,\ldots,0)^t,
$ which is equivalent to

\[
u_l
=
u_{l,0}
+
u_{l,1}\beta
+\cdots+
u_{l,\gamma_l-1}\beta^{\gamma_l-1}.
\]

Multiplication by
$\beta^{i-1-l}$
shifts every exponent upward by
$i-1-l$.
Whenever an exponent reaches or exceeds
$m$,
it is reduced modulo the defining polynomial of
$\mathbb F_{q^m}$,
and hence becomes an
$\mathbb F_q$
linear combination of
$
1,\beta,\ldots,\beta^{m-1}.
$ Consequently, $
\psi_m(u_l\beta^{\,i-1-l})
$ has valid length at most $
\min
\{
\gamma_l+i-1-l,\,
m
\}.
$ Therefore, the valid length of the entire column
$\psi_m(c_i)$
is bounded by the largest of these values.

More precisely,

\begin{enumerate}
\item
if
$
\max_{l\in[k]}
(\gamma_l+i-1-l)
\le m,
$ then the valid length of
$\psi_m(c_i)$
is at most
$
\max_{l\in[k]}
(\gamma_l+i-1-l);
$

\item
otherwise,
its valid length is at most
$m$.
\end{enumerate}This is precisely Condition~(3).

Hence every codeword of $\mathcal C$ is supported inside the Ferrers diagram $\mathcal F$, and therefore $\mathcal C$ is an
$
[\mathcal F,\sum_{i=0}^{k-1}\gamma_i,\delta]_q
$
FDRM code.

Finally, removing the rightmost
$\delta-1$
columns of
$\mathcal F$
leaves exactly
$
\sum_{i=0}^{k-1}\gamma_i
$ dots.
By the Singleton-type bound of Lemma~\ref{lem1},

\[
\dim(\mathcal F,\delta)
\le
\sum_{i=0}^{k-1}\gamma_i.
\]

Since the dimension of the constructed code is exactly $
\sum_{i=0}^{k-1}\gamma_i,
$ the Singleton bound is attained.
Moreover, $\mathcal C$ inherits minimum rank distance $\delta$ from the parent MRD code.
Therefore,
$\mathcal C$
is an optimal $
[\mathcal F,\sum_{i=0}^{k-1}\gamma_i,\delta]_q
$ FDRM code.
\end{proof}

\begin{example}\label{example2}

We illustrate Theorem~\ref{theo5} with the important special case $k=2$. Let $q$ be a prime power, and let $m$, $n$, and $\delta$ be positive integers satisfying
$
m\ge 2n-4.
$
Let $(1,\beta,\ldots,\beta^{m-1})$
be an ordered polynomial basis of the extension field
$\mathbb{F}_{q^m}$
over
$\mathbb{F}_q$.

Consider the matrix
\[
G=
\begin{pmatrix}
1& &\beta&\beta^2&\beta^3&\cdots&\beta^{n-2}\\
&1&a_0&a_1\beta&a_2\beta^2&\cdots&a_{n-3}\beta^{n-3}
\end{pmatrix},
\]
where
$a_i\in\mathbb{F}_q^{*}$,
$0\le i\le n-3$,
are pairwise distinct, and
$
3\le n\le q+1.
$

Since every minor of the coefficient matrix
\[
A=
\begin{pmatrix}
1&1&\cdots&1\\
a_0&a_1&\cdots&a_{n-3}
\end{pmatrix}
\]
is nonzero, all the assumptions of Lemma~\ref{lemc} are satisfied. Hence,
$G$
is the generator matrix of a systematic
$
[m\times n,2,n-1]_q
$
MRD code.

Now let
$\mathcal{F}$
be a
$\gamma_{n-1}\times n$
Ferrers diagram satisfying
\[
\gamma_0=\gamma_1\ge3\ and\ 
\gamma_i=\gamma_0+i-1,
\qquad
1\le i\le n-1.
\]

It is straightforward to verify that Conditions~(1)--(3) of Theorem~\ref{theo5} hold. Therefore, Theorem~\ref{theo5} immediately yields the existence of an optimal
$
[\mathcal{F},\,\gamma_0+\gamma_1,\,n-1]_q
$
FDRM code.

This example shows that the proposed construction applies naturally to Ferrers diagrams whose column lengths increase linearly, thereby providing an infinite family of optimal FDRM codes.
\end{example}

\begin{rem}\label{rem2}

We now illustrate the strength of Theorem~\ref{theo5} by showing that it resolves one of the open problems stated in the Introduction.

Let
\[
\gamma_0=3
\quad\text{and}\quad
n=8.
\]
Within the setting of Example~\ref{example2}, we obtain
\[
\delta=7,\qquad
m\ge12,\qquad
q\ge7.
\]Since
$
\gamma_0=\gamma_1=3,
$
Condition~(2) of Theorem~\ref{theo5} gives
\[
\gamma_2=4,\;
\gamma_3=5,\;
\gamma_4=6,\;
\gamma_5=7,\;
\gamma_6=8,\;
\gamma_7=9.
\]
Hence,
$
\mathcal{F}
=[3,3,4,5,6,7,8,9],
$
and Theorem~\ref{theo5} produces an optimal
$
[\mathcal{F},6,7]_q
$
FDRM code.

Next, consider the larger Ferrers diagram $
\mathcal{F}_1
=
[5,5,5,5,6,7,8,9].
$
Since
$
\mathcal{F}\subseteq\mathcal{F}_1
$
and
$
v_{\min}(\mathcal{F},7)
=
v_{\min}(\mathcal{F}_1,7),
$
it follows from Lemma~\ref{lem2} that there also exists an optimal
$
[\mathcal{F}_1,6,7]_q
$
FDRM code.

Finally, applying the transpose construction (Lemma~\ref{lem2}) yields an optimal
$
[\mathcal{F}_1^t,6,7]_q
$
FDRM code, where

\[
\mathcal{F}_{1}^{t}=
\begin{array}{ccccccccc}
\bullet&\bullet&\bullet&\bullet&\bullet&\bullet&\bullet&\bullet&\bullet\\
&\bullet&\bullet&\bullet&\bullet&\bullet&\bullet&\bullet&\bullet\\
&&\bullet&\bullet&\bullet&\bullet&\bullet&\bullet&\bullet\\
&&&\bullet&\bullet&\bullet&\bullet&\bullet&\bullet\\
&&&&\bullet&\bullet&\bullet&\bullet&\bullet\\
&&&&\bullet&\bullet&\bullet&\bullet&\bullet\\
&&&&\bullet&\bullet&\bullet&\bullet&\bullet\\
&&&&\bullet&\bullet&\bullet&\bullet&\bullet
\end{array}.
\]

Therefore, Open Problem~1 posed in the Introduction is completely resolved for all prime powers
$
q\ge7.
$
\end{rem}

The first construction presented in this section was derived from a new family of generator matrices for systematic MRD codes. We now introduce a different construction, based instead on the structural characterization of systematic MRD generator matrices established in Lemma~\ref{lemd}. Exploiting this characterization enables us to construct new optimal FDRM codes for a broader class of Ferrers diagrams. To motivate the general construction and highlight its underlying idea, we first present an illustrative example.

\begin{example}\label{exam2}

We illustrate the construction underlying our second family of optimal FDRM codes. Consider the following $10\times10$ Ferrers diagram:
\[
\mathcal{F}=
\begin{array}{cccccccccc}
\bullet&\bullet&\bullet&\bullet&\bullet&\bullet&\bullet&\bullet&\bullet&\bullet\\
&\bullet&\bullet&\bullet&\bullet&\bullet&\bullet&\bullet&\bullet&\bullet\\
&&&\bullet&\bullet&\bullet&\bullet&\bullet&\bullet&\bullet\\
&&&&&\bullet&\bullet&\bullet&\bullet&\bullet\\
&&&&&&\bullet&\bullet&\bullet&\bullet\\
&&&&&&\bullet&\bullet&\bullet&\bullet\\
&&&&&&\bullet&\bullet&\bullet&\bullet\\
&&&&&&&\bullet&\bullet&\bullet\\
&&&&&&&\bullet&\bullet&\bullet\\
&&&&&&&&&\bullet
\end{array}.
\]Let $\delta=5$. By Lemma~\ref{lem1},
$
\dim(\mathcal F,5)\le15.
$
Hence, any $[\mathcal F,15,5]_q$ FDRM code is necessarily optimal.

\medskip

\noindent
\textbf{Step 1. Construction of a systematic MRD code.}

Let
$
(1,\alpha,\alpha^2,\ldots,\alpha^8)
$
be an ordered polynomial basis of $\mathbb F_{q^9}$ over $\mathbb F_q$. Applying Lemma~\ref{lemd} with
\[
a_i=\alpha^i,\qquad 1\le i\le6,
\]
yields a matrix
$
A\in\mathbb F_{q^9}^{6\times3}
$
whose first column is
$
(\alpha^6,\alpha^5,\ldots,\alpha)^t.
$

Consequently,

\[
G=
\left(
\begin{array}{ccccccc}
1&&&&\alpha^6&\beta_1&\gamma_1\\
&1&&&\alpha^5&\beta_2&\gamma_2\\
&&\ddots&&\vdots&\vdots&\vdots\\
&&&1&\alpha&\beta_6&\gamma_6
\end{array}
\right)
\]is the generator matrix of a systematic
$
[9\times9,6,4]_q
$
MRD code.

\medskip

\noindent
\textbf{Step 2. Enlarging the generator matrix.}

To fit the prescribed Ferrers diagram, we append one additional column and define

\[
\hat G=
\left(
\begin{array}{cccccccc}
1&&&&\alpha^6&\beta_1&\gamma_1&0\\
&1&&&\alpha^5&\beta_2&\gamma_2&\theta_2\\
&&\ddots&&\vdots&\vdots&\vdots&\vdots\\
&&&1&\alpha&\beta_6&\gamma_6&\theta_6
\end{array}
\right),
\]where
$
\theta_i\in\mathbb F_{q^9},\ 2\le i\le6.
$

\medskip

\noindent
\textbf{Step 3. Construction of the subcode.}

We define the subcode

\[
\hat{\mathcal C}
=
\{
u\hat G:
u=(u_0,\ldots,u_5)\in\mathbb F_{q^9}^6
\},
\]where $
\psi_9(u_i)
=
(u_{i0},\ldots,u_{i,\lambda_i-1},0,\ldots,0)^t,
$ with $
(\lambda_0,\lambda_1,\lambda_2,\lambda_3,\lambda_4,\lambda_5)
=
(1,2,2,3,3,4).
$

Hence $
|\hat{\mathcal C}|=q^{15}.
$

We now embed every matrix representation of $\hat{\mathcal C}$ into the first nine rows of the Ferrers diagram $\mathcal F$ and place the symbol $u_{00}$ in the unique dot of the last row. This gives the code

\[
\mathcal C=
\left\{
\left(
\begin{array}{cccccccccc}
u_{00}&u_{10}&u_{20}&u_{30}&u_{40}&u_{50}&u_{60}&u_{70}&u_{80}&u_{90}\\
0&u_{11}&u_{21}&u_{31}&u_{41}&u_{51}&u_{61}&u_{71}&u_{81}&u_{91}\\
0&0&0&u_{32}&u_{42}&u_{52}&u_{62}&u_{72}&u_{82}&u_{92}\\
0&0&0&0&0&u_{53}&u_{63}&u_{73}&u_{83}&u_{93}\\
0&0&0&0&0&0&u_{64}&u_{74}&u_{84}&u_{94}\\
0&0&0&0&0&0&u_{65}&u_{75}&u_{85}&u_{95}\\
0&0&0&0&0&0&u_{66}&u_{76}&u_{86}&u_{96}\\
0&0&0&0&0&0&0&u_{77}&u_{87}&u_{97}\\
0&0&0&0&0&0&0&u_{78}&u_{88}&u_{98}\\
0&0&0&0&0&0&0&0&0&u_{00}
\end{array}
\right)
\right\}.
\]Clearly,
$
|\mathcal C|=q^{15},
$
and every codeword is supported on $\mathcal F$. Thus, $\mathcal C$ is an
$
[\mathcal F,15,\delta]_q
$
FDRM code.

We still need to verify that every nonzero codeword has rank at least $5$.

\medskip

\noindent
\textbf{Case 1.} $u_0\neq0$.

The leading $9\times9$ submatrix

\[
\left(
\begin{array}{ccccccccc}
u_{00}&u_{10}&u_{20}&u_{30}&u_{40}&u_{50}&u_{60}&u_{70}&u_{80}\\
0&u_{11}&u_{21}&u_{31}&u_{41}&u_{51}&u_{61}&u_{71}&u_{81}\\
0&0&0&u_{32}&u_{42}&u_{52}&u_{62}&u_{72}&u_{82}\\
0&0&0&0&0&u_{53}&u_{63}&u_{73}&u_{83}\\
0&0&0&0&0&0&u_{64}&u_{74}&u_{84}\\
0&0&0&0&0&0&u_{65}&u_{75}&u_{85}\\
0&0&0&0&0&0&u_{66}&u_{76}&u_{86}\\
0&0&0&0&0&0&0&u_{77}&u_{87}\\
0&0&0&0&0&0&0&u_{78}&u_{88}
\end{array}
\right)
\]has rank at least $4$, since it is a codeword of the underlying MRD code. Because $u_{00}\neq0$, the last row contributes one additional independent row. Hence $
\operatorname{rank}(C)\ge5.
$

\medskip

\noindent
\textbf{Case 2.} $u_0=0$.

Every nonzero codeword can be written as $
c=(u_1,u_2,u_3,u_4,u_5)G',
$where

\[
G'=
\left(
\begin{array}{ccccccc}
1&&&\alpha^5&\beta_2&\gamma_2&\theta_2\\
&\ddots&&\vdots&\vdots&\vdots&\vdots\\
&&1&\alpha&\beta_6&\gamma_6&\theta_6
\end{array}
\right)
\]is the generator matrix of a systematic $
[9\times9,5,5]_q
$ MRD code. Therefore, $
\operatorname{rank}(C)\ge5.
$

Combining the two cases, every nonzero codeword of $\mathcal C$ has rank at least $5$. Since

\[
\dim(\mathcal C)=15=\dim(\mathcal F,5),
\]the Singleton bound is attained. Consequently, $\mathcal C$ is an optimal $
[\mathcal F,15,5]_q$ FDRM code.
\end{example}

\begin{rem}

The entries $u_{60},\ldots,u_{98}$ are the $\mathbb F_q$-linear combinations of
$u_{00},u_{10},\ldots,u_{53}$.
\end{rem}

	Theorem~\ref{theo6} substantially extends the construction presented in Example~\ref{exam2}. In particular, the example corresponds to the special case $k=6$, while the theorem applies to arbitrary values of $k$ and to a much broader class of Ferrers diagrams.
	The construction presented in Example~\ref{exam2} reveals the fundamental idea behind our second approach. By combining the structural properties of systematic MRD codes with a suitable extension of their generator matrices and a careful restriction of the information vectors, one can construct FDRM codes whose supports exactly match a prescribed Ferrers diagram while preserving the desired minimum rank distance.

We now generalize this construction and establish a unified result that applies to a broad family of Ferrers diagrams. The theorem below provides a general sufficient condition for the existence of optimal FDRM codes and significantly extends the construction illustrated in Example~\ref{exam2}.

\begin{theo}\label{theo6}
Let $m$, $n$, and $\delta$ be positive integers satisfying
$
m\ge n\ge\delta\ge3,
$
and set
$
k=n-\delta+1.
$ Let
$
\mathcal F=
[\gamma_0,\gamma_1,\ldots,\gamma_{n-1}]
$
be an $m\times n$ Ferrers diagram satisfying the following conditions:

\begin{enumerate}
\item
The first $k$ columns of $\mathcal F$ form an initially convex Ferrers subdiagram.

\item
\[
\gamma_k
=
\min
\left\{
\max\{\gamma_j-j+k:\,j\in[k]\},
\,n-1
\right\}.
\]

\item
\[
\gamma_{k+1}\ge n-1.
\]

\item
\[
\gamma_{n-1}\ge n-1+\gamma_0.
\]
\end{enumerate}Then, for every prime power $q$, there exists an optimal
$
\left[
\mathcal F,
\sum_{i=0}^{k-1}\gamma_i,
\delta
\right]_q
$
FDRM code.
\end{theo}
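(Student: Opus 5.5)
I would generalize Example~\ref{exam2} directly. First, the upper bound: deleting the rightmost $\delta-1$ columns of $\mathcal F$ leaves $v_0=\sum_{i=0}^{k-1}\gamma_i$ dots, so Lemma~\ref{lem1} gives $\dim(\mathcal F,\delta)\le\sum_{i=0}^{k-1}\gamma_i$. It therefore suffices to construct a linear $[\mathcal F,\sum_{i=0}^{k-1}\gamma_i,\delta]_q$ code. The idea is to strip off the last row of $\mathcal F$ (which, by Condition~(4) and the Ferrers shape, can be handled as in the example), and to work with a systematic MRD code of length $n-1$ and dimension $k$, distance $\delta-1$, extended by one extra column.

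\textbf{Construction.} Take an ordered polynomial basis $(1,\alpha,\ldots,\alpha^{m-1})$ of $\mathbb F_{q^m}$. Apply Lemma~\ref{lemd} with parameters $(m,n-1,\delta-1)$; note that $(n-1)-(\delta-1)+1=k$ and $\delta-1\ge2$. Choose $a_i=\alpha^{k+1-i}$, so that $1,a_1,\ldots,a_k$ are linearly independent. This gives a systematic MRD generator $G=(I_k\mid A)$ whose column $k$ is $(\alpha^k,\ldots,\alpha)^t$. Then apply Lemma~\ref{lemd} again with parameters $(m,n-1,\delta)$ to the last $k-1$ rows, obtaining a $[(n-1),k-1,\delta]$ MRD generator $G'$ whose first non-systematic column is $(\alpha^{k-1},\ldots,\alpha)^t$. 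Here is the delicate compatibility point. I want the rows $2,\ldots,k$ of the enlarged matrix $\hat G=(G\mid \theta)$ with $\theta=(0,\theta_2,\ldots,\theta_k)^t$ to form, after deleting column $0$, exactly $G'$ up to appending. The plan is to pick $G$ first, then realise $G'$ as the rows $2..k$ of $G$ with column $0$ removed plus $\theta$. For this I would reprove Lemma~\ref{lemd}'s underlying existence (a generic choice of the remaining entries avoiding finitely many nonvanishing-minor conditions from Lemma~\ref{lemb}) jointly for both codes, with $\theta$ chosen afterwards.

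\textbf{Support.} Restrict $\psi_m(u_i)$ to its first $\gamma_i$ coordinates for $i\in[k]$, mirroring the example with $u_0$ occupying the extra dot in the last row. Column $k$ equals $\sum_l u_l\alpha^{k-l}$. Its valid length is at most $\max_j(\gamma_j-j+k)$, which matches Condition~(2) (the cap $n-1$ is automatic by Condition~(3) and $\gamma_k\le\gamma_{k+1}$). The remaining columns lie in the first $n-1$ rows thanks to Condition~(3), and the extra column sits in a column with at least $n-1+\gamma_0$ dots by Condition~(4). Initial convexity (Condition~(1)) guarantees that the first-$k$ columns are consistent with the shifts, i.e.\ each $\gamma_j$ jumps by at most one, so the valid-length bounds produce a genuine Ferrers shape.

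\textbf{Distance.} There are two cases, as in Example~\ref{exam2}. If $u_0\ne0$, the top $(n-1)$-column block is a nonzero codeword of the $[\,\cdot\,,k,\delta-1]$ MRD code, so it has rank $\ge\delta-1$. The extra row containing $u_{00}$ sits in a column outside that block, which adds one to the rank, giving rank $\ge\delta$. If $u_0=0$, the codeword lies in the span of the rows of $\hat G$ other than the first, which is the $[\,\cdot\,,k-1,\delta]$ MRD code, giving rank $\ge\delta$.

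I expect the main obstacle to be the simultaneous choice of $G$ and $\theta$ so that both MRD properties hold with the prescribed first column. I would handle it via Lemma~\ref{lemb} and a counting/generic argument over $\mathbb F_{q^m}$ (possible when $m\ge n$). The second obstacle is checking that the shift bound for column $k$ matches Condition~(2) exactly when the cap $n-1$ applies.
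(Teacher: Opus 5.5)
Your overall plan (upper bound via $v_0$, an extended systematic generator $\hat G$, and the two-case rank argument on $u_0$) is the paper's. However, the construction as you set it up does not land in $\mathcal F$, because you work over $\mathbb F_{q^m}$ with $m$ the number of rows of $\mathcal F$.

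Over $\mathbb F_{q^m}$, the entries of columns $k+1,\dots,n-2$ of $\hat G$ and of $\theta$ are uncontrolled elements supplied by Lemma~\ref{lemd}. The corresponding codeword columns can therefore have valid length up to $m\ge n$, whereas Condition~(3) only gives $\gamma_{k+1}\ge n-1$. So your sentence ``the remaining columns lie in the first $n-1$ rows thanks to Condition~(3)'' fails in that field. For example, $n=6$, $\delta=4$, $k=3$, $m=10$, $\mathcal F=[1,2,3,4,5,10]$ satisfies (1)--(4), yet nothing keeps column $4$ of your codewords inside its $5$ dots.

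Worse, over $\mathbb F_{q^m}$ the MRD block already occupies all $m$ rows, so there is no free row for $u_{00}$. Putting it in the last row of the last column adds it to an entry of the block, and your Case~1 step (``the extra row \dots\ adds one to the rank'') breaks down. Example~\ref{exam2} hides this because there $m=n=10$ and the field is $\mathbb F_{q^9}=\mathbb F_{q^{n-1}}$.

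The missing idea is to fix the extension degree at exactly $n-1$, as the paper does. Take $\mathbb F_{q^{n-1}}$ with basis $(1,\alpha,\dots,\alpha^{n-2})$; Lemma~\ref{lemd} still applies since degree and length are both $n-1$. Place $\psi_{n-1}(u\hat G)$ in rows $0,\dots,n-2$ and put $u_{00}$ at position $(n-1,n-1)$. Then:
\begin{itemize}
\item every block column has valid length at most $n-1$, which is exactly what Condition~(3) covers;
\item initial convexity forces $\gamma_0=1$, so Condition~(4) reads $\gamma_{n-1}\ge n$ and covers the extra entry;
\item that entry sits in a row that is zero in the block, so your rank argument goes through verbatim.
\end{itemize}
Separately, the cap $n-1$ in Condition~(2) is inactive because $\gamma_j\le j+1$ gives $\gamma_j-j+k\le k+1\le n-1$, not because of Condition~(3).

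On the compatibility of $G$ and $\theta$: the paper merely asserts that rows $1,\dots,k-1$ of $\hat G$ without column $0$ generate an $[(n-1)\times(n-1),k-1,\delta]_q$ MRD code, so you are right to flag it. A plain counting argument will not settle it, though: for $n=5$, $\delta=3$, the union bound over rank-$(\delta-1)$ codewords of the shortened code already exceeds the $q^{8}$ possible $\theta$. A structural route works instead:
\begin{itemize}
\item choose $G$ to be a Gabidulin code, which is compatible with the prescribed column;
\item note that its shortening at coordinate $0$ is a Gabidulin code of length $n-2$ over $\mathbb F_{q^{n-1}}$;
\item extend it by one more evaluation point, and read off $\theta$ from the unique systematic generator.
\end{itemize}
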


\begin{proof}

We begin by constructing a suitable systematic MRD code, which serves as the foundation of the proposed FDRM construction.

Let
$
(1,\alpha,\alpha^2,\ldots,\alpha^{n-2})
$
be an ordered polynomial basis of $\mathbb F_{q^{n-1}}$ over $\mathbb F_q$. Since
\[
k=n-\delta+1\le n-2,
\]
Lemma~\ref{lemd} applies with
\[
a_i=\alpha^i,\qquad 1\le i\le k.
\]
It yields a matrix
$
A\in\mathbb F_{q^{n-1}}^{k\times(n-1-k)}
$
whose first column equals $(\alpha^k,\alpha^{k-1},\ldots,\alpha)^t.$
Consequently,

$$
G=\left( \begin{array}{ccccccccc}
	1& & & & \alpha^k&\alpha_{0,k+1}&\dots&\alpha_{0,n-2}\\
	&1& & &  \alpha^{k-1}&\alpha_{1,k+1}&\dots&\alpha_{1,n-2} \\
	& &\ddots&&\vdots&\vdots&\dots&\vdots\\
	& & & 1& \alpha&\alpha_{k-1,k+1}&\dots&\alpha_{k-1,n-2}
\end{array} \right) \in \mathbb{F}_{q^{n-1}}^{k \times (n-1)}
$$is the generator matrix of a systematic
$
[(n-1)\times(n-1),k,\delta-1]_q
$
MRD code.

To accommodate the last column of the Ferrers diagram, we append one additional column and define

$$
\hat{G}=\left( \begin{array}{ccccccccc}
	1& & & & \alpha^k&\alpha_{0,k+1}&\dots&\alpha_{0,n-2}&0\\
	&1& & &  \alpha^{k-1}&\alpha_{1,k+1}&\dots &\alpha_{1,n-2}&\alpha_{1,n-1} \\
	& &\ddots&&\vdots&\vdots&\dots&\vdots&\vdots\\
	& & & 1& \alpha&\alpha_{k-1,k+1}&\dots&\alpha_{k-1,n-2}&\alpha_{k-1,n-1}
\end{array} \right) \in \mathbb{F}_{q^{n-1}}^{k \times n}. 
$$Its last $k-1$ rows form the matrix

$$
\hat{G}_{(k-1) \times (n-1)}=\left( \begin{array}{ccccccc}
	1& & &  \alpha^{k-1}&\alpha_{1,k+1}&\dots &\alpha_{1,n-1} \\
	&\ddots&&\vdots&\vdots&\dots&\vdots\\
	& & 1& \alpha&\alpha_{k-1,k+1}&\dots &\alpha_{k-1,n-1}
\end{array} \right)
$$which is the generator matrix of a systematic $
[(n-1)\times(n-1),k-1,\delta]_q$ MRD code.

We now define

\begin{align*}
	\mathcal{C}=\left\{ \left(  
	\begin{array}{c} 
	\psi_{n-1}(\hat{c}) \\
	\hline
	\renewcommand{\arraystretch}{0.6}
	\begin{array}{cccc}
			0&\dots&0&u_{00}\\
			\vdots&&\vdots&\vdots\\
			0&\dots&0&u_{0,{\gamma_{0}-1}}\\
			\vdots&&\vdots&\vdots\\
			0&\dots&0&0\\
	\end{array} 
    \end{array}  
    \right) 
    \in \mathbb{F}_q^{m\times n}: \hat{c}=u\hat{G} \in \mathbb{F}_{q^{n-1}}^n, u=(u_0,\ldots,u_{k-1}) \in \mathbb{F}_{q^{n-1}}^k
    \right\},
\end{align*}
where
$\psi_{n-1}(u_i)
=(u_{i,0},\ldots,u_{i,\gamma_i-1},0,\ldots,0)^t.$ The column-length parameters satisfy two constraints:

\[
\gamma_0\le1,
\qquad
\gamma_i-\gamma_{i-1}\le1,
\quad
1\le i\le k-1.
\]

We first prove that every nonzero codeword of $\mathcal C$ has rank at least $\delta$.

\medskip
\noindent
\textbf{Case 1: $u_0\neq0$.}

The first $n-1$ columns of $\psi_{n-1}(\hat c)$ form a codeword of the systematic
$
[(n-1)\times(n-1),k,\delta-1]_q
$
MRD code generated by $G$. Hence, this submatrix has rank at least $\delta-1$. Since the appended last column contains the nonzero vector
$
(u_{00},\ldots,u_{0,\gamma_0-1},0,\ldots,0)^t,
$
we obtain

\[
\operatorname{rank}(C)\ge
(\delta-1)+1
=\delta.
\]

\medskip
\noindent
\textbf{Case 2: $u_0=0$.}

In this case,

\[
\hat c
=
(u_1,\ldots,u_{k-1})
\hat G_{(k-1)\times(n-1)},
\]which is a nonzero codeword of the systematic $
[(n-1)\times(n-1),k-1,\delta]_q
$ MRD code generated by
$\hat G_{(k-1)\times(n-1)}$. Therefore, $\operatorname{rank}(C)\ge\delta.$

Combining the above two cases shows that every nonzero codeword of $\mathcal C$ has minimum rank distance at least $\delta$.

We now verify that every codeword is supported on the prescribed Ferrers diagram. Let
$\mathcal F'$
be the Ferrers diagram determined by the valid lengths of the columns of the constructed code. By Lemma~\ref{lem2}, $\mathcal F'$ supports an optimal $
\left[
\mathcal F',
\sum_{i=0}^{k-1}\gamma_i,
\delta
\right]_q
$ FDRM code.

We prove that $
\mathcal F'
\subseteq
\mathcal F,$ or equivalently, $
\gamma_i'\le\gamma_i, \ 0\le i\le n-1.$

For the first $k$ columns,

\[
\gamma_i'=\gamma_i,
\qquad
0\le i\le k-1.
\]

For the $k$-th column,

\[
c_k=\sum_{j=0}^{k-1}u_j\alpha^{k-j},
\]

and the valid length of
$\psi_{n-1}(c_k)$
is bounded above by

\[
\min
\left\{
\max_{j\in[k]}
(\gamma_j-j+k),
\,n-1
\right\},
\]

which coincides with $\gamma_k$ by Condition~(2).

Similarly, for
$k+1\le i\le n-2$,
every column has valid length at most
$n-1$,
and Condition~(3) yields $
\gamma_i'\le\gamma_i.
$

Finally, $
\gamma_{n-1}'
=
n-1+\gamma_0,
$ which is bounded above by
$\gamma_{n-1}$
thanks to Condition~(4).

Hence
$\mathcal F'\subseteq\mathcal F$.
Therefore,
$\mathcal C$
is an $
\left[
\mathcal F,
\sum_{i=0}^{k-1}\gamma_i,
\delta
\right]_q$ FDRM code.

Since

\[
\sum_{i=0}^{k-1}\gamma_i
=
v_0(\mathcal F,\delta)
=
v_{\min}(\mathcal F,\delta),
\]the Singleton-type bound of Lemma~\ref{lem1} is attained. Consequently, the constructed code is optimal.

\end{proof}

Theorem~\ref{theo6} provides a general construction of optimal FDRM codes for Ferrers diagrams whose rightmost $\delta-2$ columns each contain at least $n-1$ dots. While this condition is sufficient to guarantee optimality, it excludes a substantial number of Ferrers diagrams for which the existence problem remains unresolved.

The main objective of this subsection is to remove much of this restriction. We show that the requirement of having at least $n-1$ dots in each of the rightmost $\delta-2$ columns can be relaxed to only $n-r$ dots, where $r<k$. This improvement significantly broadens the range of Ferrers diagrams for which we can explicitly construct optimal FDRM codes, thereby extending the applicability of the previous theorem.

The following lemma provides a family of systematic MRD generator matrices possessing a recursive structural property that will be crucial in the proof of Theorem~\ref{theo4}.

\begin{lem}\label{lemz}
Let $\eta$, $r$, $\kappa$, $\mu$, and $d$ be positive integers satisfying
\[
\kappa=\eta-r-d+1,\qquad
r<\kappa,\qquad
\eta\le \mu+r.
\]Then there exists a matrix $
G\in\mathbb F_{q^\mu}^{\,\kappa\times\eta}$ which generates a systematic $
[\mu\times\eta,\kappa,d+r]_q$ MRD code and has the following recursive property.

For every integer $0\le i\le r,$
delete the first $i$ rows, the leftmost $i$ columns, and the rightmost $r-i$ columns of $G$. The resulting submatrix is the generator matrix of a systematic $
[\mu\times(\eta-r),\,\kappa-i,\,d+i]_q$ MRD code.

More precisely, the matrix $G$ has the form

\[
G=
\left(
\begin{array}{cccccccccccccc}
1& & & & & & &
\alpha^\kappa&
\alpha_{0,\kappa+1}&
\cdots&
\alpha_{0,\eta-r-1}&
0&
\cdots&
0
\\
&
1&
&
&
&
&
&
\alpha^{\kappa-1}&
\alpha_{1,\kappa+1}&
\cdots&
\alpha_{1,\eta-r-1}&
\alpha_{1,\eta-r}&
\cdots&
0
\\
&
&
\ddots&
&
&
&
&
\vdots&
\vdots&
&
\vdots&
\vdots&
&
\vdots
\\
&
&
&
1&
&
&
&
\alpha^{\kappa-r+1}&
\alpha_{r-1,\kappa+1}&
\cdots&
\alpha_{r-1,\eta-r-1}&
\alpha_{r-1,\eta-r}&
\cdots&
0
\\
&
&
&
&
1&
&
&
\alpha^{\kappa-r}&
\alpha_{r,\kappa+1}&
\cdots&
\alpha_{r,\eta-r-1}&
\alpha_{r,\eta-r}&
\cdots&
\alpha_{r,\eta-1}
\\
&
&
&
&
&
\ddots&
&
\vdots&
\vdots&
&
\vdots&
\vdots&
&
\vdots
\\
&
&
&
&
&
&
1&
\alpha&
\alpha_{\kappa-1,\kappa+1}&
\cdots&
\alpha_{\kappa-1,\eta-r-1}&
\alpha_{\kappa-1,\eta-r}&
\cdots&
\alpha_{\kappa-1,\eta-1}
\end{array}
\right).
\]

\end{lem}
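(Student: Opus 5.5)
The plan is to build $G$ one layer at a time, working from the innermost code outwards, and to check the MRD property of each layer with Lemma~\ref{lemb}.

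\textbf{Indexing.} For $0\le i\le r$, let $G^{(i)}$ denote the submatrix of $G$ on rows $i,\dots,\kappa-1$ and columns $i,\dots,\eta-r+i-1$. Its length is $\eta-r$ and its dimension is $\kappa-i$, so the Singleton bound gives distance $\eta-r-\kappa+i+1=d+i$, matching the statement. In particular $G^{(0)}$ has distance $d$. The distance $d+r$ claimed for $G$ itself also follows from the parameters: $G$ is a $\kappa\times\eta$ systematic matrix, so MRD means distance $\eta-\kappa+1=d+r$.

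\textbf{Zero pattern.} Row $j<r$ of $G$ has support inside columns $0,\dots,\eta-r+j-1$. Hence row $j$ vanishes on the last $r-j$ columns, which is exactly the staircase of zeros displayed in the statement. For every $i$, the matrix $G^{(i)}$ is again systematic, and its first non-systematic column is $(\alpha^{\kappa-i},\dots,\alpha)^t$. The hypothesis $\eta\le\mu+r$ gives $\mu\ge\eta-r$, so all codes involved have $\mu\ge$ length. Also $r<\kappa$ guarantees that $\kappa-i\ge1$ for every $i\le r$.

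\textbf{Base layer and inductive step.} Start with $G^{(r)}$, the $(\kappa-r)\times(\eta-r)$ block in the bottom-right corner. Lemma~\ref{lemd} applies with $a_t=\alpha^{t}$, since $1,\alpha,\dots,\alpha^{\kappa-r}$ are $\mathbb F_q$-independent; it yields a systematic MRD generator whose first redundant column is $(\alpha^{\kappa-r},\dots,\alpha)^t$. Now suppose $G^{(i+1)}$ has been built, with $i+1\le r$. To pass to $G^{(i)}$:
\begin{enumerate}
\item Add row $i$ and the systematic column $i$.
\item Delete column $\eta-r+i$, which is the last column of $G^{(i+1)}$.
\item Place $\alpha^{\kappa-i}$ in the first redundant column of the new row, and leave its remaining entries $\alpha_{i,l}$, for $\kappa<l\le\eta-r+i-1$, as unknowns.
\end{enumerate}
The old rows keep all their entries. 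Consequently the code generated by $G^{(i)}$, restricted to $u_i=0$, is a punctured copy of the code generated by $G^{(i+1)}$. Puncturing one column of an MRD code of distance $d+i+1$ leaves distance at least $d+i$. So every codeword with $u_i=0$ already has rank at least $d+i$.

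\textbf{Choosing the new row.} It remains to choose the unknowns $\alpha_{i,l}\in\mathbb F_{q^\mu}$ so that every $(\kappa-i)$-minor of $G^{(i)}B$ involving the new row is nonzero, for every $B\in UT^*_{\eta-r}(q)$. By Lemma~\ref{lemb} this makes $G^{(i)}$ MRD. Expanding such a minor along the new row makes it an affine-linear function of the unknowns. Its coefficients are minors of $G^{(i+1)}B'$, where $B'$ is the corresponding unipotent matrix, and these are nonzero by the induction hypothesis. So the condition "the minor vanishes" cuts out a proper affine hyperplane in the space of unknowns. The number of pairs (matrix $B$, choice of columns) is finite. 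Therefore a counting argument, or a direct choice as in Lemma~\ref{lemd}, should produce values avoiding all of these hyperplanes.

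\textbf{Main obstacle.} The main obstacle is exactly this avoidance step. The count of bad hyperplanes, roughly $q^{\binom{\eta-r}{2}}\binom{\eta-r}{\kappa-i}$, has to be compared with $q^{\mu}$ per unknown. With only $\mu\ge\eta-r$ this naive bound does not suffice. My first attempt would be to avoid counting altogether: reuse the internal construction behind Lemma~\ref{lemd} from \cite{ref 2}, which fills columns of a systematic MRD generator with prescribed first column. I would verify that applying it to the $(\kappa-i)\times(\eta-r)$ code, with the rows coming from $G^{(i+1)}$ held fixed, still succeeds. If that fails, the fallback is a Lemma~\ref{lemc}-type ansatz $\alpha_{j,l}=a_{j,l}\alpha^{l-j-1}$: this pattern is invariant under deleting the first $i$ rows and columns, so every layer can be handled by a single leading-coefficient argument. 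That argument needs all relevant minors of $(a_{j,l})$ to be nonzero and may need a larger $\mu$.
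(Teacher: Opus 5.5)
\textbf{Genuine gap.} The step you leave open (choosing the new row so that every codeword with $u_i\neq0$ has rank at least $d+i$) is the whole content of the lemma, so the proposal does not yet prove it.

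Neither of your two suggestions closes it. Lemma~\ref{lemd} only produces \emph{some} systematic MRD code with a prescribed first redundant column. In your step, every row except row $i$ is already frozen by $G^{(i+1)}$, so that lemma does not apply. The Lemma~\ref{lemc}-type ansatz proves a weaker statement: it needs a bound like $m\ge kn-k^2$, and $q$ large enough that all minors of $(a_{j,l})$ are nonzero. The lemma claims every $q$ and every $\mu\ge\eta-r$.

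\textbf{The missing idea.} Replace generic avoidance by a family of codes closed under the two operations the recursion actually performs.
\begin{itemize}
\item Rows $i+1,\dots,\kappa-1$ of $G^{(i)}$, with column $i$ deleted, form the systematic generator matrix of the code $\mathcal S_i$ obtained by shortening $\mathcal C_i=\langle G^{(i)}\rangle$ at its first coordinate. $G^{(i+1)}$ is this matrix with one column appended. So what is needed is an MRD one-coordinate extension of $\mathcal S_i$.
\item If $\mathcal C_i=\mathcal G_{\kappa-i}(g_0,\dots,g_{\eta-r-1})$ is a Gabidulin code, then $\mathcal S_i=\mathcal G_{\kappa-i-1}(L(g_1),\dots,L(g_{\eta-r-1}))$ with $L(x)=x^q-g_0^{q-1}x$. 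Indeed, a linearized polynomial of $q$-degree $<\kappa-i$ vanishing at $g_0$ has the form $h\circ L$. The $L(g_j)$ stay $\mathbb F_q$-independent because $\ker L=\mathbb F_q g_0$.
\item Since $\eta-r-1<\mu$, you can add an evaluation point outside $\langle L(g_1),\dots,L(g_{\eta-r-1})\rangle_{\mathbb F_q}$.
\item By uniqueness of systematic generator matrices, the extended Gabidulin code has the systematic generator of $\mathcal S_i$ plus one column. That column is exactly the new column of $G$ in position $\eta-r+i$ (zero in rows $0,\dots,i$). The first redundant column $(\alpha^{\kappa-i-1},\dots,\alpha)^t$ is inherited.
\end{itemize}

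So run the induction outside-in, which is the order the paper uses. Start from a Gabidulin code of length $\eta-r$ whose systematic generator has first redundant column $(\alpha^\kappa,\dots,\alpha)^t$. Such a code exists: the $[\kappa+1,\kappa]$ code generated by $(I_\kappa\mid(\alpha^\kappa,\dots,\alpha)^t)$ is MRD, since $1,\alpha,\dots,\alpha^\kappa$ are independent. Its dual is therefore spanned by a vector of rank $\kappa+1$, i.e.\ it is a one-dimensional Gabidulin code. Duals of Gabidulin codes are Gabidulin, so the code itself is Gabidulin and extends by further evaluation points. Then alternately shorten and extend. This gives the recursive property for every $q$ and every $\mu\ge\eta-r$, with no counting.

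Be aware that the paper's own proof appends the columns in this outside-in order but only asserts that each new layer is MRD. So it does not supply this step either.

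\textbf{The first assertion is false.} Do not ``verify'' that $G$ itself generates a $[\mu\times\eta,\kappa,d+r]_q$ MRD code through the identity $\eta-\kappa+1=d+r$; that identity only says what being MRD would mean. With the displayed zero pattern the assertion fails for every $r\ge1$, as your own zero-pattern paragraph shows. Row $0$ of $G$ is supported on column $0$ and columns $\kappa,\dots,\eta-r-1$, hence on at most $1+(\eta-r-\kappa)=d$ positions. It is therefore a codeword of rank at most $d<d+r$. Only the recursive property is used in Theorem~\ref{theo4}, and that is the part a correct proof can and should establish.
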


\begin{proof}

We construct the desired generator matrix recursively by starting from a systematic MRD code and successively appending $r$ suitably chosen columns while preserving a nested MRD structure.

Let $\{1,\alpha,\alpha^2,\ldots,\alpha^{\mu-1}\}$ be an ordered polynomial basis of the extension field
$\mathbb F_{q^\mu}$ over $\mathbb F_q$. We first consider the matrix

\[
G_0=
\left(
\begin{array}{cccccccc}
1& & & &\alpha^\kappa&\alpha_{0,\kappa+1}&\cdots&\alpha_{0,\eta-r-1}\\
&1& & &\alpha^{\kappa-1}&\alpha_{1,\kappa+1}&\cdots&\alpha_{1,\eta-r-1}\\
&&\ddots&&\vdots&\vdots&&\vdots\\
&&&1&\alpha&\alpha_{\kappa-1,\kappa+1}&\cdots&
\alpha_{\kappa-1,\eta-r-1}
\end{array}
\right),
\]where
$\alpha_{i,j}\in\mathbb F_{q^\mu}$.
By Lemma~\ref{lemd}, the matrix $G_0$ generates a systematic
$
[\mu\times(\eta-r),\kappa,d]_q
$
MRD code.

Our goal is to extend $G_0$ by adding $r$ columns while preserving a suitable hierarchy of MRD subcodes.

For each integer $0\le i\le r-1,$
suppose that a matrix
$
G_i
\in
\mathbb F_{q^\mu}^{\kappa\times(\eta-r+i)}
$
has already been constructed and satisfies the following property: after deleting the first $i$ rows and the first $i$ columns, the resulting
$(\kappa-i)\times(\eta-r)$
submatrix is the generator matrix of a systematic
$
[\mu\times(\eta-r),\,\kappa-i,\,d+i]_q
$
MRD code.

To construct the next matrix, define

\[
H_{i+1}
=
\left(
\begin{array}{c}
0\\
\vdots\\
0\\
\alpha_{i+1,\eta-r+i}\\
\vdots\\
\alpha_{\kappa-1,\eta-r+i}
\end{array}
\right)
\in
\mathbb F_{q^\mu}^{\kappa\times1},
\]where the first $i+1$ entries are zero. We then append this column and obtain
$
G_{i+1}
=
\bigl(
G_i\mid H_{i+1}
\bigr).
$

By construction, deleting the first $i+1$ rows, the leftmost $i+1$ columns and the newly appended rightmost column leaves exactly the lower-right
$
(\kappa-i-1)\times(\eta-r)
$
submatrix of $G_{i+1}$, which is again a generator matrix of a systematic $
[\mu\times(\eta-r),\kappa-i-1,d+i+1]_q$ MRD code.
\end{proof}

From the recursive construction established in Lemma~\ref{lemz}, we now derive the main generalization of Theorem~\ref{theo6}. The key improvement consists in relaxing the requirement on the rightmost $\delta-2$ columns of the Ferrers diagram. Instead of requiring each of these columns to contain at least $n-1$ dots, we show that it is sufficient to assume the weaker threshold $n-r$, where $r<k$. Consequently, the theorem below considerably enlarges the class of Ferrers diagrams supporting optimal FDRM codes.

\begin{theo}\label{theo4}

Let $\delta$, $n$, $r$, and $k$ be positive integers satisfying $
r+2\le \delta\le n-r,
$
and let $G$ be the matrix constructed in Lemma~\ref{lemz}. Set
\[
d=\delta-r,\qquad
k=\kappa=n-\delta+1,\qquad
\eta=n,\qquad
\mu=n-r.
\]Let
$
\mathcal{F}
=
[\gamma_0,\gamma_1,\ldots,\gamma_{n-1}]
$
be an $m\times n$ Ferrers diagram satisfying the following conditions.

\begin{enumerate}
\item
The first $k$ columns of $\mathcal{F}$ form an initially convex Ferrers subdiagram.

\item
\[
\gamma_k=
\min\left\{
\max_{j\in[k]}
(\gamma_j-j+k),
\,n-r
\right\}.
\]

\item
\[
\gamma_{k+1}\ge n-r.
\]

\item
For every $i\in[r],$
\[
\gamma_{n-r+i}
\ge
n-r+\sum_{j=0}^{i}\gamma_j.
\]

\end{enumerate}Then, for every prime power $q$, there exists an optimal
$
\left[
\mathcal{F},
\sum_{i=0}^{k-1}\gamma_i,
\delta
\right]_q
$
Ferrers diagram rank-metric code.

\end{theo}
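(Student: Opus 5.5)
Following the template of Theorem~\ref{theo6}, we take the matrix $G\in\mathbb F_{q^{n-r}}^{k\times n}$ of Lemma~\ref{lemz} with $\eta=n$, $\mu=n-r$, $\kappa=k$ and $d=\delta-r$. The hypotheses of that lemma hold: $\kappa=\eta-r-d+1=n-\delta+1$, the condition $r<\kappa$ follows from $\delta\le n-r$, and $\eta\le\mu+r$ holds with equality. The code is then built as follows. Codewords $\hat c=u G$ lie in $\mathbb F_{q^{n-r}}^{\,n}$, with $u=(u_0,\ldots,u_{k-1})$ and $\psi_{n-r}(u_j)=(u_{j,0},\ldots,u_{j,\gamma_j-1},0,\ldots,0)^t$. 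We write $\psi_{n-r}(\hat c)$ in the top $n-r$ rows. For each $i\in[r]$, the column of index $n-r+i$ is extended downward by stacking the vectors $\psi_{n-r}(u_0),\ldots,\psi_{n-r}(u_i)$, truncated to their valid lengths $\gamma_0,\ldots,\gamma_i$, into rows below row $n-r-1$. This generalizes the single appended block $(u_{00},\ldots,u_{0,\gamma_0-1})$ of Theorem~\ref{theo6}. The dimension is plainly $\sum_{j=0}^{k-1}\gamma_j=v_0(\mathcal F,\delta)$, so by Lemma~\ref{lem1} it only remains to check the distance and the support.

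\textbf{Distance.} Let $i_0$ be the smallest index with $u_{i_0}\ne0$. If $i_0\ge r$, then $u_0=\cdots=u_{r-1}=0$, and by the case $i=r$ of Lemma~\ref{lemz} the first $n-r$ columns form a nonzero codeword of an MRD code of distance $d+r=\delta$. This gives rank at least $\delta$. If $i_0<r$, delete the first $i_0$ rows of $G$, which contribute nothing, together with the leftmost $i_0$ columns and the rightmost $r-i_0$ columns. Lemma~\ref{lemz} says the result generates a systematic MRD code of distance $d+i_0$, so the corresponding columns of the top block have rank at least $d+i_0=\delta-r+i_0$.

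We still need $r-i_0$ further independent columns, and they come from the appended columns $n-r+i$ for $i_0\le i\le r-1$. Each such column carries, below row $n-r-1$, the nonzero block $\psi(u_{i_0})$ placed in its own private rows. I will arrange the stacking so that column $n-r+i$ is the only one with a nonzero entry in its copy of $u_{i_0}$, that is, the $u$-blocks sit in disjoint rows across the different appended columns (a staircase pattern). Then column $n-r+i$ has a nonzero entry in a row where every other column vanishes, so these $r-i_0$ columns are independent of each other and of the remaining columns. The total rank is therefore at least $\delta-r+i_0+(r-i_0)=\delta$.

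\textbf{Support.} This is a bookkeeping check, as in Theorem~\ref{theo6}. Column $j<k$ has valid length $\gamma_j$. Column $k$ equals $\sum_j u_j\alpha^{k-j}$, whose valid length is at most $\min\{\max_j(\gamma_j-j+k),\,n-r\}=\gamma_k$ by Condition~(2). The columns $k+1,\ldots,n-r-1$ have length at most $n-r\le\gamma_{k+1}$ by Condition~(3). Column $n-r+i$ has length at most $n-r+\sum_{j\le i}\gamma_j$, which is at most $\gamma_{n-r+i}$ by Condition~(4). Conditions~(1) and~(2) also keep the diagram monotone, so the support lies in $\mathcal F$.

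\textbf{Main obstacle.} The delicate step is the independence argument in the case $i_0<r$. The stacked blocks must really occupy distinct rows for different appended columns, while each column's height still stays within $n-r+\sum_{j\le i}\gamma_j$. If the disjoint-row placement overshoots the height bound, my fallback is to reuse rows cumulatively. I would then argue triangularity instead: column $n-r+i$ is the first appended column containing a nonzero copy of $u_{i}$-data in its new rows. This yields an echelon structure on the appended part, which again gives $r-i_0$ extra rank on top of the $\delta-r+i_0$ from Lemma~\ref{lemz}.
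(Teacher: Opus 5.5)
Your overall architecture matches the paper's: the same $G$ from Lemma~\ref{lemz}, the same top block $\psi_{n-r}(uG)$, appended blocks in the last $r$ columns, the same case split on the first nonzero $u_{i_0}$, and the same support bookkeeping. However, the step you flag as the main obstacle is genuinely unresolved, and neither of your arrangements works.

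The ``private rows'' staircase is incompatible with Condition~(4). When that condition holds with equality, the appended part of column $n-r+i$ must fill exactly the rows $n-r,\ldots,n-r+\sum_{j\le i}\gamma_j-1$. So column $n-r+1$ necessarily occupies the rows that hold column $n-r$'s copy of $u_0$, and it carries there an entry of $u_0$ or of $u_1$, which can be nonzero. Hence no fixed placement gives every copy of $u_{i_0}$ a row in which all other columns vanish.

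Your fallback keys the echelon structure on ``$u_i$-data'', but for $i>i_0$ the vector $u_i$ may be zero; only $u_{i_0}$ is guaranteed nonzero. Take cumulative stacking, where column $n-r+i$ is the stack of column $n-r+i-1$ with $\overline{\psi}_{n-r}(u_i)$ appended below. Choosing $u_{i_0}\neq0=u_{i_0+1}$ makes the appended parts of columns $n-r+i_0$ and $n-r+i_0+1$ identical. Their rank is then less than $r-i_0$, so your count $(\delta-r+i_0)+(r-i_0)$ is not justified.

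The missing idea is the reversed, shifted order in the paper's block-Toeplitz display. Column $n-r+i$ carries, from top to bottom, $\overline{\psi}_{n-r}(u_i),\overline{\psi}_{n-r}(u_{i-1}),\ldots,\overline{\psi}_{n-r}(u_0)$, still of total height $\sum_{j\le i}\gamma_j$. Suppose $u_0=\cdots=u_{i_0-1}=0$ and let $\ell\ge1$ be the valid length of $\psi_{n-r}(u_{i_0})$. Then for $i\ge i_0$ the lowest nonzero entry of the appended part of column $n-r+i$ lies in the copy of $u_{i_0}$, at depth $\gamma_i+\cdots+\gamma_{i_0+1}+\ell$ (the sum is empty for $i=i_0$). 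This depth strictly increases with $i$ because every $\gamma_j\ge1$. So these $r-i_0$ vectors are in echelon form and linearly independent.

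The columns $i_0,\ldots,n-r+i_0-1$ used in Lemma~\ref{lemz} have zero appended part, since the appended columns among them involve only $u_0,\ldots,u_{i_0-1}$. Therefore $\operatorname{rank}\begin{pmatrix}X&Y_1\\0&Y_2\end{pmatrix}\ge\operatorname{rank}X+\operatorname{rank}Y_2$ gives rank at least $(\delta-r+i_0)+(r-i_0)=\delta$.

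There is also a minor slip in the case $i_0\ge r$. The case $i=r$ of Lemma~\ref{lemz} concerns columns $r,\ldots,n-1$, not the first $n-r$ columns; the latter only guarantee distance $\delta-r$.
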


\begin{proof}

We follow the same general strategy as in the proof of Theorem~\ref{theo6}, the essential difference being that the recursive MRD structure established in Lemma~\ref{lemz} allows us to relax the conditions imposed on the rightmost columns of the Ferrers diagram.

Let
\[
U=\Bigl\{
(u_0,\ldots,u_{\kappa-1})\in\mathbb F_{q^{n-r}}^\kappa :
\psi_{n-r}(u_i)=
(u_{i,0},\ldots,u_{i,\gamma_i-1},0,\ldots,0)^t,\;
i\in[\kappa]
\Bigr\},
\]
where the first $\kappa$ column lengths satisfy
\[
\gamma_0\le 1,\qquad
\gamma_i-\gamma_{i-1}\le 1,\qquad
1\le i\le \kappa-1.
\]For every $0\le i\le r-1$, define the truncated projection
\[
\overline{\psi}_{n-r}(u_i)
=
(u_{i,0},\ldots,u_{i,\gamma_i-1})^t .
\]
Using the generator matrix $G$ from Lemma~\ref{lemz}, we define the code
$\mathcal C$ by

\begin{align*}
\mathcal{C}=
\left\{
\left(
\begin{array}{c}
\psi_{n-r}(c)\\
\hline
\renewcommand{\arraystretch}{0.6}
\begin{array}{ccccccc}
0&\cdots&0&
\overline{\psi}_{n-r}(u_0)&
\overline{\psi}_{n-r}(u_1)&
\cdots&
\overline{\psi}_{n-r}(u_{r-1})
\\
0&\cdots&0&
0&
\overline{\psi}_{n-r}(u_0)&
\cdots&
\overline{\psi}_{n-r}(u_{r-2})
\\
\vdots&&\vdots&
\vdots&
\vdots&&
\vdots
\\
0&\cdots&0&
0&
0&
\cdots&
\overline{\psi}_{n-r}(u_0)
\\
\vdots&&\vdots&
\vdots&
\vdots&&
\vdots
\\
0&\cdots&0&
0&
0&
\cdots&
0
\end{array}
\end{array}
\right)
:
c=uG,\;
u\in U
\right \}.
\end{align*}By construction, every matrix in $\mathcal C$ is supported on a Ferrers diagram whose first $k$ columns coincide with the prescribed column lengths
$\gamma_0,\ldots,\gamma_{k-1}$.

We first establish that every nonzero codeword of $\mathcal C$ has rank at least $\delta$. Since
$
\delta\le n-r,
$
we have
$
r\le n-\delta< n-\delta+1=\kappa.
$
Let

\[
i^*
=
\min
\left\{
i\in[\kappa]:
u_i\neq0
\right\}.
\]Then every nonzero codeword can be written as $
c=
(0,\ldots,0,u_{i^*},\ldots,u_{\kappa-1})G.
$

The verification naturally splits into two cases.

\medskip

\noindent
\textbf{Case 1. $i^*<r$.}

Delete the first $i^*$ rows, the leftmost $i^*$ columns, and the rightmost
$r-i^*$ columns of the generator matrix $G$.
According to Lemma~\ref{lemz}, the remaining matrix generates a systematic $
[(n-r)\times(n-r),
\kappa-i^*,
\delta-r+i^*]_q$ MRD code.

Consequently,

\[
\operatorname{rank}
\!\left(
\psi^{*}_{n-r}(uG)
\right)
\ge
\delta-r+i^*,
\]where
$\psi^{*}_{n-r}(uG)$
denotes the corresponding truncated matrix.

On the other hand, the replicated block generated by
$\overline{\psi}_{n-r}(u_{i^*})$
occupies the last
$r-i^*$
columns of the constructed codeword and contributes exactly
$r-i^*$
additional independent rows.

Therefore, $
\operatorname{rank}(C)
\ge
(\delta-r+i^*)
+
(r-i^*)
=
\delta.
$

\medskip

\noindent
\textbf{Case 2. $i^*\ge r$.}

Deleting the first $r$ rows together with the first $r$ columns of $G$, Lemma~\ref{lemz} shows that the remaining matrix is the generator matrix of a systematic $
[(n-r)\times(n-r),
\kappa-r,
\delta]_q
$ MRD code.

Hence every nonzero codeword satisfies $
\operatorname{rank}(C)\ge\delta.
$

Combining the above two cases proves that every nonzero codeword of
$\mathcal C$
has minimum rank distance at least $\delta$.

We next verify that the support of every codeword is contained in the prescribed Ferrers diagram.

Applying Lemma~\ref{lem2} with
\[
\lambda_i=\gamma_i,\qquad 0\le i\le k-1,
\]
produces an optimal
$
\left[\mathcal{F}',\sum_{i=0}^{k-1}\gamma_i,\delta\right]_q
$
FDRM code, where
$
\mathcal{F}'
=
[\gamma_0',\gamma_1',\ldots,\gamma_{n-1}']
$
satisfies
\[
\gamma_i'=\gamma_i,\qquad
0\le i\le k-1.
\]It therefore remains to prove that
$
\mathcal{F}'\subseteq\mathcal{F},
$
or equivalently, $\gamma_i'\le\gamma_i,\ 0\le i\le n-1.$

We verify this inequality column by column.

\medskip

\noindent
\textbf{Columns $0,\ldots,k-1$.}

By construction,
\[
\gamma_i'=\gamma_i,
\qquad
0\le i\le k-1.
\]Hence the desired inequality is immediate.

\medskip

\noindent
\textbf{Column $k$.}

Since
\[
c_k=\sum_{j=0}^{k-1}u_j\alpha^{\,k-j},
\]
we obtain

\[
\psi_{n-r}(c_k)
=
\sum_{j=0}^{k-1}
\psi_{n-r}(u_j\alpha^{\,k-j}).
\]
For every
$j\in[k]$,

\[
u_j
=
u_{j,0}
+
u_{j,1}\alpha
+\cdots+
u_{j,\gamma_j-1}\alpha^{\gamma_j-1},
\]
and therefore each vector
$\psi_{n-r}(u_j\alpha^{\,k-j})$
has valid length at most $
\min\{
n-r,\,
\gamma_j-j+k
\}.
$

Define

\[
M
=
\max_{j\in[k]}
(\gamma_j-j+k).
\]If
$M\le n-r$,
then the valid length of
$\psi_{n-r}(c_k)$
is at most
$M$.
Otherwise, it is bounded above by
$n-r$.
Consequently,

\[
\gamma_k'
\le
\min
\left\{
\max_{j\in[k]}
(\gamma_j-j+k),
\,n-r
\right\},
\]which equals
$\gamma_k$
by Condition~(2).

Hence $
\gamma_k'
\le
\gamma_k.
$

\medskip

\noindent
\textbf{Columns $k+1,\ldots,n-r-1$.}

For every index
\[
k+1\le i\le n-r-1,
\]
the corresponding coordinate belongs to
$\mathbb F_{q^{n-r}}$.
Therefore every column of
$\psi_{n-r}(c_i)$
has valid length at most
$n-r$.

Since Condition~(3) guarantees $
\gamma_i\ge n-r,$ we immediately obtain $
\gamma_i'
\le
\gamma_i.$

\medskip

\noindent
\textbf{Columns $n-r,\ldots,n-1$.}

By construction of the replicated block,

\[
\gamma_{n-r+i}'
=
(n-r)
+
\sum_{j=0}^{i}\gamma_j,
\qquad
0\le i\le r-1.
\]Condition~(4) states precisely that

\[
\gamma_{n-r+i}
\ge
(n-r)
+
\sum_{j=0}^{i}\gamma_j,
\]which yields $
\gamma_{n-r+i}'
\le
\gamma_{n-r+i},\ 0\le i\le r-1.
$

Therefore, $
\mathcal{F}'\subseteq\mathcal{F},
$
which proves that every codeword of $\mathcal C$ is supported on the Ferrers diagram $\mathcal F$.

Combining this inclusion with the minimum-rank argument established above, we conclude that $\mathcal C$ is an $
\left[
\mathcal F,
\sum_{i=0}^{k-1}\gamma_i,
\delta
\right]_q$ FDRM code.

Finally, Lemma~\ref{lem1} yields
\[
v_{\min}(\mathcal F,\delta)
=
\sum_{i=0}^{k-1}\gamma_i,
\]
so the dimension of $\mathcal C$ meets the Singleton-type upper bound for Ferrers diagram rank-metric codes. Consequently, $\mathcal C$ is optimal.

This completes the proof.
\end{proof}

\begin{example}\label{example3}

We illustrate the applicability of Theorem~\ref{theo4} by considering the Ferrers diagram
\[
\mathcal{F}
=
[1,2,2,3,3,4,7,9,9,10,12],
\]
which is of size $12\times11$.

Choose the parameter
\[
r=2.
\]
Since
\[
n=11,\qquad
\delta=6,\qquad
k=n-\delta+1=6,
\]
we have
\[
r+2=4\le6=\delta\le n-r=9,
\]
and therefore the basic numerical assumptions of Theorem~\ref{theo4} are satisfied.

Moreover,

\begin{itemize}
\item the first $k=6$ columns, $[1,2,2,3,3,4],$ form an initially convex Ferrers subdiagram;

\item
\[
\gamma_6
=
7
=
\min\!\left\{
\max_{0\le j\le5}
(\gamma_j-j+6),
\,9
\right\};
\]

\item
\[
\gamma_7=9\ge9=n-r;
\]

\item for the two remaining columns,
\[
\gamma_9=10
\ge
9+\gamma_0,
\]
and
\[
\gamma_{10}=12
\ge
9+\gamma_0+\gamma_1,
\]
so Condition~(4) also holds.
\end{itemize}

Hence every hypothesis of Theorem~\ref{theo4} is fulfilled. Consequently, there exists an optimal $
\left[
\mathcal{F},
15,
6
\right]_q
$ FDRM code over every finite field $\mathbb F_q$.

This example illustrates how Theorem~\ref{theo4} applies to Ferrers diagrams that are not covered by Theorem~\ref{theo6}, thereby demonstrating the increased flexibility provided by the generalized construction.

\end{example}

\section{Optimal FDRM Codes from Combination Techniques}

In this section, we develop several new families of optimal FDRM codes by combining the constructions established in Section~3 with the combination techniques for Ferrers diagram rank-metric codes recalled in Section~2. Our approach exploits both the combination construction of FDRM codes and the framework of proper combinations of Ferrers diagrams, allowing us to derive larger classes of optimal codes from previously known building blocks.

Unlike the constructions of Section~3, which are obtained directly from subcodes of systematic MRD codes, the constructions presented here rely on combining several optimal FDRM codes in a structured manner. This approach considerably enlarges the range of Ferrers diagrams for which we can construct optimal FDRM codes.

We begin with a concrete construction based on the family of optimal FDRM codes established in Remark~\ref{rem2}. This example illustrates the main idea underlying our subsequent general constructions.

\begin{example}\label{exam1}

We begin with a simple illustration of the combination technique developed in Lemma~\ref{lem7}. The goal is to construct a larger optimal FDRM code by combining two smaller optimal FDRM codes supported on compatible Ferrers diagrams.

Consider the Ferrers diagram
\[
\mathcal{F}=
\begin{matrix}
	\bullet&\bullet&\bullet&\bullet&\bullet&\bullet&\bullet&\bullet&\bullet&\color{green}\bullet\\
	&\bullet&\bullet&\bullet&\bullet&\bullet&\bullet&\bullet&\bullet&\color{green}\bullet\\
	&&\bullet&\bullet&\bullet&\bullet&\bullet&\bullet&\bullet&\color{green}\bullet\\
	&&&\bullet&\bullet&\bullet&\bullet&\bullet&\bullet&\color{green}\bullet\\
	&&&&\bullet&\bullet&\bullet&\bullet&\bullet&\color{green}\bullet\\
	&&&&\bullet&\bullet&\bullet&\bullet&\bullet&\color{green}\bullet\\
	&&&&\bullet&\bullet&\bullet&\bullet&\bullet&\color{green}\bullet\\
	&&&&\bullet&\bullet&\bullet&\bullet&\bullet&\color{green}\bullet\\
	&&&&&&&&&\color{red}\bullet\\
	&&&&&&&&&\color{red}\bullet\\
	&&&&&&&&&\color{red}\bullet\\
	&&&&&&&&&\color{red}\bullet\\
	&&&&&&&&&\color{red}\bullet\\
	&&&&&&&&&\color{red}\bullet
\end{matrix}.
\]

Naturally, $\mathcal F$ can be decomposed into three pairwise disjoint parts,
namely a principal Ferrers diagram $\mathcal F_1$, a full Ferrers block $\mathcal D$, and a second Ferrers diagram $\mathcal F_2$, namely

\[
\mathcal{F}_1=
\begin{matrix}
	\bullet&\bullet&\bullet&\bullet&\bullet&\bullet&\bullet&\bullet&\bullet\\
	&\bullet&\bullet&\bullet&\bullet&\bullet&\bullet&\bullet&\bullet\\
	&&\bullet&\bullet&\bullet&\bullet&\bullet&\bullet&\bullet\\
	&&&\bullet&\bullet&\bullet&\bullet&\bullet&\bullet\\
	&&&&\bullet&\bullet&\bullet&\bullet&\bullet\\
	&&&&\bullet&\bullet&\bullet&\bullet&\bullet\\
	&&&&\bullet&\bullet&\bullet&\bullet&\bullet\\
	&&&&\bullet&\bullet&\bullet&\bullet&\bullet
\end{matrix},
\qquad
\mathcal{F}_2=
\begin{matrix}
\color{red}\bullet\\
\color{red}\bullet\\
\color{red}\bullet\\
\color{red}\bullet\\
\color{red}\bullet\\
\color{red}\bullet
\end{matrix},
\qquad
\mathcal D=
\begin{matrix}
\color{green}\bullet\\
\color{green}\bullet\\
\color{green}\bullet\\
\color{green}\bullet\\
\color{green}\bullet\\
\color{green}\bullet\\
\color{green}\bullet\\
\color{green}\bullet
\end{matrix}.
\]

The first component $\mathcal F_1$ is precisely the Ferrers diagram considered in Remark~\ref{rem2}. Hence, for every prime power $q\ge7$, there exists an optimal $[\mathcal F_1,6,7]_q$ FDRM code.

The second component $\mathcal F_2$ is a full Ferrers diagram consisting of a single column. Therefore, it trivially supports an optimal $[\mathcal F_2,6,1]_q$ FDRM code.

Since the intermediate block $\mathcal D$ is a full Ferrers diagram satisfying the hypotheses of Lemma~\ref{lem7}, the two optimal component codes can be combined. Lemma~\ref{lem7} therefore yields an $[\mathcal F,6,8]_q$ FDRM code.

Finally, a straightforward computation of the Singleton-type upper bound in Lemma~\ref{lem1} shows that
\[
v_{\min}(\mathcal F,8)=6,
\]
which coincides with the dimension of the constructed code. Consequently, the resulting $[\mathcal F,6,8]_q$ FDRM code is optimal for every prime power
$
q\ge7.
$

\end{example}
While Lemma~\ref{lem7} provides an elegant way of combining two optimal FDRM codes, its applicability is limited by the rigid block structure of the resulting Ferrers diagram. A substantially more flexible construction is furnished by Lemma~\ref{lem8}, which is based on the notion of proper combinations of Ferrers diagrams.

The remainder of this section exploits this more general framework together with the new optimal FDRM codes obtained in Section~3. This combination lets us construct several new infinite families of optimal FDRM codes over Ferrers diagrams beyond the scope of previously known constructions.

\begin{theo}\label{theo1} Let

\tikzset{every picture/.style={line width=0.75pt}} 
\begin{tikzpicture}[x=0.62pt,y=0.74pt,yscale=-1,xscale=1]
	
	\draw   (373.33,88.14) .. controls (373.33,83.47) and (371,81.14) .. (366.33,81.14) -- (276.83,81.14) .. controls (270.16,81.14) and (266.83,78.81) .. (266.83,74.14) .. controls (266.83,78.81) and (263.5,81.14) .. (256.83,81.14)(259.83,81.14) -- (167.33,81.14) .. controls (162.66,81.14) and (160.33,83.47) .. (160.33,88.14) ;
	\draw   (462,89) .. controls (461.97,84.33) and (459.62,82.02) .. (454.95,82.05) -- (440.12,82.16) .. controls (433.45,82.21) and (430.1,79.9) .. (430.06,75.23) .. controls (430.1,79.9) and (426.79,82.26) .. (420.12,82.31)(423.12,82.28) -- (405.28,82.42) .. controls (400.61,82.45) and (398.3,84.8) .. (398.33,89.47) ;
	\draw   (550.33,89.47) .. controls (550.33,84.8) and (548,82.47) .. (543.33,82.47) -- (528.83,82.47) .. controls (522.16,82.47) and (518.83,80.14) .. (518.83,75.47) .. controls (518.83,80.14) and (515.5,82.47) .. (508.83,82.47)(511.83,82.47) -- (494.33,82.47) .. controls (489.66,82.47) and (487.33,84.8) .. (487.33,89.47) ;
	\draw   (561.33,139.47) .. controls (566,139.47) and (568.33,137.14) .. (568.33,132.47) -- (568.33,128.47) .. controls (568.33,121.8) and (570.66,118.47) .. (575.33,118.47) .. controls (570.66,118.47) and (568.33,115.14) .. (568.33,108.47)(568.33,111.47) -- (568.33,104.47) .. controls (568.33,99.8) and (566,97.47) .. (561.33,97.47) ;
	\draw   (561.33,199.47) .. controls (566,199.35) and (568.27,196.96) .. (568.16,192.3) -- (568.08,189.29) .. controls (567.91,182.62) and (570.16,179.23) .. (574.83,179.12) .. controls (570.16,179.23) and (567.75,175.96) .. (567.58,169.3)(567.66,172.3) -- (567.51,166.29) .. controls (567.39,161.62) and (565,159.35) .. (560.33,159.47) ;
	\draw   (560.33,265.47) .. controls (565,265.47) and (567.33,263.14) .. (567.33,258.47) -- (567.33,254.47) .. controls (567.33,247.8) and (569.66,244.47) .. (574.33,244.47) .. controls (569.66,244.47) and (567.33,241.14) .. (567.33,234.47)(567.33,237.47) -- (567.33,230.47) .. controls (567.33,225.8) and (565,223.47) .. (560.33,223.47) ;
	\draw   (561.33,328.47) .. controls (566,328.47) and (568.33,326.14) .. (568.33,321.47) -- (568.33,317.47) .. controls (568.33,310.8) and (570.66,307.47) .. (575.33,307.47) .. controls (570.66,307.47) and (568.33,304.14) .. (568.33,297.47)(568.33,300.47) -- (568.33,293.47) .. controls (568.33,288.8) and (566,286.47) .. (561.33,286.47) ;
	\draw   (561.33,391.47) .. controls (566,391.47) and (568.33,389.14) .. (568.33,384.47) -- (568.33,380.47) .. controls (568.33,373.8) and (570.66,370.47) .. (575.33,370.47) .. controls (570.66,370.47) and (568.33,367.14) .. (568.33,360.47)(568.33,363.47) -- (568.33,356.47) .. controls (568.33,351.8) and (566,349.47) .. (561.33,349.47) ;
	
	\draw (148,86.4) node [anchor=north west][inner sep=0.75pt]   {$\begin{array}{cccccccccccccc}
			\bullet  & \cdots  & \bullet  & \cdots  & \bullet  & \bullet  & \cdots  & \bullet  & \bullet  & \cdots  & \bullet  & \bullet  & \cdots  & \bullet \\
			\vdots  &  & \vdots  &  & \vdots  & \vdots  &  & \vdots  & \vdots  &  & \vdots  & \vdots  &  & \vdots \\
			\bullet  & \cdots  & \bullet  & \cdots  & \bullet  & \bullet  & \cdots  & \bullet  & \bullet  & \cdots  & \bullet  & \bullet  & \cdots  & \bullet \\
			&  & \bullet  & \cdots  & \bullet  & \bullet  & \cdots  & \bullet  & \bullet  & \cdots  & \bullet  & \bullet  & \cdots  & \bullet \\
			&  & \vdots  &  & \vdots  & \vdots  &  & \vdots  & \vdots  &  & \vdots  & \vdots  &  & \vdots \\
			&  & \circ  & \cdots  & \circ  & \circ  & \cdots  & \circ  & \bullet  & \cdots  & \bullet  & \bullet  & \cdots  & \bullet \\
			&  &  &  &  &  &  &  &  &  &  & \bullet  & \cdots  & \bullet \\
			&  &  &  &  &  &  &  &  &  &  & \vdots  &  & \vdots \\
			&  &  &  &  &  &  &  &  &  &  & \bullet  & \cdots  & \bullet \\
			&  &  &  &  &  &  &  &  &  &  &  &  & \bullet \\
			&  &  &  &  &  &  &  &  &  &  &  &  & \vdots \\
			&  &  &  &  &  &  &  &  &  &  &  &  & \bullet \\
			&  &  &  &  &  &  &  &  &  &  &  &  & \bullet \\
			&  &  &  &  &  &  &  &  &  &  &  &  & \vdots \\
			&  &  &  &  &  &  &  &  &  &  &  &  & \bullet 
		\end{array}$};
	\draw (145,210.4) node [anchor=north west][inner sep=0.75pt]    {$ \begin{array}{l}
		\end{array}$};
	\draw (578,109.4) node [anchor=north west][inner sep=0.75pt]    {$\delta -2$};
	\draw (248,56.4) node [anchor=north west][inner sep=0.75pt]    {$n-y$};
	\draw (399,58.4) node [anchor=north west][inner sep=0.75pt]    {$y-\delta +1$};
	\draw (502,59.4) node [anchor=north west][inner sep=0.75pt]    {$\delta -1$};
	\draw (579,173.4) node [anchor=north west][inner sep=0.75pt]    {$x$};
	\draw (579,237.4) node [anchor=north west][inner sep=0.75pt]    {$y-x-\delta +1$};
	\draw (580,299.4) node [anchor=north west][inner sep=0.75pt]    {$s$};
	\draw (581,363.4) node [anchor=north west][inner sep=0.75pt]    {$z$};
	\draw (30,196.4) node [anchor=north west][inner sep=0.75pt]    {$\mathcal{F} =$};
\end{tikzpicture}\\
denote an $m \times n$ Ferrers diagram whose parameters satisfy the four constraints below: $$\delta+x-1\leq y\leq \min\left\lbrace n, n-x-\delta+2\right\rbrace, s\geq \delta +x-2, x\geq 1, z=|\mathcal{F}_1|-(n-y)(\delta-2).$$
The diagram $\mathcal{F}$ decomposes into three disjoint subdiagrams $\mathcal{F}_1$, $\mathcal{F}_2$, $\mathcal{F}_3$, whose shapes are illustrated in the figure:
\begin{center}
\tikzset{every picture/.style={line width=0.75pt}} 

\begin{tikzpicture}[x=0.62pt,y=0.64pt,yscale=-1,xscale=1]
	
	\draw   (304.33,41.47) .. controls (304.33,36.8) and (302,34.47) .. (297.33,34.47) -- (210.73,34.47) .. controls (204.06,34.47) and (200.73,32.14) .. (200.73,27.47) .. controls (200.73,32.14) and (197.4,34.47) .. (190.73,34.47)(193.73,34.47) -- (95.33,34.47) .. controls (90.66,34.47) and (88.33,36.8) .. (88.33,41.47) ;
	\draw   (311.33,100.47) .. controls (316,100.47) and (318.33,98.14) .. (318.33,93.47) -- (318.33,85.97) .. controls (318.33,79.3) and (320.66,75.97) .. (325.33,75.97) .. controls (320.66,75.97) and (318.33,72.64) .. (318.33,65.97)(318.33,68.97) -- (318.33,58.47) .. controls (318.33,53.8) and (316,51.47) .. (311.33,51.47) ;
	\draw   (313.33,174.47) .. controls (318,174.47) and (320.33,172.14) .. (320.33,167.47) -- (320.33,159.97) .. controls (320.33,153.3) and (322.66,149.97) .. (327.33,149.97) .. controls (322.66,149.97) and (320.33,146.64) .. (320.33,139.97)(320.33,142.97) -- (320.33,132.47) .. controls (320.33,127.8) and (318,125.47) .. (313.33,125.47) ;
	\draw   (540.33,127.47) .. controls (545,127.47) and (547.33,125.14) .. (547.33,120.47) -- (547.33,112.97) .. controls (547.33,106.3) and (549.66,102.97) .. (554.33,102.97) .. controls (549.66,102.97) and (547.33,99.64) .. (547.33,92.97)(547.33,95.97) -- (547.33,85.47) .. controls (547.33,80.8) and (545,78.47) .. (540.33,78.47) ;
	
	\draw (33,74.87) node [anchor=north west][inner sep=0.75pt]    {$\mathcal{F}_{1} =$};
	\draw (75,39.2) node [anchor=north west][inner sep=0.75pt]  {$\begin{array}{ c c c c c c c c }
			\bullet  & \cdots  & \bullet  & \cdots  & \bullet  & \bullet  & \cdots  & \bullet \\
			\vdots  &  & \vdots  &  & \vdots  & \vdots  &  & \vdots \\
			\bullet  & \cdots  & \bullet  & \cdots  & \bullet  & \bullet  & \cdots  & \bullet \\
			&  & \bullet  & \cdots  & \bullet  & \bullet  & \cdots  & \bullet \\
			&  & \vdots  &  & \vdots  & \vdots  &  & \vdots \\
			&  & \circ  & \cdots  & \circ  & \circ  & \cdots  & \circ 
		\end{array}$};
	\draw (184,4.87) node [anchor=north west][inner sep=0.75pt]    {$n-y$};
	\draw (333,66.87) node [anchor=north west][inner sep=0.75pt]    {$\delta -2$};
	\draw (335,143.87) node [anchor=north west][inner sep=0.75pt]    {$x$};
	\draw (474,90.87) node [anchor=north west][inner sep=0.75pt]    {$\mathcal{F}_{2} =$};
	\draw (400,100) node [anchor=north west][inner sep=0.75pt]    {$,$};
	\draw (518,66.87) node [anchor=north west][inner sep=0.75pt]    {$\begin{array}{ c }
			\bullet \\
			\vdots \\
			\bullet 
		\end{array}$};
	\draw (562,95.87) node [anchor=north west][inner sep=0.75pt]    {$z$};
\draw (600,100) node [anchor=north west][inner sep=0.75pt]    {$,$};	
\end{tikzpicture}
\end{center}

\tikzset{every picture/.style={line width=0.75pt}} 

\begin{tikzpicture}[x=0.62pt,y=0.64pt,yscale=-1,xscale=1]
	
	\draw   (274.33,95.47) .. controls (279,95.47) and (281.33,93.14) .. (281.33,88.47) -- (281.33,80.97) .. controls (281.33,74.3) and (283.66,70.97) .. (288.33,70.97) .. controls (283.66,70.97) and (281.33,67.64) .. (281.33,60.97)(281.33,63.97) -- (281.33,53.47) .. controls (281.33,48.8) and (279,46.47) .. (274.33,46.47) ;
	\draw   (274.33,168.47) .. controls (279,168.47) and (281.33,166.14) .. (281.33,161.47) -- (281.33,153.97) .. controls (281.33,147.3) and (283.66,143.97) .. (288.33,143.97) .. controls (283.66,143.97) and (281.33,140.64) .. (281.33,133.97)(281.33,136.97) -- (281.33,126.47) .. controls (281.33,121.8) and (279,119.47) .. (274.33,119.47) ;
	\draw   (273.33,241.47) .. controls (278,241.47) and (280.33,239.14) .. (280.33,234.47) -- (280.33,226.97) .. controls (280.33,220.3) and (282.66,216.97) .. (287.33,216.97) .. controls (282.66,216.97) and (280.33,213.64) .. (280.33,206.97)(280.33,209.97) -- (280.33,199.47) .. controls (280.33,194.8) and (278,192.47) .. (273.33,192.47) ;
	\draw   (275.33,315.47) .. controls (280,315.47) and (282.33,313.14) .. (282.33,308.47) -- (282.33,300.97) .. controls (282.33,294.3) and (284.66,290.97) .. (289.33,290.97) .. controls (284.66,290.97) and (282.33,287.64) .. (282.33,280.97)(282.33,283.97) -- (282.33,273.47) .. controls (282.33,268.8) and (280,266.47) .. (275.33,266.47) ;
	\draw   (265.33,40.47) .. controls (265.33,35.8) and (263,33.47) .. (258.33,33.47) -- (244.33,33.47) .. controls (237.66,33.47) and (234.33,31.14) .. (234.33,26.47) .. controls (234.33,31.14) and (231,33.47) .. (224.33,33.47)(227.33,33.47) -- (210.33,33.47) .. controls (205.66,33.47) and (203.33,35.8) .. (203.33,40.47) ;
	\draw   (177.33,40.47) .. controls (177.33,35.8) and (175,33.47) .. (170.33,33.47) -- (156.33,33.47) .. controls (149.66,33.47) and (146.33,31.14) .. (146.33,26.47) .. controls (146.33,31.14) and (143,33.47) .. (136.33,33.47)(139.33,33.47) -- (122.33,33.47) .. controls (117.66,33.47) and (115.33,35.8) .. (115.33,40.47) ;
	
	\draw (122,226.2) node [anchor=north west][inner sep=0.75pt]    {$\begin{matrix}
			& \\
			& 
		\end{matrix}$};
	\draw (122,199.87) node [anchor=north west][inner sep=0.75pt]    {$\begin{array}{ c }
			\\\\
			\\\\
			\\
		\end{array}$};
	\draw (122,185.87) node [anchor=north west][inner sep=0.75pt]    {$\begin{array}{ c c }
			& \\
			& 
		\end{array}$};
	\draw (33,55.4) node [anchor=north west][inner sep=0.75pt]    {$\mathcal{F}_{3} =$};
	\draw (101,33.4) node [anchor=north west][inner sep=0.75pt]    {$\begin{array}{ c c c c c c }
			\bullet  & \cdots  & \bullet  & \bullet  & \cdots  & \bullet \\
			\vdots  &  & \vdots  & \vdots  &  & \vdots \\
			\bullet  & \cdots  & \bullet  & \bullet  & \cdots  & \bullet \\
			\bullet  & \cdots  & \bullet  & \bullet  & \cdots  & \bullet \\
			\vdots  &  & \vdots  & \vdots  &  & \vdots \\
			\bullet  & \cdots  & \bullet  & \bullet  & \cdots  & \bullet \\
			&  &  & \bullet  & \cdots  & \bullet \\
			&  &  & \vdots  &  & \vdots \\
			&  &  & \bullet  & \cdots  & \bullet \\
			&  &  &  &  & \bullet \\
			&  &  &  &  & \vdots \\
			&  &  &  &  & \bullet 
		\end{array}$};
	\draw (114,7.4) node [anchor=north west][inner sep=0.75pt]    {$y-\delta +1$};
	\draw (218,6.4) node [anchor=north west][inner sep=0.75pt]    {$\delta -1$};
	\draw (292,61.4) node [anchor=north west][inner sep=0.75pt]    {$\delta -2$};
	\draw (295,133.4) node [anchor=north west][inner sep=0.75pt]    {$x$};
	\draw (294,206.4) node [anchor=north west][inner sep=0.75pt]    {$y-x-\delta +1$};
	\draw (296,280.4) node [anchor=north west][inner sep=0.75pt]    {$s$};	
	\draw (450,180) node [anchor=north west][inner sep=0.75pt]    {$.$};
\end{tikzpicture}\\
Under these parameter conditions, there exists an optimal $\left[ \mathcal{F},k,\delta\right]_q$ FDRM code $\mathcal{C}$, where $k=v_{\delta-2}(\mathcal{F},\delta)$.
\end{theo}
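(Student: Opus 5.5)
The decomposition in the statement suggests applying the proper-combination construction of Lemma~\ref{lem8}. I will view $\mathcal F$ as the block diagram
\[
\mathcal F=\left(\begin{array}{cc}\mathcal F_1 & \hat{\mathcal D}\\ & \mathcal F_2\end{array}\right),
\]
with $\hat{\mathcal D}=\mathcal F_3$ placed in the rightmost $y$ columns. Here $\mathcal F_1$ is the $(\delta-2+x)\times(n-y)$ block on the left, and $\mathcal F_2$ is the single column of $z$ dots at the bottom right. The circled row of $\mathcal F_1$ has to be handled by the ``minimum number of added dots'' clause of Lemma~\ref{lem8}; I also need to check that the placement satisfies conditions (1)--(3) of Definition~\ref{def2}.

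The target distance is $\delta$. I will take $\delta_1=\delta-1$ for $\mathcal F_1$ and $\delta_2=1$ for $\mathcal F_2$, so that $\min\{\delta_1+\delta_2,\delta\}=\delta$. For $\mathcal D=\mathcal F_3$ I need an optimal $[\mathcal F_3,k_3,\delta]_q$ code.

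\begin{itemize}
\item \textbf{Code on $\mathcal F_1$.} Because $\mathcal F_1$ has only $\delta-2+x$ rows, an optimal $[\mathcal F_1,k_1,\delta-1]_q$ code is obtained by transposing. Each column of $\mathcal F_1^t$ is long, since $s\ge\delta+x-2$ is used to dominate these lengths. The $\delta-2$ full rows of $\mathcal F_1$ become full columns, so Lemma~\ref{lem3} or Lemma~\ref{lema} applies. This gives dimension $k_1=|\mathcal F_1|-(n-y)(\delta-2)$, the dots remaining after deleting the first $\delta-2$ rows.
\item \textbf{Code on $\mathcal F_2$.} The single column $\mathcal F_2$ is given $z$ dots with $z=|\mathcal F_1|-(n-y)(\delta-2)=k_1$. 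Then $\mathcal F_2$ supports a $[\mathcal F_2,k_1,1]_q$ code by taking the whole space, so $\min\{k_1,k_2\}=k_1$.
\item \textbf{Code on $\mathcal F_3$.} The rightmost $\delta-1$ columns of $\mathcal F_3$ are long, with at least $y$ dots, while its first $y-\delta+1$ columns have $\delta-2+x$ dots each. Since $y\ge\delta+x-1$, each of the last $\delta-1$ columns has at least $y$ dots, where $y$ is the width of $\mathcal F_3$. Lemma~\ref{lem3} then gives an optimal code of dimension $k_3=(y-\delta+1)(\delta-2+x)$, after checking that $v_0$ is the minimum. If the column lengths fall short, I would instead use Lemma~\ref{lem4} or Theorem~\ref{theo6}.
\end{itemize}

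Lemma~\ref{lem8} then yields an $[\mathcal F,k_1+k_3,\delta]_q$ code.

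It remains to show $k_1+k_3=v_{\delta-2}(\mathcal F,\delta)$ and that this equals $v_{\min}(\mathcal F,\delta)$. The value $v_{\delta-2}$ deletes the top $\delta-2$ rows and the rightmost column. What remains is: the lower $x$ rows of the left part, totalling $k_1$ dots; the $(y-\delta+1)\times x$ block, equal to the non-top part of $\mathcal F_3$'s left columns; the middle columns minus their top $\delta-2$ rows; and nothing from the last column. This should simplify to $k_1+k_3$; if it does not, the choice of $k_3$ must be adjusted.

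The main obstacle is proving $v_{\delta-2}\le v_i$ for all other $i$. The plan is to compare consecutive $v_i$. Moving from $v_i$ to $v_{i+1}$ deletes one more row and restores one column. The row has length $n$ for $i<\delta-2$, and the restored column has length at least $y+s+z$. Because $s\ge\delta+x-2$ and $y\le n-x-\delta+2$, the restored column outweighs the row, so $v_i$ is nonincreasing up to $i=\delta-2$. For $i=\delta-1$, removing row $\delta-2$ (of length at most $n$) while restoring the last column of length $m$ increases $v$, by the size constraints. I expect the bookkeeping of these inequalities, together with verifying the properness conditions around the circled row, to be the delicate part.
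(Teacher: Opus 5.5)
Your architecture is the paper's. You apply Lemma~\ref{lem8} to three codes: one on $\mathcal F_1$ with distance $\delta-1$, obtained by transposing and using Lemma~\ref{lem3}; the whole space on the single column $\mathcal F_2$, with $z=k_1$; and one on $\mathcal D=\mathcal F_3$ with distance $\delta$. No padding is needed, so $\hat{\mathcal D}=\mathcal F_3$, because the top $\delta-2+x$ rows of $\mathcal F_3$ are full.

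However, your primary argument for $\mathcal F_3$ is false for $\delta\ge3$. Of its rightmost $\delta-1$ columns, the $\delta-2$ inner ones have height $(\delta-2)+x+(y-x-\delta+1)=y-1<y$, so Lemma~\ref{lem3} does not apply. Your fallback, Lemma~\ref{lem4}, is what the paper actually uses. Write $\mathcal F_3=[\delta-2+x,\ldots,\delta-2+x,\,y-1,\ldots,y-1,\,y-1+s]$, of width $y$. Lemma~\ref{lem4} then needs $\gamma_{y-\delta+1}=y-1\ge y-1$ and $\gamma_{y-1}=y-1+s\ge (y-1)+(\delta-2+x)$, and the latter is exactly the hypothesis $s\ge\delta+x-2$. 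So that hypothesis belongs to $\mathcal F_3$, not to $\mathcal F_1$. What the transposed Lemma~\ref{lem3} needs on $\mathcal F_1$ is that the $\delta-2$ full rows, of length $n-y$, be at least the height $\delta-2+x$, i.e.\ $y\le n-x-\delta+2$.

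Your optimality argument also fails as written. The paper merely asserts $v_{\min}=v_{\delta-2}$, but your justification would not stand. For $0\le i\le\delta-3$, passing from $v_i$ to $v_{i+1}$ restores column $n-\delta+1+i$. That is an inner right column with $y-1$ dots, not the last column. Moreover, a decrease requires the deleted row to outweigh the restored column, not the reverse as you wrote. The correct count is
\[
v_{i+1}-v_i=(y-2-i)-(n-\delta+1+i)\le -x-1-2i<0,
\]
using $y\le n-x-\delta+2$.

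For the final step, the bound $\rho_{\delta-2}\le n$ is too weak when $z$ is small. Instead write $\rho_{\delta-2}=y+r$, where $r\le z$ is the part of that row lying in $\mathcal F_1$. This gives $v_{\delta-1}-v_{\delta-2}=s+z-r-\delta+1\ge x-1\ge 0$.

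Your enumeration of $v_{\delta-2}$ does simplify to $z+(y-\delta+1)(x+\delta-2)=k_1+k_3$. With these two repairs the proof is complete and coincides with the paper's.
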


The construction follows a divide-and-combine strategy: we first construct optimal FDRM codes on the three constituent Ferrers subdiagrams and then assemble them into a single optimal code by means of the proper-combination technique of Lemma~\ref{lem8}.

\begin{proof}
The proof proceeds by constructing optimal FDRM codes on the three constituent Ferrers subdiagrams and then combining them through the proper-combination framework of Lemma~\ref{lem8}.

We first consider the Ferrers diagram $\mathcal F_1$. Since $
n-y\ge \delta-2+x,
$
the rightmost $\delta-2$ columns of $\mathcal F_1$ each contain at least $n-y$ dots. Consequently, the hypotheses of Lemma~\ref{lem3} are satisfied, and there exists an optimal
\[
\left[
\mathcal F_1,
|\mathcal F_1|-(n-y)(\delta-2),
\delta-1
\right]_q
\]
FDRM code, which we denote by $\mathcal C_1$.

Next, the Ferrers diagram $\mathcal F_2$ consists of a single full column. Hence it trivially supports an optimal
\[
\left[
\mathcal F_2,
|\mathcal F_1|-(n-y)(\delta-2),
1
\right]_q
\]
FDRM code, denoted by $\mathcal C_2$.

We now turn to the intermediate Ferrers diagram $\mathcal F_3$. By construction, its parameters satisfy the assumptions of Lemma~\ref{lem4}. Therefore, there exists an optimal
\[
\left[
\mathcal F_3,
(y-\delta+1)(\delta-2+x),
\delta
\right]_q
\]
FDRM code, denoted by $\mathcal C_3$.

Having obtained optimal component codes on the three subdiagrams, we apply the proper-combination construction of Lemma~\ref{lem8}. This yields an
\[
\left[
\mathcal F,
|\mathcal F_1|-(n-y)(\delta-2)
+(y-\delta+1)(\delta-2+x),
\delta
\right]_q
\]
FDRM code $\mathcal C$.

It remains to verify its optimality. By Lemma~\ref{lem1}, every
$[\mathcal F,\delta]_q$
FDRM code satisfies
\[
\dim(\mathcal F,\delta)\le
v_{\min}(\mathcal F,\delta).
\]
A direct counting argument shows that
\[
v_{\min}(\mathcal F,\delta)
=
|\mathcal F_1|
-(n-y)(\delta-2)
+(y-\delta+1)(\delta-2+x),
\]
where the minimum is attained by deleting the first $\delta-2$ rows together with the rightmost column of $\mathcal F$.

Therefore, the constructed code meets the Singleton-type upper bound of Lemma~\ref{lem1}, and is consequently optimal.
\end{proof}

\begin{example}\label{example5}

We illustrate Theorem~\ref{theo1} by constructing an explicit optimal FDRM code obtained from a proper combination of three Ferrers subdiagrams.

Consider the Ferrers diagram
\[
\mathcal{F}=
\begin{array}{cccccccc}
	\color{red}\bullet&\color{red}\bullet&\color{red}\bullet&\color{red}\bullet&\bullet&\bullet&\bullet&\bullet\\
	\color{red}\bullet&\color{red}\bullet&\color{red}\bullet&\color{red}\bullet&\bullet&\bullet&\bullet&\bullet\\
	&\color{red}\bullet&\color{red}\bullet&\color{red}\bullet&\bullet&\bullet&\bullet&\bullet\\
	&&&&&&&\bullet\\
	&&&&&&&\bullet\\
	&&&&&&&\bullet\\
	&&&&&&&\color{green}\bullet\\
	&&&&&&&\color{green}\bullet\\
	&&&&&&&\color{green}\bullet
\end{array}.
\]

This Ferrers diagram admits the following decomposition into three pairwise disjoint Ferrers subdiagrams:

\[
\mathcal{F}_1=
\begin{array}{cccc}
\color{red}\bullet&\color{red}\bullet&\color{red}\bullet&\color{red}\bullet\\
\color{red}\bullet&\color{red}\bullet&\color{red}\bullet&\color{red}\bullet\\
&\color{red}\bullet&\color{red}\bullet&\color{red}\bullet
\end{array},
\qquad
\mathcal{F}_2=
\begin{array}{c}
\color{green}\bullet\\
\color{green}\bullet\\
\color{green}\bullet
\end{array},
\qquad
\mathcal{F}_3=
\begin{matrix}
\bullet&\bullet&\bullet&\bullet\\
\bullet&\bullet&\bullet&\bullet\\
\bullet&\bullet&\bullet&\bullet\\
&&&\bullet\\
&&&\bullet\\
&&&\bullet
\end{matrix}.
\]

The Ferrers diagrams $\mathcal F_1$, $\mathcal F_2$, and $\mathcal F_3$ satisfy the hypotheses of Theorem~\ref{theo1}. In particular,

\begin{itemize}
\item $\mathcal F_1$ supports the required optimal FDRM code corresponding to the left component;

\item $\mathcal F_2$ is a single-column Ferrers diagram and therefore admits the required optimal code of minimum rank distance~$1$;

\item $\mathcal F_3$ satisfies the assumptions imposed on the central Ferrers block.
\end{itemize}

Applying Theorem~\ref{theo1} therefore yields an optimal $[\mathcal F,6,4]_q$ FDRM code over every finite field $\mathbb F_q$.

This example illustrates how the combination construction can be used to assemble several relatively simple optimal FDRM codes into a larger optimal code supported on a substantially more intricate Ferrers diagram.

\end{example}

\section{New optimal FDRM codes based on combination constructions}

In this section, we develop several new infinite families of optimal FDRM codes by combining the constructions established in Section~3 with the combination techniques recalled in Section~2. Rather than constructing optimal FDRM codes directly from MRD codes, our approach exploits existing optimal constituent codes and assembles them into larger optimal Ferrers diagram rank-metric codes.

We present two different combination strategies. The first is based on the classical combination construction of Lemma~\ref{lem7}, while the second relies on the more general framework of proper combinations introduced in Lemma~\ref{lem8}. Both approaches considerably enlarge the class of Ferrers diagrams for which we can construct optimal FDRM codes.

We begin with a new construction based on a decomposition of a Ferrers diagram into three prescribed disjoint components: a top-left Ferrers subdiagram, a top-right connecting block, and a bottom-right Ferrers subdiagram.

The general construction is described below.

\begin{construction}\label{con3}

Let\\

\begin{center}
\tikzset{every picture/.style={line width=0.75pt}} 

\begin{tikzpicture}[x=0.61pt,y=0.75pt,yscale=-1,xscale=1]
	
	\draw   (375.83,159.47) .. controls (375.86,154.8) and (373.55,152.45) .. (368.88,152.42) -- (353.97,152.31) .. controls (347.3,152.26) and (343.99,149.9) .. (344.02,145.24) .. controls (343.99,149.9) and (340.64,152.21) .. (333.97,152.16)(336.97,152.18) -- (319.05,152.05) .. controls (314.38,152.02) and (312.03,154.33) .. (312,159) ;
	\draw   (548.83,158.47) .. controls (548.86,153.8) and (546.55,151.45) .. (541.88,151.42) -- (484.88,151.03) .. controls (478.21,150.99) and (474.9,148.64) .. (474.93,143.97) .. controls (474.9,148.64) and (471.55,150.95) .. (464.88,150.9)(467.88,150.92) -- (407.88,150.51) .. controls (403.21,150.48) and (400.86,152.79) .. (400.83,157.46) ;
	\draw   (302,166) .. controls (297.33,166.1) and (295.05,168.48) .. (295.15,173.15) -- (295.21,175.88) .. controls (295.35,182.55) and (293.09,185.93) .. (288.42,186.03) .. controls (293.09,185.93) and (295.49,189.21) .. (295.63,195.88)(295.57,192.88) -- (295.69,198.62) .. controls (295.79,203.29) and (298.17,205.57) .. (302.83,205.47) ;
	\draw   (558.83,269.47) .. controls (563.5,269.42) and (565.81,267.07) .. (565.77,262.4) -- (565.43,226.4) .. controls (565.36,219.73) and (567.66,216.38) .. (572.33,216.34) .. controls (567.66,216.38) and (565.3,213.07) .. (565.24,206.4)(565.27,209.4) -- (564.9,170.4) .. controls (564.85,165.73) and (562.5,163.42) .. (557.83,163.47) ;
	\draw   (558.83,331.47) .. controls (563.5,331.35) and (565.77,328.96) .. (565.66,324.3) -- (565.58,321.29) .. controls (565.41,314.62) and (567.66,311.23) .. (572.33,311.12) .. controls (567.66,311.23) and (565.25,307.96) .. (565.08,301.3)(565.16,304.3) -- (565.01,298.29) .. controls (564.89,293.62) and (562.5,291.35) .. (557.83,291.47) ;
	
	\draw (220,255.4) node [anchor=north west][inner sep=0.75pt]    {$\mathcal{F} =$};
	\draw (450,123.4) node [anchor=north west][inner sep=0.75pt]    {$\begin{matrix}
			& \\
			& 
		\end{matrix}$};
	\draw (302,153.4) node [anchor=north west][inner sep=0.75pt]    {$\begin{matrix}
			\bullet  & \cdots  & \bullet  & \bullet  & \cdots  & \bullet  & \bullet  & \cdots  & \bullet \\
			\vdots  &  & \vdots  & \vdots  &  &  &  &  & \vdots \\
			\circ  & \cdots  & \bullet  & \bullet  & \cdots  & \bullet  & \bullet  &  & \bullet \\
			&  &  &  &  &  & \bullet  &  & \bullet \\
			&  &  &  &  &  & \vdots  &  & \vdots \\
			&  &  &  &  &  & \bullet  & \cdots  & \bullet \\
			&  &  &  &  &  & \circ  & \cdots  & \bullet \\
			&  &  &  &  &  & \vdots  &  & \vdots \\
			&  &  &  &  &  & \circ  & \cdots  & \bullet 
		\end{matrix}$};
	\draw (331,176.4) node [anchor=north west][inner sep=0.75pt]    {$\mathcal{F}_{1}$};
	\draw (511,304.4) node [anchor=north west][inner sep=0.75pt]    {$\mathcal{F}_{2}$};
	\draw (509,174.4) node [anchor=north west][inner sep=0.75pt]    {$\mathcal{F}_{3}$};
	\draw (338,122.4) node [anchor=north west][inner sep=0.75pt]    {$n_{1}$};
	\draw (466,121.4) node [anchor=north west][inner sep=0.75pt]    {$n_{3}$};
	\draw (252,176.4) node [anchor=north west][inner sep=0.75pt]    {$m_{1}$};
	\draw (578,203.4) node [anchor=north west][inner sep=0.75pt]    {$m_{3}$};
	\draw (579,302.4) node [anchor=north west][inner sep=0.75pt]    {$m_{2}$};
    \draw (600,250) node [anchor=north west][inner sep=0.75pt]    {$.$};
\end{tikzpicture}\\
\end{center}

Assume that the Ferrers diagram $\mathcal F$ admits the decomposition illustrated in Figure. Here,
$\mathcal F_1$, $\mathcal F_2$, and $\mathcal F_3$
are Ferrers diagrams of respective sizes
$m_1\times n_1$,
$m_2\times n_2$,
and
$m_3\times n_3$,
where
\[
m=m_2+m_3,\qquad
n=n_1+n_3.
\]

Suppose that $\mathcal F_{12}$ is a proper combination of
$\mathcal F_1$
and
$\mathcal F_2$,
and that there exists an $
[\mathcal F_{12},k_1,\delta_1]_q$ FDRM code $\mathcal C_{12}$.
Assume further that
$\mathcal F_3$
supports an $[\mathcal F_3,k_3,\delta_3]_q$ FDRM code
$\mathcal C_3$.

Then there exists an $[\mathcal F,k_1+k_3,\delta]_q$ FDRM code,
where
$
\delta=\min\{\delta_1,\delta_3\}.
$

\end{construction}

\begin{proof}
We begin by introducing the natural coordinate bijections
\[
\varphi_1:\left.\mathcal{F}\right|_{\mathcal{F}_1,\mathcal{F}_2}\longrightarrow\mathcal{F}_{12},
\qquad
\varphi_2:\left.\mathcal{F}\right|_{\mathcal{F}_3}\longrightarrow\mathcal{F}_3,
\]
which identify the corresponding coordinate positions of the Ferrers diagrams.

Let $\boldsymbol{B}\in\mathcal{C}_{12}$ and
$\boldsymbol{D}\in\mathcal{C}_3$ be arbitrary codewords.
Using these two component codewords, we define an
$m\times n$ matrix
$\boldsymbol{C}_{\boldsymbol{B},\boldsymbol{D}}$
by assigning its entries according to
\[
\boldsymbol{C}_{\boldsymbol{B},\boldsymbol{D}}(i,j)=
\begin{cases}
\boldsymbol{B}\!\left(\varphi_1(i,j)\right),
& \text{if }(i,j)\in
\left.\mathcal{F}\right|_{\mathcal{F}_1,\mathcal{F}_2},
\\[1mm]
\boldsymbol{D}\!\left(\varphi_2(i,j)\right),
& \text{if }(i,j)\in
\left.\mathcal{F}\right|_{\mathcal{F}_3},
\\[1mm]
0,
& \text{if }(i,j)\notin\mathcal{F}.
\end{cases}
\]

We then define the composite code
\[
\mathcal{C}
=
\left\{
\boldsymbol{C}_{\boldsymbol{B},\boldsymbol{D}}
:
\boldsymbol{B}\in\mathcal{C}_{12},
\;
\boldsymbol{D}\in\mathcal{C}_3
\right\}.
\]

Since the two component codes occupy disjoint prescribed coordinate sets, every pair $(\boldsymbol{B},\boldsymbol{D})$ uniquely determines a codeword of $\mathcal{C}$, and conversely. Consequently,
$\mathcal{C}$ is an $\mathbb{F}_q$-linear code of dimension
$k_1+k_3$ supported on the Ferrers diagram $\mathcal{F}$.

Moreover, the rank distance of $\mathcal{C}$ is governed by the weaker of the two constituent codes. Indeed, for any two distinct codewords of $\mathcal{C}$, the difference is obtained by combining the corresponding differences in $\mathcal{C}_{12}$ and $\mathcal{C}_3$, so that the minimum rank distance is precisely
\[
\delta=\min\{\delta_1,\delta_3\}.
\]
Therefore,
$\mathcal{C}$ is an
$\left[\mathcal{F},\,k_1+k_3,\,\delta\right]_q$
FDRM code, completing the proof.
\end{proof}

Construction~\ref{con3} provides a flexible framework for combining smaller FDRM codes into larger ones while preserving the desired rank-distance properties. By imposing suitable constraints on the parameters and selecting compatible Ferrers diagrams, this construction yields a new family of optimal FDRM codes. The following theorem establishes the precise conditions under which optimality is achieved.

As an application of Construction~\ref{con3}, we derive an infinite family of optimal FDRM codes supported on the class of Ferrers diagrams described below.

\begin{theo}\label{theo2}

Consider the $m\times n$ Ferrers diagram $\mathcal F$ illustrated below.

Let\\
\begin{center}
\tikzset{every picture/.style={line width=0.75pt}} 

\begin{tikzpicture}[x=0.61pt,y=0.63pt,yscale=-1,xscale=1]
	
	\draw   (338.83,58.14) .. controls (338.83,53.47) and (336.5,51.14) .. (331.83,51.14) -- (317.33,51.14) .. controls (310.66,51.14) and (307.33,48.81) .. (307.33,44.14) .. controls (307.33,48.81) and (304,51.14) .. (297.33,51.14)(300.33,51.14) -- (282.83,51.14) .. controls (278.16,51.14) and (275.83,53.47) .. (275.83,58.14) ;
	\draw   (426.83,57.14) .. controls (426.91,52.47) and (424.62,50.1) .. (419.95,50.02) -- (405.94,49.8) .. controls (399.28,49.69) and (395.99,47.31) .. (396.06,42.64) .. controls (395.99,47.31) and (392.62,49.59) .. (385.95,49.48)(388.95,49.53) -- (371.94,49.25) .. controls (367.28,49.18) and (364.91,51.47) .. (364.83,56.14) ;
	\draw   (514.83,56.14) .. controls (514.83,51.47) and (512.5,49.14) .. (507.83,49.14) -- (493.83,49.14) .. controls (487.16,49.14) and (483.83,46.81) .. (483.83,42.14) .. controls (483.83,46.81) and (480.5,49.14) .. (473.83,49.14)(476.83,49.14) -- (459.83,49.14) .. controls (455.16,49.14) and (452.83,51.47) .. (452.83,56.14) ;
	\draw   (604.83,56.14) .. controls (604.83,51.47) and (602.5,49.14) .. (597.83,49.14) -- (583.33,49.14) .. controls (576.66,49.14) and (573.33,46.81) .. (573.33,42.14) .. controls (573.33,46.81) and (570,49.14) .. (563.33,49.14)(566.33,49.14) -- (548.83,49.14) .. controls (544.16,49.14) and (541.83,51.47) .. (541.83,56.14) ;
	\draw   (265.83,61.14) .. controls (261.16,61.14) and (258.83,63.47) .. (258.83,68.14) -- (258.83,75.64) .. controls (258.83,82.31) and (256.5,85.64) .. (251.83,85.64) .. controls (256.5,85.64) and (258.83,88.97) .. (258.83,95.64)(258.83,92.64) -- (258.83,103.14) .. controls (258.83,107.81) and (261.16,110.14) .. (265.83,110.14) ;
	\draw   (608.83,183.14) .. controls (613.5,183.1) and (615.81,180.75) .. (615.78,176.08) -- (615.41,132.08) .. controls (615.36,125.41) and (617.66,122.06) .. (622.33,122.02) .. controls (617.66,122.06) and (615.3,118.75) .. (615.25,112.08)(615.28,115.08) -- (614.89,68.08) .. controls (614.85,63.41) and (612.5,61.1) .. (607.83,61.14) ;
	\draw   (609.83,256.14) .. controls (614.5,256.23) and (616.88,253.95) .. (616.97,249.28) -- (617.13,241.78) .. controls (617.26,235.11) and (619.66,231.83) .. (624.33,231.92) .. controls (619.66,231.83) and (617.4,228.45) .. (617.54,221.78)(617.48,224.78) -- (617.69,214.28) .. controls (617.78,209.61) and (615.5,207.23) .. (610.83,207.14) ;
	\draw   (609.83,330.14) .. controls (614.5,330.14) and (616.83,327.81) .. (616.83,323.14) -- (616.83,315.64) .. controls (616.83,308.97) and (619.16,305.64) .. (623.83,305.64) .. controls (619.16,305.64) and (616.83,302.31) .. (616.83,295.64)(616.83,298.64) -- (616.83,288.14) .. controls (616.83,283.47) and (614.5,281.14) .. (609.83,281.14) ;
	
	\draw (166,155.87) node [anchor=north west][inner sep=0.75pt]    {$\mathcal{F} =$};
	\draw (262,48.87) node [anchor=north west][inner sep=0.75pt]    {$\begin{array}{ c c c c c c c c c c c c }
			\bullet  & \cdots  & \bullet  & \bullet  & \cdots  & \bullet  & \bullet  & \cdots  & \bullet  & \bullet  & \cdots  & \bullet \\
			\vdots  &  & \vdots  & \vdots  &  & \vdots  & \vdots  &  & \vdots  & \vdots  &  & \vdots \\
			\bullet  & \cdots  & \bullet  & \bullet  & \cdots  & \bullet  & \bullet  & \cdots  & \bullet  & \bullet  & \cdots  & \bullet \\
			&  &  & \bullet  & \cdots  & \bullet  & \bullet  & \cdots  & \bullet  & \bullet  & \cdots  & \bullet \\
			&  &  & \vdots  &  & \vdots  & \vdots  &  & \vdots  & \vdots  &  & \vdots \\
			&  &  & \bullet  & \cdots  & \bullet  & \bullet  & \cdots  & \bullet  & \bullet  & \cdots  & \bullet \\
			&  &  &  &  &  &  &  &  &  &  & \bullet \\
			&  &  &  &  &  &  &  &  &  &  & \vdots \\
			&  &  &  &  &  &  &  &  &  &  & \bullet \\
			&  &  &  &  &  &  &  &  &  &  & \bullet \\
			&  &  &  &  &  &  &  &  &  &  & \vdots \\
			&  &  &  &  &  &  &  &  &  &  & \bullet 
		\end{array}$};
	\draw (259,21.54) node [anchor=north west][inner sep=0.75pt]    {$n-y-\delta +1$};
	\draw (376,23.54) node [anchor=north west][inner sep=0.75pt]    {$\delta -1$};
	\draw (453,20.54) node [anchor=north west][inner sep=0.75pt]    {$y-\delta +1$};
	\draw (553,23.54) node [anchor=north west][inner sep=0.75pt]    {$\delta -1$};
	\draw (200,79.54) node [anchor=north west][inner sep=0.75pt]    {$\delta -2$};
	\draw (632,113.54) node [anchor=north west][inner sep=0.75pt]    {$y-1$};
	\draw (631,223.54) node [anchor=north west][inner sep=0.75pt]    {$z$};
	\draw (630,296.54) node [anchor=north west][inner sep=0.75pt]    {$n-y$};
    \draw (700,180) node [anchor=north west][inner sep=0.75pt]    {$.$};
\end{tikzpicture}
\end{center}

Assume that the parameters of $\mathcal F$ satisfy
\[
n=2y,\qquad
z\ge y-1,\qquad
y\ge \delta.
\]

Then, for every prime power $q$, there exists an optimal
$
[\mathcal F,k,\delta]_q
$
FDRM code, where
\[
k=v_{\delta-2}(\mathcal F,\delta).
\]

Equivalently, the constructed code attains the Singleton-type upper bound of Lemma~\ref{lem1}.

\end{theo}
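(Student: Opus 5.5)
We plan to obtain the code from Construction~\ref{con3}, cutting $\mathcal F$ into three pieces. The block $\mathcal F_3$ is the right part: the rightmost $y$ columns (widths $y-\delta+1$ and $\delta-1$) restricted to the rows below the first $\delta-2$ rows. The rows are read from the top $y-1$ band of the last column group together with the $z$ rows beneath, so that $\mathcal F_3$ has shape $[\,y-1,\dots,y-1,\,y-1+z\,]$ (suitably trimmed). The top-left component $\mathcal F_1$ consists of the first $\delta-2$ rows over the left $n-y$ columns together with the top rows of the right block. The component $\mathcal F_2$ is the bottom $n-y=y$ dots of the last column. We then let $\mathcal F_{12}$ be a proper combination of $\mathcal F_1$ and $\mathcal F_2$ in the sense of Definition~\ref{def2}.

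Concretely, take $\mathcal F_{12}$ to be the $(\delta-2+?)\times y$ diagram obtained by transplanting the last-column tail $\mathcal F_2$ (of length $y$) beside $\mathcal F_1$. Because $n=2y$, the left part has exactly $y$ columns and the tail has length $y$, so the $\delta-2$ full rows of length $n$ can be folded into a $(\delta-2)\times y$ full rectangle plus a column of height $y$. Such a rectangle-plus-column diagram (or its transpose) has at least $y\ge\delta$ dots in its rightmost $\delta-2$ columns, after choosing the folding appropriately. Lemma~\ref{lem3} then gives an optimal $[\mathcal F_{12},k_1,\delta]_q$ code with $k_1$ equal to the dots outside the first $\delta-2$ rows, i.e.\ essentially the $y$ dots contributed by the tail. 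For $\mathcal F_3$, every column has at least $y-1$ dots and the last column has at least $y-1+z\ge 2y-2\ge (y-1)+\gamma_0$. Hence Lemma~\ref{lem4} (or Lemma~\ref{lem6}, using $\gamma_{k+1}\ge$ width) applies with distance $\delta$ and yields $k_3=\sum_{i<y-\delta+1}\gamma_i=(y-\delta+1)(y-1)$. The inequality $z\ge y-1$ is precisely what feeds condition (1) of Lemma~\ref{lem4}.

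Construction~\ref{con3} then gives an $[\mathcal F,k_1+k_3,\delta]_q$ code. For optimality, we compute $v_{\delta-2}(\mathcal F,\delta)$ directly: delete the first $\delta-2$ rows and the rightmost column. What remains is the middle band of the $n-y$ left-block columns that reach below row $\delta-2$, plus $\mathcal F_3$ minus its last column, plus nothing else. We check that this count equals $k_1+k_3$, and that every other $v_i$ is at least this value. For the latter, moving from $i=\delta-2$ to smaller $i$ removes the full bottom part of a further column (at least $y-1+z+\dots$ dots) while restoring a row of length $n=2y$. The inequalities $z\ge y-1$ and $y\ge\delta$ should make the difference nonnegative.

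The main obstacle is the precise choice of $\mathcal F_1,\mathcal F_2$ and of the proper combination $\mathcal F_{12}$ so that three things hold simultaneously: row and column incidences are preserved, the combined diagram satisfies the hypothesis of Lemma~\ref{lem3} (which is where $n=2y$ is used), and its dimension exactly matches the left-hand contribution to $v_{\delta-2}$. We would first try stacking $\mathcal F_2$ as a new rightmost column of height $y$ next to the transposed $\mathcal F_1$. If the dot counts are off by the top-row contributions, we would instead take the reversed or transposed diagram, using the symmetry noted after the FDRM definition.
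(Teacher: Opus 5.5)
You have a genuine gap, in the left component. Your outer framework matches the paper's: Construction~\ref{con3}; $\mathcal F_2$ is the bottom $n-y=y$ dots of the last column; $\mathcal F_3=[y-1,\dots,y-1,y-1+z]$ is handled by Lemma~\ref{lem4}, with $z\ge y-1$ giving condition (1), so $k_3=(y-\delta+1)(y-1)$; and $v_{\delta-2}(\mathcal F,\delta)=(y-\delta+1)(y+\delta-2)$. Your two descriptions of $\mathcal F_3$ conflict, and the one that works is the shape above, which includes the first $\delta-2$ rows of the right $y$ columns. Lemma~\ref{lem4} for $\mathcal F_3$ and Lemma~\ref{lem3} for $\mathcal F_{12}$ is in fact the right assignment when $\delta\ge 3$. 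The paper's proof cites them the other way round, but Lemma~\ref{lem3} fails on $\mathcal F_3$: its columns of height $y-1$ are shorter than its width $y$.

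Whichever reading of $\mathcal F_3$ you take, your $\mathcal F_1$ lies inside the first $\delta-2$ rows of $\mathcal F$; the ``top rows of the right block'' cannot be in both $\mathcal F_1$ and $\mathcal F_3$. A proper combination preserves row and column incidences. So every matrix on your $\mathcal F_{12}$ is a matrix with at most $\delta-2$ nonzero rows (or columns, if you transpose) plus a rank-at-most-one matrix coming from the single column $\mathcal F_2$. Its rank is therefore at most $\delta-1$, so $\mathcal F_{12}$ carries no nonzero code of distance $\delta$, and $k_1=0$, not ``essentially $y$''.

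Moreover, the $(y-\delta+1)\times(\delta-1)$ band of $\mathcal F$ in rows $\delta-2,\dots,y-2$ and columns $n-y-\delta+1,\dots,n-y-1$ lies in none of your pieces. Yet it is exactly the left-hand part of $v_{\delta-2}$. Even $k_1=y$ would give $y+(y-\delta+1)(y-1)\neq v_{\delta-2}$ in general; for $y=5$, $\delta=3$ this is $17\neq 18$.

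Your paraphrase of Lemma~\ref{lem3} is also off. It requires each of the rightmost $\delta-1$ columns to contain at least as many dots as the diagram has columns. Its dimension is the number of dots outside those $\delta-1$ columns, not the dots below the first $\delta-2$ rows.

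The missing idea is to take $\mathcal F_1$ to be the entire left part of $\mathcal F$, its first $n-y=y$ columns. These have length $\delta-2$ on the first $y-\delta+1$ columns and $y-1$ on the last $\delta-1$. Place the transposed tail $\mathcal F_2^t$, a row of $n-y$ dots, on top of $\mathcal F_1$. It fits exactly because $n-y=y$ is the width of $\mathcal F_1$, and this is where $n=2y$ is used. Your fallback of appending $\mathcal F_2$ as a rightmost column to $\mathcal F_1^t$ gives the same column lengths once $\mathcal F_1$ is chosen this way.

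The resulting $\mathcal F_{12}$ is a $y\times y$ diagram with column lengths $\delta-1$ on the first $y-\delta+1$ columns and $y$ on the last $\delta-1$. Lemma~\ref{lem3} then gives an optimal $[\mathcal F_{12},(y-\delta+1)(\delta-1),\delta]_q$ code, and $(y-\delta+1)(\delta-1)$ is exactly the size of the band above. Hence $k_1+k_3=(y-\delta+1)(y+\delta-2)=v_{\delta-2}(\mathcal F,\delta)$.

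Finally, you do not need to compare the other $v_i$. Once a code of dimension $v_{\delta-2}$ exists, Lemma~\ref{lem1} forces $v_{\min}(\mathcal F,\delta)=v_{\delta-2}(\mathcal F,\delta)$.
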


\begin{proof}
We first decompose the Ferrers diagram $\mathcal{F}$ into three disjoint subdiagrams $\mathcal{F}_1$, $\mathcal{F}_2$, $\mathcal{F}_3$, whose shapes are displayed in the figure:
\begin{center}
	\tikzset{every picture/.style={line width=0.75pt}} 
	
	\begin{tikzpicture}[x=0.63pt,y=0.66pt,yscale=-1,xscale=1]
		
		\draw   (254.24,51.65) .. controls (254.31,46.98) and (252.02,44.61) .. (247.35,44.54) -- (232.85,44.31) .. controls (226.18,44.2) and (222.89,41.82) .. (222.96,37.15) .. controls (222.89,41.82) and (219.52,44.1) .. (212.85,43.99)(215.85,44.04) -- (198.35,43.76) .. controls (193.68,43.69) and (191.31,45.98) .. (191.24,50.65) ;
		\draw   (343.24,52.65) .. controls (343.24,47.98) and (340.91,45.65) .. (336.24,45.65) -- (321.74,45.65) .. controls (315.07,45.65) and (311.74,43.32) .. (311.74,38.65) .. controls (311.74,43.32) and (308.41,45.65) .. (301.74,45.65)(304.74,45.65) -- (287.24,45.65) .. controls (282.57,45.65) and (280.24,47.98) .. (280.24,52.65) ;
		\draw   (349.24,177.65) .. controls (353.91,177.65) and (356.24,175.32) .. (356.24,170.65) -- (356.24,127.15) .. controls (356.24,120.48) and (358.57,117.15) .. (363.24,117.15) .. controls (358.57,117.15) and (356.24,113.82) .. (356.24,107.15)(356.24,110.15) -- (356.24,63.65) .. controls (356.24,58.98) and (353.91,56.65) .. (349.24,56.65) ;
		\draw   (184.24,55.65) .. controls (179.57,55.65) and (177.24,57.98) .. (177.24,62.65) -- (177.24,70.65) .. controls (177.24,77.32) and (174.91,80.65) .. (170.24,80.65) .. controls (174.91,80.65) and (177.24,83.98) .. (177.24,90.65)(177.24,87.65) -- (177.24,98.65) .. controls (177.24,103.32) and (179.57,105.65) .. (184.24,105.65) ;
		\draw   (512.24,146.65) .. controls (516.91,146.56) and (519.19,144.18) .. (519.09,139.51) -- (518.94,132) .. controls (518.81,125.34) and (521.07,121.96) .. (525.74,121.86) .. controls (521.07,121.96) and (518.67,118.68) .. (518.53,112.01)(518.59,115.01) -- (518.38,104.5) .. controls (518.29,99.84) and (515.91,97.56) .. (511.24,97.65) ;
		
		\draw (80,112.4) node [anchor=north west][inner sep=0.75pt]    {$\mathcal{F}_{1} =$};
		\draw (178,45.4) node [anchor=north west][inner sep=0.75pt]    {$\begin{array}{ c c c c c c }
				\bullet  & \cdots  & \bullet  & \bullet  & \cdots  & \bullet \\
				\vdots  &  & \vdots  & \vdots  &  & \vdots \\
				\bullet  & \cdots  & \bullet  & \bullet  & \cdots  & \bullet \\
				&  &  & \bullet  & \cdots  & \bullet \\
				&  &  & \vdots  &  & \vdots \\
				&  &  & \bullet  & \cdots  & \bullet 
			\end{array}$};
		\draw (449,109.4) node [anchor=north west][inner sep=0.75pt]    {$\mathcal{F}_{2} =$};
		\draw (490,86.4) node [anchor=north west][inner sep=0.75pt]    {$\begin{array}{ c }
				\bullet \\
				\vdots \\
				\bullet 
			\end{array}$};
		\draw (180,17.4) node [anchor=north west][inner sep=0.75pt]    {$n-y-\delta +1$};
		\draw (291,18.4) node [anchor=north west][inner sep=0.75pt]    {$\delta -1$};
		\draw (368,106.4) node [anchor=north west][inner sep=0.75pt]    {$y-1$};
		\draw (128,71.4) node [anchor=north west][inner sep=0.75pt]    {$\delta -2$};
		\draw (411,117.4) node [anchor=north west][inner sep=0.75pt]    {$,$};
		\draw (533,112.4) node [anchor=north west][inner sep=0.75pt]    {$n-y$};
		\draw (586,118.4) node [anchor=north west][inner sep=0.75pt]{$,$};		
	\end{tikzpicture}
\end{center}

\tikzset{every picture/.style={line width=0.75pt}} 

~~~~~~~~~~~~\begin{tikzpicture}[x=0.65pt,y=0.64pt,yscale=-1,xscale=1]
	
	\draw   (422.24,62.65) .. controls (422.24,57.98) and (419.91,55.65) .. (415.24,55.65) -- (401.24,55.65) .. controls (394.57,55.65) and (391.24,53.32) .. (391.24,48.65) .. controls (391.24,53.32) and (387.91,55.65) .. (381.24,55.65)(384.24,55.65) -- (367.24,55.65) .. controls (362.57,55.65) and (360.24,57.98) .. (360.24,62.65) ;
	\draw   (511.24,63.65) .. controls (511.24,58.98) and (508.91,56.65) .. (504.24,56.65) -- (490.24,56.65) .. controls (483.57,56.65) and (480.24,54.32) .. (480.24,49.65) .. controls (480.24,54.32) and (476.91,56.65) .. (470.24,56.65)(473.24,56.65) -- (456.24,56.65) .. controls (451.57,56.65) and (449.24,58.98) .. (449.24,63.65) ;
	\draw   (518.24,189.65) .. controls (522.91,189.61) and (525.22,187.26) .. (525.18,182.59) -- (524.82,138.09) .. controls (524.76,131.42) and (527.06,128.07) .. (531.73,128.03) .. controls (527.06,128.07) and (524.7,124.76) .. (524.65,118.09)(524.68,121.09) -- (524.29,73.59) .. controls (524.25,68.92) and (521.9,66.61) .. (517.23,66.65) ;
	\draw   (517.24,262.65) .. controls (521.91,262.65) and (524.24,260.32) .. (524.24,255.65) -- (524.24,248.15) .. controls (524.24,241.48) and (526.57,238.15) .. (531.24,238.15) .. controls (526.57,238.15) and (524.24,234.82) .. (524.24,228.15)(524.24,231.15) -- (524.24,220.65) .. controls (524.24,215.98) and (521.91,213.65) .. (517.24,213.65) ;
	
	\draw (276,133.4) node [anchor=north west][inner sep=0.75pt]    {$\mathcal{F}_{3} =$};
	\draw (347,55.4) node [anchor=north west][inner sep=0.75pt]    {$\begin{array}{ c c c c c c }
			\bullet  & \cdots  & \bullet  & \bullet  & \cdots  & \bullet \\
			\vdots  &  & \vdots  & \vdots  &  & \vdots \\
			\bullet  & \cdots  & \bullet  & \bullet  & \cdots  & \bullet \\
			\bullet  & \cdots  & \bullet  & \bullet  & \cdots  & \bullet \\
			\vdots  &  & \vdots  & \vdots  &  & \vdots \\
			\bullet  & \cdots  & \bullet  & \bullet  & \cdots  & \bullet \\
			&  &  &  &  & \bullet \\
			&  &  &  &  & \vdots \\
			&  &  &  &  & \bullet \\
		\end{array}$};
	\draw (361,28.4) node [anchor=north west][inner sep=0.75pt]    {$y-\delta +1$};
	\draw (462,30.4) node [anchor=north west][inner sep=0.75pt]    {$\delta -1$};
	\draw (541,119.4) node [anchor=north west][inner sep=0.75pt]    {$y-1$};
	\draw (544,229.4) node [anchor=north west][inner sep=0.75pt]    {$z$};
	\draw (600,150) node [anchor=north west][inner sep=0.75pt]    {$.$};
\end{tikzpicture}\\
We build a proper combination $\mathcal{F}_{12}$ by vertically stacking $\mathcal{F}_{2}^{t}$ on top of $\mathcal{F}_1$, i.e., $$
\mathcal{F}_{12}=\begin{pmatrix}
	\mathcal{F}_{2}^{t}\\
	\mathcal{F}_1\\
\end{pmatrix}.
$$Lemma~\ref{lem4} guarantees the existence of an optimal
$
[\mathcal{F}_{12},(y-\delta+1)(\delta-1),\delta]_q
$
FDRM code $\mathcal{C}_{12}$.
On the other hand, Lemma~\ref{lem3} yields an optimal
$
[\mathcal{F}_{3},(y-\delta+1)(y-1),\delta]_q
$
FDRM code $\mathcal{C}_{3}$.
Applying Construction~\ref{con3} to the pair $(\mathcal{C}_{12},\mathcal{C}_{3})$, we obtain an $
[\mathcal{F},(y-\delta+1)(y+\delta-2),\delta]_q
$
FDRM code $\mathcal{C}$.

Finally, to establish its optimality, it suffices to evaluate the Singleton-type upper bound of Lemma~\ref{lem1}. Indeed, counting the dots of $\mathcal{F}$ outside the first $\delta-2$ rows and the rightmost column shows that
\[
v_{\delta-2}(\mathcal{F},\delta)
=(y-\delta+1)(y+\delta-2),
\]
which coincides with the dimension of $\mathcal{C}$. Hence, $\mathcal{C}$ attains the Singleton-type bound and is therefore an optimal FDRM code.
\end{proof}

We instantiate Theorem \ref{theo2} with a concrete Ferrers diagram construction below.

\begin{example}
Let $\mathcal{F}$ denote the composite Ferrers diagram illustrated by the colored dot pattern:
$$
\mathcal{F}=\begin{array}{cccccccccc}
	\color{green}\bullet&\color{green}\bullet&\color{green}\bullet&\color{green}\bullet&\color{green}\bullet&\bullet&\bullet&\bullet&\bullet&\bullet\\
	\color{green}\bullet&\color{green}\bullet&\color{green}\bullet&\color{green}\bullet&\color{green}\bullet&\bullet&\bullet&\bullet&\bullet&\bullet\\&      & \color{green}\bullet&\color{green}\bullet&\color{green}\bullet&\bullet&\bullet&\bullet&\bullet&\bullet\\
	& & \color{green}\bullet & \color{green}\bullet & \color{green}\bullet   &  \bullet     &   \bullet    &\bullet&\bullet&\bullet\\
	&       &       &       &       &       &      &&&\bullet\\  
	&       &       &       &       &       &       &      &    &\bullet\\ 
	&       &       &       &       &       &       &      &    &\bullet\\ 
	&       &       &       &       &       &       &      &    &\bullet\\ 
	&       &       &       &       &       &       &      &    &\color{red}\bullet\\ 
	&       &       &       &       &       &       &      &    &\color{red}\bullet\\ 
	&       &       &       &       &       &       &      &    &\color{red}\bullet\\ 
	&       &       &       &       &       &       &      &    &\color{red}\bullet \\
	&       &       &       &       &       &       &      &    &\color{red}\bullet
\end{array}.
$$
This diagram decomposes into three disjoint subdiagrams:
\begin{center}
$
\mathcal{F}_1=\begin{array}{ccccc}
	\color{green}\bullet&\color{green}\bullet&\color{green}\bullet&\color{green}\bullet&\color{green}\bullet\\
	\color{green}\bullet&\color{green}\bullet&\color{green}\bullet&\color{green}\bullet&\color{green}\bullet\\
	&        &   \color{green}\bullet&\color{green}\bullet&\color{green}\bullet \\
	&       &\color{green}\bullet&\color{green}\bullet&\color{green}\bullet        
\end{array},~~~~~~~~
$
$
\mathcal{F}_{2}=\begin{array}{c}
	\color{red}	\bullet\\
	\color{red}	\bullet \\
	\color{red}	\bullet \\
	\color{red}	\bullet \\
	\color{red}	\bullet     
\end{array},~~~~~~~~
$
$
\mathcal{F}_3=\begin{array}{ccccc}
	\bullet&\bullet&\bullet&\bullet&\bullet\\
	\bullet&\bullet&\bullet&\bullet&\bullet\\
	\bullet&\bullet&\bullet&\bullet&\bullet\\
	\bullet&\bullet  &\bullet&\bullet&\bullet  \\
	&       &&&\bullet \\
	&       &&&\bullet \\
	&       &&&\bullet \\
	&       &&&\bullet     
\end{array}.
$
\end{center}
Theorem \ref{theo2} guarantees an optimal $\left[\mathcal{F},14,4\right]_q$ FDRM code for any prime power $q$.
\end{example}

Motivated by the three-block merging strategy in Construction~\ref{con3}, we extend this methodology to a more general framework by partitioning a large Ferrers diagram into four pairwise disjoint subdiagrams: the top-left, bottom-left, top-right, and bottom-right blocks. This finer decomposition provides greater flexibility in designing rank-metric codes while preserving the structural compatibility required for the merging process.

The key idea is to combine suitably chosen Ferrers diagrams associated with the bottom-left and top-right subdiagrams, while the remaining two subdiagrams naturally complete the global Ferrers structure. By carefully coordinating the component codes supported on these four blocks, we obtain a new family of optimal FDRM codes that extends the underlying principle of Construction~\ref{con3}. This generalized four-block construction not only encompasses the original three-block merging technique as a special case but also significantly enlarges the range of Ferrers diagrams for which optimal FDRM codes can be constructed.

\begin{construction}\label{con4}
Let
\begin{center}
\tikzset{every picture/.style={line width=0.75pt}} 

\begin{tikzpicture}[x=0.62pt,y=0.62pt,yscale=-1,xscale=1]
	
	\draw   (214.83,50.14) .. controls (210.16,50.14) and (207.83,52.47) .. (207.83,57.14) -- (207.83,65.14) .. controls (207.83,71.81) and (205.5,75.14) .. (200.83,75.14) .. controls (205.5,75.14) and (207.83,78.47) .. (207.83,85.14)(207.83,82.14) -- (207.83,93.14) .. controls (207.83,97.81) and (210.16,100.14) .. (214.83,100.14) ;
	\draw   (215,123) .. controls (210.33,122.99) and (207.99,125.31) .. (207.98,129.98) -- (207.95,139.05) .. controls (207.93,145.72) and (205.59,149.04) .. (200.92,149.02) .. controls (205.59,149.04) and (207.91,152.38) .. (207.89,159.05)(207.9,156.05) -- (207.86,168.11) .. controls (207.85,172.78) and (210.17,175.12) .. (214.84,175.14) ;
	\draw   (372.83,45.14) .. controls (372.83,40.47) and (370.5,38.14) .. (365.83,38.14) -- (307.33,38.14) .. controls (300.66,38.14) and (297.33,35.81) .. (297.33,31.14) .. controls (297.33,35.81) and (294,38.14) .. (287.33,38.14)(290.33,38.14) -- (228.83,38.14) .. controls (224.16,38.14) and (221.83,40.47) .. (221.83,45.14) ;
	\draw   (547.83,43.14) .. controls (547.8,38.47) and (545.46,36.15) .. (540.79,36.18) -- (483.29,36.56) .. controls (476.62,36.61) and (473.27,34.3) .. (473.24,29.63) .. controls (473.27,34.3) and (469.96,36.65) .. (463.29,36.7)(466.29,36.68) -- (405.79,37.08) .. controls (401.12,37.11) and (398.8,39.45) .. (398.83,44.12) ;
	\draw   (556.83,176.14) .. controls (561.5,176.1) and (563.81,173.75) .. (563.78,169.08) -- (563.42,123.08) .. controls (563.37,116.41) and (565.67,113.06) .. (570.34,113.03) .. controls (565.67,113.06) and (563.31,109.75) .. (563.26,103.08)(563.28,106.08) -- (562.9,57.08) .. controls (562.86,52.41) and (560.51,50.1) .. (555.84,50.14) ;
	\draw   (554.83,248.14) .. controls (559.5,248.04) and (561.78,245.66) .. (561.69,240.99) -- (561.54,233.99) .. controls (561.4,227.32) and (563.66,223.94) .. (568.33,223.84) .. controls (563.66,223.94) and (561.26,220.66) .. (561.13,213.99)(561.19,216.99) -- (560.98,206.99) .. controls (560.88,202.32) and (558.5,200.04) .. (553.84,200.14) ;
	\draw   (555.83,323.14) .. controls (560.5,322.95) and (562.73,320.52) .. (562.54,315.86) -- (562.23,308.34) .. controls (561.96,301.68) and (564.16,298.25) .. (568.82,298.06) .. controls (564.16,298.25) and (561.69,295.02) .. (561.42,288.36)(561.54,291.36) -- (561.11,280.84) .. controls (560.92,276.18) and (558.5,273.95) .. (553.83,274.14) ;
	\draw   (219.83,183.14) .. controls (219.83,187.81) and (222.16,190.14) .. (226.83,190.14) -- (241.83,190.14) .. controls (248.5,190.14) and (251.83,192.47) .. (251.83,197.14) .. controls (251.83,192.47) and (255.16,190.14) .. (261.83,190.14)(258.83,190.14) -- (276.83,190.14) .. controls (281.5,190.14) and (283.83,187.81) .. (283.83,183.14) ;
	\draw   (311.83,183.14) .. controls (311.91,187.8) and (314.28,190.09) .. (318.95,190.02) -- (332.45,189.8) .. controls (339.12,189.69) and (342.49,191.96) .. (342.56,196.63) .. controls (342.49,191.96) and (345.78,189.58) .. (352.45,189.47)(349.45,189.52) -- (365.95,189.25) .. controls (370.62,189.17) and (372.91,186.8) .. (372.83,182.13) ;
	
	\draw (123,184.4) node [anchor=north west][inner sep=0.75pt]    {$\mathcal{F} =$};
	\draw (208,37.4) node [anchor=north west][inner sep=0.75pt]    {$\begin{array}{ c c c c c c c c c c c c }
			\bullet  & \cdots  & \bullet  & \bullet  & \cdots  & \bullet  & \bullet  & \cdots  & \bullet  & \bullet  & \cdots  & \bullet \\
			\vdots  &  &  &  & \mathcal{F}_{1} & \vdots  & \vdots  &  &  &  & \mathcal{F}_{4} & \vdots \\
			\bullet  & \cdots  & \bullet  & \bullet  &  & \bullet  & \bullet  &  & \bullet  & \bullet  &  & \bullet \\
			\circ  & \cdots  & \circ  & \bullet  &  & \bullet  & \bullet  &  & \bullet  & \bullet  &  & \bullet \\
			\vdots  & \mathcal{F}_{2} & \vdots  & \vdots  &  & \vdots  & \vdots  &  &  &  &  & \vdots \\
			\circ  & \cdots  & \circ  & \bullet  & \cdots  & \bullet  & \bullet  & \cdots  & \bullet  & \bullet  &  & \bullet \\
			&  &  &  &  &  &  &  &  & \bullet  &  & \bullet \\
			&  &  &  &  &  &  &  &  & \vdots  &  & \vdots \\
			&  &  &  &  &  &  &  &  & \bullet  & \cdots  & \bullet \\
			&  &  &  &  &  &  &  &  & \circ  & \cdots  & \bullet \\
			&  &  &  &  &  &  &  &  & \vdots  & \mathcal{F}_{3} & \vdots \\
			&  &  &  &  &  &  &  &  & \circ  & \cdots  & \bullet 
		\end{array}$};
	\draw (184,63.4) node [anchor=north west][inner sep=0.75pt]    {$s_{1}$};
	\draw (176,139.4) node [anchor=north west][inner sep=0.75pt]    {$m_{2}$};
	\draw (246,201) node [anchor=north west][inner sep=0.75pt]    {$n_{2}$};
	\draw (336,199) node [anchor=north west][inner sep=0.75pt]    {$t_{1}$};
	\draw (288,9.4) node [anchor=north west][inner sep=0.75pt]    {$n_{1}$};
	\draw (464,10.4) node [anchor=north west][inner sep=0.75pt]    {$n_{4}$};
	\draw (576,103.4) node [anchor=north west][inner sep=0.75pt]    {$m_{1}$};
	\draw (574,214.4) node [anchor=north west][inner sep=0.75pt]    {$m_{4} -m_{1}$};
	\draw (574,289.4) node [anchor=north west][inner sep=0.75pt]    {$m_{3}$};
\end{tikzpicture}
\end{center}
denote an $m \times n$ Ferrers diagram, partitioned into four subdiagrams $\mathcal{F}_1$, $\mathcal{F}_2$, $\mathcal{F}_3$, $\mathcal{F}_4$. For each $1\leq i \leq 4$, $\mathcal{F}_i$ is of size $m_i \times n_i$, with parameter constraints: $$n=n_1+n_4,\  m=m_3+m_4,\  m_1 \geq s_1+m_2,\  n_1 \geq s_1+t_1,\  n_2 \geq n_3.$$Assume $\mathcal{F}_{24}$ is a proper combination of $\mathcal{F}_2$ and $\mathcal{F}_4$, and let $\mathcal{C}_{24}$ be an $\left[ \mathcal{F}_{24},k_4,\delta_4\right]_q$ code defined on $\mathcal{F}_{24}$. If codes $\mathcal{C}_1$ with parameters $\left[ \mathcal{F}_1,k_1,\delta_1\right]_q$ and $\mathcal{C}_3$ with parameters $\left[ \mathcal{F}_3,k_3,\delta_3\right]_q$ exist, we can construct an $\left[ \mathcal{F}, \min\left\lbrace k_1,k_3\right\rbrace +k_4 ,\min \left\lbrace \delta_1+\delta_3,\delta_4\right\rbrace \right]_q$ code $\mathcal{C}$.
\end{construction}

\begin{proof}
The proof reduces the proposed four-block construction to the three-block framework. To this end, we first introduce the auxiliary Ferrers diagram
\[
\mathcal{F}^*=
\left(
\begin{array}{cc}
\mathcal{F}_1 & \mathcal{F}_{24}\\
& \mathcal{F}_3
\end{array}
\right),
\]
where $\mathcal{F}_{24}$ denotes the proper combination of $\mathcal{F}_2$ and $\mathcal{F}_4$. By assumption, $\mathcal{F}_{24}$ is itself a Ferrers diagram, and consequently $\mathcal{F}^*$ also satisfies the Ferrers property.

By Lemma \ref{lem8}, there exists an
$
[\mathcal{F}^*,k,\delta]_q
$
FDRM code $\mathcal{C}^*$ with
\[
k=\min\{k_1,k_3\}+k_4,\qquad
\delta=\min\{\delta_1+\delta_3,\delta_4\}.
\]

We now transfer this code to the original Ferrers diagram $\mathcal{F}$.
Observe that $\mathcal{F}$ and $\mathcal{F}^*$ differ only in the geometric arrangement of the coordinates belonging to the combined block $\mathcal{F}_{24}$. Consequently, there exists a natural coordinate bijection
\[
\varphi:\mathcal{F}\longrightarrow\mathcal{F}^*.
\]For every codeword $\boldsymbol{D}\in\mathcal{C}^*$, define an $m\times n$ matrix $\boldsymbol{C}_{\boldsymbol{D}}$ by
\[
\boldsymbol{C}_{\boldsymbol{D}}(i,j)=
\begin{cases}
\boldsymbol{D}(\varphi(i,j)), & \text{if }(i,j)\in\mathcal{F},\\
0, & \text{otherwise}.
\end{cases}
\]

Finally, let
\[
\mathcal{C}
=
\{\boldsymbol{C}_{\boldsymbol{D}}:\boldsymbol{D}\in\mathcal{C}^*\}.
\]

Since $\varphi$ is merely a relabeling of the coordinates inside the Ferrers support, it preserves both linearity and the rank of every codeword, and hence also preserves the minimum rank distance. Therefore,
\[
\dim(\mathcal{C})=\dim(\mathcal{C}^*)=k
\]
and
\[
d_R(\mathcal{C})=d_R(\mathcal{C}^*)=\delta.
\]
Thus $\mathcal{C}$ is an optimal
$
[\mathcal{F},k,\delta]_q
$
FDRM code, completing the proof.
\end{proof}

Herein, we construct a novel family of optimal FDRM codes using Construction \ref{con4}.

\begin{theo} \label{theo7}
Let\\
\begin{center}
\tikzset{every picture/.style={line width=0.75pt}} 

\begin{tikzpicture}[x=0.653pt,y=0.6pt,yscale=-1,xscale=1]
	
	\draw   (394.83,42.14) .. controls (394.8,37.47) and (392.46,35.15) .. (387.79,35.18) -- (316.79,35.58) .. controls (310.12,35.62) and (306.78,33.31) .. (306.75,28.64) .. controls (306.78,33.31) and (303.46,35.66) .. (296.79,35.7)(299.79,35.68) -- (225.79,36.11) .. controls (221.12,36.14) and (218.8,38.48) .. (218.83,43.15) ;
	\draw   (479.83,42.14) .. controls (479.83,37.47) and (477.5,35.14) .. (472.83,35.14) -- (458.33,35.14) .. controls (451.66,35.14) and (448.33,32.81) .. (448.33,28.14) .. controls (448.33,32.81) and (445,35.14) .. (438.33,35.14)(441.33,35.14) -- (423.83,35.14) .. controls (419.16,35.14) and (416.83,37.47) .. (416.83,42.14) ;
	\draw   (569.83,43.14) .. controls (569.83,38.47) and (567.5,36.14) .. (562.83,36.14) -- (548.33,36.14) .. controls (541.66,36.14) and (538.33,33.81) .. (538.33,29.14) .. controls (538.33,33.81) and (535,36.14) .. (528.33,36.14)(531.33,36.14) -- (513.83,36.14) .. controls (509.16,36.14) and (506.83,38.47) .. (506.83,43.14) ;
	\draw   (576.83,195.14) .. controls (581.5,195.17) and (583.85,192.85) .. (583.88,188.18) -- (584.27,132.68) .. controls (584.32,126.01) and (586.67,122.7) .. (591.34,122.73) .. controls (586.67,122.7) and (584.36,119.35) .. (584.41,112.68)(584.39,115.68) -- (584.8,57.18) .. controls (584.83,52.51) and (582.52,50.17) .. (577.85,50.14) ;
	\draw   (576.83,269.14) .. controls (581.5,269.23) and (583.88,266.95) .. (583.98,262.29) -- (584.12,255.78) .. controls (584.26,249.11) and (586.66,245.83) .. (591.33,245.93) .. controls (586.66,245.83) and (584.4,242.45) .. (584.55,235.79)(584.48,238.79) -- (584.68,229.28) .. controls (584.78,224.61) and (582.5,222.23) .. (577.83,222.14) ;
	
	\draw (144,175.4) node [anchor=north west][inner sep=0.75pt]    {$\mathcal{F} =$};
	\draw (206,37.4) node [anchor=north west][inner sep=0.75pt]    {$\begin{array}{ c c c c c c c c c c c c c c }
			\bullet  & \bullet  & \bullet  & \bullet  & \bullet  & \cdots  & \bullet  & \bullet  & \bullet  & \cdots  & \bullet  & \bullet  & \cdots  & \bullet \\
			\bullet  & \bullet  & \bullet  & \bullet  & \bullet  & \cdots  & \bullet  & \bullet  & \bullet  & \cdots  & \bullet  & \bullet  & \cdots  & \bullet \\
			&  &  & \bullet  & \bullet  & \cdots  & \bullet  & \bullet  & \bullet  & \cdots  & \bullet  & \bullet  & \cdots  & \bullet \\
			&  &  &  & \bullet  & \cdots  & \bullet  & \bullet  & \bullet  & \cdots  & \bullet  & \bullet  & \cdots  & \bullet \\
			&  &  &  &  & \ddots  & \vdots  & \vdots  & \vdots  &  & \vdots  & \vdots  &  & \vdots \\
			&  &  &  &  &  & \bullet  & \bullet  & \bullet  & \cdots  & \bullet  & \bullet  & \cdots  & \bullet \\
			&  &  &  &  &  &  & \bullet  & \bullet  & \cdots  & \bullet  & \bullet  & \cdots  & \bullet \\
			&  &  &  &  &  &  &  &  &  &  &  &  & \bullet \\
			&  &  &  &  &  &  &  &  &  &  &  &  & \vdots \\
			&  &  &  &  &  &  &  &  &  &  &  &  & \bullet \\
			&  &  &  &  &  &  &  &  &  &  &  &  & \bullet \\
			&  &  &  &  &  &  &  &  &  &  &  &  & \bullet \\
			&  &  &  &  &  &  &  &  &  &  &  &  & \bullet 
		\end{array}$};
	\draw (287,10.4) node [anchor=north west][inner sep=0.75pt]    {$x+1$};
	\draw (417,12.4) node [anchor=north west][inner sep=0.75pt]    {$x-\delta +2$};
	\draw (519,12.4) node [anchor=north west][inner sep=0.75pt]    {$\delta -1$};
	\draw (597,115.4) node [anchor=north west][inner sep=0.75pt]    {$x$};
	\draw (597,237.4) node [anchor=north west][inner sep=0.75pt]    {$x-1$};
\end{tikzpicture}
\end{center}
denote a $(2x+2) \times(2x+2) $ Ferrers diagram with the column-length partition shown in the figure. There exists an optimal $\left[ \mathcal{F},k,x\right]_q$ FDRM code $\mathcal{C}$, where the dimension parameter satisfies $k=2x+3$.
\end{theo}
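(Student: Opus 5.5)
We plan to derive Theorem~\ref{theo7} from Construction~\ref{con4}, with the upper bound supplied by Lemma~\ref{lem1}. First we read off the column lengths of $\mathcal F$ from the figure. The first $x+1$ columns form a staircase with two full top rows: the first three columns have $2$ dots, and then the lengths grow by one per column up to $x+1$. The middle $x-\delta+2$ columns and the first $\delta-2$ of the last $\delta-1$ columns each have $x+1$ dots. The last column has $2x+2$ dots. We then split the top $(x+1)$-row region into four blocks. $\mathcal F_1$ is the upper-left block, consisting of the first $s_1=2$ rows over the leftmost $n_1$ columns. $\mathcal F_2$ is the staircase part below these rows; it is upper triangular and hence initially convex after transposition or reversal. $\mathcal F_4$ is the full top-right block. $\mathcal F_3$ is the single column of $x+1$ dots in the bottom-right, below row $x$; its length is the $x-1$ dots plus the extra dots shown. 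These four blocks must satisfy the constraints $m_1\ge s_1+m_2$, $n_1\ge s_1+t_1$ and $n_2\ge n_3$ of Construction~\ref{con4}.

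Next we choose the component codes. $\mathcal C_3$ is a single full column of length $x+1$, which supports an optimal $[\mathcal F_3,x+1,1]_q$ code. $\mathcal C_1$ is a two-row full block, taken as an MRD code with minimum distance $\delta_1=x-1$, so that $\delta_1+\delta_3=x$; we pick its width so that $k_1\ge 2x+3-k_4$. The key piece is the proper combination $\mathcal F_{24}$. We form it by stacking the (transposed or reversed) staircase $\mathcal F_2$ onto the full block $\mathcal F_4$ so that the result is a Ferrers diagram whose rightmost $x-1$ columns are long enough. On $\mathcal F_{24}$ we then take a distance-$x$ code. One source is Lemma~\ref{lem4}, whose hypotheses are $\gamma_{n-1}\ge n-1+\gamma_0$ and $\gamma_{n-\delta+1}\ge n-1$. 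If these fail, we instead use Lemma~\ref{lema}, since $\mathcal F_{24}$ is initially convex. Construction~\ref{con4} then gives an $[\mathcal F,\min\{k_1,k_3\}+k_4,x]_q$ code, and we tune the block sizes so that $\min\{k_1,k_3\}+k_4=2x+3$. For example, $k_3=x+1$ may be the binding minimum, with $k_4=x+2$.

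For optimality we evaluate $v_i(\mathcal F,x)$ for $0\le i\le x-1$. We expect the minimum to be attained at $i=x-1$: deleting the first $x-1$ rows together with the rightmost column leaves the two bottom rows of the top region plus the remaining $x+1$ dots of the last column. The column count of these leftover rows should give exactly $2x+3$. We will check that every other $v_i$ is at least this value; the count is monotone in $i$ because the staircase rows are short.

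We expect the main obstacle to be matching the bookkeeping in the figure. The required distance split is $\delta_1+\delta_3=x$ with $\delta_4=x$. We must simultaneously have $k_1,k_3\ge 2x+3-k_4$, and also show that the chosen $\mathcal F_{24}$ really admits a dimension-$k_4$ code of distance $x$. If the two-row block $\mathcal F_1$ cannot reach distance $x-1$ (its rank is at most $2$), we will reassign roles. In that case $\mathcal F_1$ becomes the tall part containing the top $x$ rows, and $\mathcal F_3$ becomes the bottom $x-1$ dots of the last column with $\delta_3=1$. The task then moves to showing that the staircase-plus-full-block $\mathcal F_{24}$ still satisfies the hypotheses of Lemma~\ref{lem4}; this is where we expect $n=2x+2$ being even to be essential.
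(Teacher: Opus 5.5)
There is a genuine gap. The proposal never settles on a decomposition for which Construction~\ref{con4} actually outputs dimension $2x+3$ and distance $x$.

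Part of the trouble is how you read the figure. The top region has $x$ rows, not $x+1$; compare the $12\times 12$ example following the theorem, with $x=5$. So $\mathcal{F}=[2,2,2,3,\ldots,x,x,\ldots,x,2x+2]$: the $x+1$ staircase columns end with length $x$, and the next $x$ columns have $x$ dots each. Below the top region, the last column carries $(x-1)+3$ dots.

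Beyond the misreading, your concrete choices fail:
\begin{itemize}
\item A two-row block cannot carry distance $x-1$, as you note yourself.
\item Stacking onto the full block gives $\gamma_0\ge x\ge 2$, so $\mathcal{F}_{24}$ is never initially convex and Lemma~\ref{lema} cannot be invoked for it.
\item The split $k_3=x+1$, $k_4=x+2$ is unsupported. With a one-column $\mathcal{F}_3$ you have $\delta_3\le1$, which forces $\delta_1=x-1$. If $\mathcal{F}_1$ is the staircase, then $k_1\le v_{\min}(\mathcal{F}_1,x-1)=3$.
\end{itemize}
So you need $k_4=2x$ at distance $x$ on $\mathcal{F}_{24}$. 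On the full $x\times(x+1)$ block, $v_{x-1}$ shows this requires at least $2x-1$ dots in its last column, i.e.\ all $x-1$ middle dots must go to $\mathcal{F}_4$. Lemma~\ref{lem4}'s condition $\gamma_{n-1}\ge n-1+\gamma_0=2x$ then asks for one dot more. Your fallback puts the bottom $x-1$ dots into $\mathcal{F}_3$, which for general $x$ leaves that last column too short.

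The missing idea is to detach exactly one dot, the one in row $1$, column $0$, as $\mathcal{F}_2$.
\begin{itemize}
\item Removing it makes $\mathcal{F}_1=[1,2,2,3,\ldots,x]$ initially convex, so Lemma~\ref{lema} gives an optimal $[\mathcal{F}_1,3,x-1]_q$ code.
\item Take $\mathcal{F}_4$ to be the full top-right $x\times(x+1)$ block together with the $x-1$ dots below it, and stack the detached dot under its last column. This gives $\mathcal{F}_{24}=[x,\ldots,x,2x]$. It meets Lemma~\ref{lem4} with equality (last column $2x=n-1+\gamma_0$ with $n=x+1$, and $\gamma_2=x=n-1$), so it supports an $[\mathcal{F}_{24},2x,x]_q$ code.
\item Let $\mathcal{F}_3$ be the last three dots of the last column, which carry an $[\mathcal{F}_3,3,1]_q$ code.
\end{itemize}
Construction~\ref{con4} then yields dimension $\min\{3,3\}+2x=2x+3$ and distance $\min\{(x-1)+1,x\}=x$.

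For optimality your index is off. With $\delta=x$, $v_i$ deletes the rightmost $x-1-i$ columns, so "first rows plus the rightmost column" is $i=x-2$, not $i=x-1$. That choice leaves rows $x-2$ and $x-1$, with $(x+2)+(x+1)=2x+3$ dots. No other $v_i$ needs checking, because $k\le v_{\min}\le v_{x-2}=k$.

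One caution applies to your route and the paper's alike. The last step of Construction~\ref{con4} transports the code from $\mathcal{F}^*$ to $\mathcal{F}$ by moving the $\mathcal{F}_2$-entry to a different row and column. Such a relabeling does not preserve rank in general, so the minimum distance of the transported code should be verified directly rather than inherited.
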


\begin{proof}
We decompose the Ferrers diagram $\mathcal{F}$ into four disjoint subdiagrams $\mathcal{F}_{1}$, $\mathcal{F}_{2}$, $\mathcal{F}_{3}$, $\mathcal{F}_{4}$, whose shapes are illustrated in the figure: 	
	\begin{center}
	\tikzset{every picture/.style={line width=0.75pt}} 
	
	\begin{tikzpicture}[x=0.68pt,y=0.62pt,yscale=-1,xscale=1]
		
		\draw   (369.83,44.14) .. controls (369.83,39.47) and (367.5,37.14) .. (362.83,37.14) -- (291.83,37.14) .. controls (285.16,37.14) and (281.83,34.81) .. (281.83,30.14) .. controls (281.83,34.81) and (278.5,37.14) .. (271.83,37.14)(274.83,37.14) -- (200.83,37.14) .. controls (196.16,37.14) and (193.83,39.47) .. (193.83,44.14) ;
		\draw   (373.83,196.14) .. controls (378.5,196.17) and (380.85,193.85) .. (380.88,189.18) -- (381.27,132.18) .. controls (381.32,125.51) and (383.67,122.2) .. (388.34,122.23) .. controls (383.67,122.2) and (381.36,118.85) .. (381.4,112.18)(381.38,115.18) -- (381.79,55.18) .. controls (381.82,50.51) and (379.51,48.17) .. (374.84,48.14) ;
		
		\draw (147,112.4) node [anchor=north west][inner sep=0.75pt]    {$\mathcal{F}_{1} =$};
		\draw (183,36.4) node [anchor=north west][inner sep=0.75pt]    {$\begin{array}{ c c c c c c c c }
				\bullet  & \bullet  & \bullet  & \bullet  & \bullet  & \cdots  & \bullet  & \bullet \\
				& \bullet  & \bullet  & \bullet  & \bullet  & \cdots  & \bullet  & \bullet \\
				&  &  & \bullet  & \bullet  & \cdots  & \bullet  & \bullet \\
				&  &  &  & \bullet  & \cdots  & \bullet  & \bullet \\
				&  &  &  &  & \ddots  & \vdots  & \vdots \\
				&  &  &  &  &  & \bullet  & \bullet \\
				&  &  &  &  &  &  & \bullet 
			\end{array}$};
		\draw (263,10.4) node [anchor=north west][inner sep=0.75pt]    {$x+1$};
		\draw (395,114.4) node [anchor=north west][inner sep=0.75pt]    {$x$};
		\draw (457,112.4) node [anchor=north west][inner sep=0.75pt]    {$\mathcal{F}_{2} =$};
		\draw (509,105) node [anchor=north west][inner sep=0.75pt]    {$\begin{array}{ c }
				\bullet 
			\end{array}$};
		\draw (413,123.4) node [anchor=north west][inner sep=0.75pt]    {$,$};
		\draw (533,123.4) node [anchor=north west][inner sep=0.75pt]    {$,$};	
	\end{tikzpicture}
\end{center}
	\begin{center}
	\tikzset{every picture/.style={line width=0.75pt}} 
	
	\begin{tikzpicture}[x=0.62pt,y=0.61pt,yscale=-1,xscale=1]
		
		\draw   (431,40) .. controls (430.99,35.33) and (428.66,33) .. (423.99,33.01) -- (408.9,33.05) .. controls (402.23,33.06) and (398.9,30.74) .. (398.89,26.07) .. controls (398.9,30.74) and (395.57,33.08) .. (388.9,33.09)(391.9,33.08) -- (373.82,33.12) .. controls (369.15,33.13) and (366.82,35.46) .. (366.83,40.13) ;
		\draw   (518.83,41.14) .. controls (518.91,36.47) and (516.62,34.1) .. (511.95,34.02) -- (497.94,33.8) .. controls (491.28,33.69) and (487.99,31.31) .. (488.06,26.64) .. controls (487.99,31.31) and (484.62,33.59) .. (477.95,33.48)(480.95,33.53) -- (463.94,33.25) .. controls (459.28,33.18) and (456.91,35.47) .. (456.83,40.14) ;
		\draw   (524.83,189.14) .. controls (529.5,189.14) and (531.83,186.81) .. (531.83,182.14) -- (531.83,126.14) .. controls (531.83,119.47) and (534.16,116.14) .. (538.83,116.14) .. controls (534.16,116.14) and (531.83,112.81) .. (531.83,106.14)(531.83,109.14) -- (531.83,50.14) .. controls (531.83,45.47) and (529.5,43.14) .. (524.83,43.14) ;
		\draw   (524.83,263.14) .. controls (529.5,263.14) and (531.83,260.81) .. (531.83,256.14) -- (531.83,248.64) .. controls (531.83,241.97) and (534.16,238.64) .. (538.83,238.64) .. controls (534.16,238.64) and (531.83,235.31) .. (531.83,228.64)(531.83,231.64) -- (531.83,221.14) .. controls (531.83,216.47) and (529.5,214.14) .. (524.83,214.14) ;
		
		\draw (162,120.4) node [anchor=north west][inner sep=0.75pt]    {$\mathcal{F}_{3} =$};
		\draw (213,93.4) node [anchor=north west][inner sep=0.75pt]    {$\begin{array}{ c }
				\bullet \\
				\bullet \\
				\bullet 
			\end{array}$};
		\draw (298,120.4) node [anchor=north west][inner sep=0.75pt]    {$\mathcal{F}_{4} =$};
		\draw (141,90.4) node [anchor=north west][inner sep=0.75pt]    {$\begin{array}{ c c c c c c }
				&  &  &  &  & \\
				&  &  &  &  & \\
				&  &  &  &  & \\
				&  &  &  &  & \\
				&  &  &  &  & \\
				&  &  &  &  & \\
				&  &  &  &  & \\
				&  &  &  &  & \\
				&  &  &  &  & \\
				&  &  &  &  & 
			\end{array}$};
		\draw (354,31.4) node [anchor=north west][inner sep=0.75pt]    {$\begin{array}{ c c c c c c }
				\bullet  & \cdots  & \bullet  & \bullet  & \cdots  & \bullet \\
				\bullet  & \cdots  & \bullet  & \bullet  & \cdots  & \bullet \\
				\bullet  & \cdots  & \bullet  & \bullet  & \cdots  & \bullet \\
				\bullet  & \cdots  & \bullet  & \bullet  & \cdots  & \bullet \\
				\vdots  &  & \vdots  & \vdots  &  & \vdots \\
				\bullet  & \cdots  & \bullet  & \bullet  & \cdots  & \bullet \\
				\bullet  & \cdots  & \bullet  & \bullet  & \cdots  & \bullet \\
				&  &  &  &  & \bullet \\
				&  &  &  &  & \vdots \\
				&  &  &  &  & \bullet 
			\end{array}$};
		\draw (368,5.4) node [anchor=north west][inner sep=0.75pt]    {$x-\delta +2$};
		\draw (470,8.4) node [anchor=north west][inner sep=0.75pt]    {$\delta -1$};
		\draw (546,108.4) node [anchor=north west][inner sep=0.75pt]    {$x$};
		\draw (544,230.4) node [anchor=north west][inner sep=0.75pt]    {$x-1$};
		\draw (245,129) node [anchor=north west][inner sep=0.75pt]    {$,$};
		\draw (573,129) node [anchor=north west][inner sep=0.75pt]    {$.$};
	\end{tikzpicture}
\end{center}
We vertically stack $\mathcal{F}_{2}$ beneath $\mathcal{F}_{4}$ to form the composite subdiagram $$
	\mathcal{F}_{24}=\begin{pmatrix}
		\mathcal{F}_{4}\\
		\mathcal{F}_{2}
	\end{pmatrix}.
	$$
Lemma~\ref{lem4} guarantees the existence of an optimal
$
[\mathcal{F}_{24},2x,x]_q
$
FDRM code $\mathcal{C}_{24}$, while Lemma~\ref{lema} provides an optimal
$
[\mathcal{F}_{1},3,x-1]_q
$
FDRM code $\mathcal{C}_{1}$. Moreover, since $\mathcal{F}_{3}$ consists of a single column of height three, it trivially supports an optimal
$
[\mathcal{F}_{3},3,1]_q
$
FDRM code $\mathcal{C}_{3}$.

Applying the four-block merging procedure of Construction~\ref{con4} to the component codes $\mathcal{C}_{1}$, $\mathcal{C}_{24}$, and $\mathcal{C}_{3}$, we obtain an
\[
[\mathcal{F},2x+3,x]_q
\]
FDRM code $\mathcal{C}$.

Finally, the Singleton-type upper bound is attained. Indeed, the number of dots of $\mathcal{F}$ remaining after deleting the first $x-2$ rows and the rightmost column is exactly $2x+3$, which coincides with the dimension of $\mathcal{C}$. Hence $\mathcal{C}$ is optimal.
\end{proof}

\begin{rem}\label{rem3}
The construction of optimal square $n\times n$ FDRM codes remains a central open problem in the theory of Ferrers diagram rank-metric codes. Despite considerable progress over the past decade, only a limited number of infinite families have been established.

For the family of Ferrers diagrams considered in this paper, Theorem~\ref{theo7} completely resolves the case where $n$ is even, and the minimum rank distance satisfies
\[
\delta=\frac{n}{2}-1.
\]
Consequently, it provides an infinite family of optimal square FDRM codes and constitutes a partial solution to Open Problem~2 stated in the Introduction. We believe that the four-block merging technique developed in this work may serve as a useful framework for attacking the remaining unresolved cases.
\end{rem}

\begin{example}
Let $\mathcal{F}$ denote the composite $12\times 12$ Ferrers diagram illustrated by the colored dot pattern:	
$$
\mathcal{F}=\begin{array}{cccccccccccc}
	\color{green}\bullet&\color{green}\bullet&\color{green}\bullet&\color{green}\bullet&\color{green}\bullet&\color{green}\bullet&\bullet&\bullet&\bullet&\bullet&\bullet&\bullet\\
	\color{red}\bullet&\color{green}\bullet&\color{green}\bullet&\color{green}\bullet&\color{green}\bullet&\color{green}\bullet&\bullet&\bullet&\bullet&\bullet&\bullet&\bullet\\
	& & &\color{green}\bullet&\color{green}\bullet&\color{green}\bullet&\bullet&\bullet&\bullet&\bullet&\bullet&\bullet\\
	& & & &\color{green}\bullet&\color{green}\bullet&\bullet&\bullet&\bullet&\bullet&\bullet&\bullet\\
	& & & & &\color{green}\bullet&\bullet&\bullet&\bullet&\bullet&\bullet&\bullet\\
	& & & & & & & & & & &\bullet\\
	& & & & & & & & & & &\bullet\\
	& & & & & & & & & & &\bullet\\
	& & & & & & & & & & &\bullet\\
	& & & & & & & & & & &\color{yellow}\bullet\\
	& & & & & & & & & & &\color{yellow}\bullet\\
	& & & & & & & & & & &\color{yellow}\bullet
\end{array}.
$$
This diagram decomposes into four disjoint subdiagrams:
\begin{center}
$\mathcal{F}_1=\begin{array}{cccccc}
\color{green}\bullet&\color{green}\bullet&\color{green}\bullet&\color{green}\bullet&\color{green}\bullet&\color{green}\bullet\\
	&\color{green}\bullet&\color{green}\bullet&\color{green}\bullet&\color{green}\bullet&\color{green}\bullet\\
	&       &       &\color{green}\bullet&\color{green}\bullet&\color{green}\bullet\\
	&       &       &       &\color{green}\bullet&\color{green}\bullet\\
	&       &       &       &       &\color{green}\bullet\\
\end{array},
$
$\mathcal{F}_2=\begin{array}{c}
	\color{red}\bullet
\end{array},
$
$\mathcal{F}_3=\begin{array}{c}
	\color{yellow}\bullet\\
\color{yellow}	\bullet\\
\color{yellow}	\bullet
\end{array},
$
$\mathcal{F}_4=\begin{array}{cccccc}
	\bullet&\bullet&\bullet&\bullet&\bullet&\bullet\\
	\bullet&\bullet&\bullet&\bullet&\bullet&\bullet\\
	\bullet&\bullet&\bullet&\bullet&\bullet&\bullet\\
	\bullet&\bullet&\bullet&\bullet&\bullet&\bullet\\
	\bullet&\bullet&\bullet&\bullet&\bullet&\bullet\\
	&       &       &       &       &\bullet\\
	&       &       &       &       &\bullet\\
	&       &       &       &       &\bullet\\
	&       &       &       &       &\bullet
\end{array}.
$
\end{center}
Theorem \ref{theo7} guarantees an optimal $\left[ \mathcal{F},13,5\right]_q$ FDRM code for every prime power $q$. 
\end{example}

\section{New constructions of CDCs from optimal FDRM codes}

Having established several new families of optimal FDRM codes in the previous section, we now turn to their application in constructing constant-dimension codes (CDCs). The close relationship between FDRM codes and CDCs, provided by the multilevel construction, makes optimal FDRM codes an effective tool for deriving large CDCs with prescribed parameters.

In this section, we develop several new infinite families of CDCs by combining the optimal FDRM codes from Section~4 with suitable pending blocks within the multilevel construction. This approach yields explicit constructions of CDCs whose cardinalities meet the best-known upper bounds for various combinations of the parameters $k$, $n$, and $\delta$.

In particular, building upon Theorem~\ref{theo1}, we establish a new multilevel construction that significantly enlarges the class of optimal CDCs obtainable from Ferrers diagram rank-metric codes. The resulting constructions not only demonstrate the effectiveness of the newly developed FDRM codes but also provide a unified framework for generating several previously unknown optimal CDC families.

\begin{theo}\label{theo3}
Let $n$, $k$, and $\delta$ be integers satisfying
\[
n\ge 2k+\left\lceil\frac{\delta}{2}\right\rceil+1,\qquad
2\delta\le k\le 3\delta,\qquad
\delta\ge2.
\]
Define
\begin{align*}
M
=&\,
q^{(n-k)(k-\delta+1)}
+\sum_{i=0}^{\left\lfloor\frac{k-2\delta}{\left\lceil\delta/2\right\rceil}\right\rfloor}
q^{(n-k)(k-\delta+1)-i\left\lceil\frac{\delta}{2}\right\rceil^{2}-\delta^{2}-i\left\lfloor\frac{\delta}{2}\right\rfloor^{2}}
\\
&
+\sum_{i=\left\lfloor\frac{k-2\delta}{\left\lceil\delta/2\right\rceil}\right\rfloor+1}
^{\min\left\{
\left\lfloor\frac{k-\delta}{\left\lceil\delta/2\right\rceil}\right\rfloor,
\left\lfloor\frac{\delta}{\left\lfloor\delta/2\right\rfloor}\right\rfloor
\right\}}
q^{(n-k-\delta)(k-\delta+1)
+(k-2\delta+1)\left\lfloor\frac{\delta}{2}\right\rfloor
-i\left\lfloor\frac{\delta}{2}\right\rfloor^{2}}
+q^{2\delta+1}.
\end{align*}
Then there exists an
$
(n,M,2\delta,k)_q
$
constant-dimension subspace code obtained via the multilevel construction. Moreover, this code contains a lifted MRD code with parameters
\[
\left(n,q^{(n-k)(k-\delta+1)},2\delta,k\right)_q
\]
as a subcode.
\end{theo}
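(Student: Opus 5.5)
The plan is to build the code level by level with the multilevel construction (Lemma~\ref{lem10}), adding pending blocks (Lemma~\ref{lem13}) so that codewords whose identifying vectors are too close in Hamming distance are still far apart in subspace distance. The summands of $M$ correspond to four groups of identifying vectors.

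\textbf{Level 0.} Take $v_0=(1^k0^{n-k})$. Its diagram is the full $k\times(n-k)$ array, and the lifted MRD code gives the first term $q^{(n-k)(k-\delta+1)}$. This also gives the claimed lifted MRD subcode.

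\textbf{First sum.} For $0\le i\le\lfloor (k-2\delta)/\lceil\delta/2\rceil\rfloor$, choose identifying vectors $v_{i+1}$ of weight $k$. In each, $\delta$ ones leave the first $k$ coordinates and move into the block of $\lceil\delta/2\rceil$ positions after position $k$, with a shift pattern depending on $i$. Then
\[
d_H(v_0,v_{i+1})\ge 2\delta \quad\text{and}\quad d_H(v_i,v_j)\ge 2\delta \text{ for } i\ne j .
\]
Where this separation fails (vectors differing by only $2\lceil\delta/2\rceil$), we make the leftmost $\lceil\delta/2\rceil$ columns of each diagram a pending block. These blocks are common to all the diagrams, so Lemma~\ref{lem13} makes up the missing distance.

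Each diagram $\mathcal F_{v_{i+1}}$ has the block shape of Theorem~\ref{theo1}:
\begin{itemize}
\item an upper-left staircase $\mathcal F_1$ made of the shifted pivot rows;
\item a tall right part $\mathcal F_3$ of width $n-k-\delta$;
\item a single column $\mathcal F_2$.
\end{itemize}
Theorem~\ref{theo1} (with Lemma~\ref{lem1}) supplies an optimal FDRM code with $v_{\delta-2}$ equal to the exponent
\[
(n-k)(k-\delta+1)-i\lceil\delta/2\rceil^2-\delta^2-i\lfloor\delta/2\rfloor^2 .
\]
The hypotheses $k\ge2\delta$ and $n\ge2k+\lceil\delta/2\rceil+1$ are what make the constraints $\delta+x-1\le y\le n-x-\delta+2$ and $s\ge\delta+x-2$ of Theorem~\ref{theo1} hold.

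\textbf{Second sum.} For larger $i$, up to $\min\{\lfloor (k-\delta)/\lceil\delta/2\rceil\rfloor,\lfloor\delta/\lfloor\delta/2\rfloor\rfloor\}$, the shifted ones no longer fit the first regime. Here the diagram loses $\delta$ full columns, giving the factor $(n-k-\delta)(k-\delta+1)$. The remaining rows have $(k-2\delta+1)\lfloor\delta/2\rfloor$ extra dots, reduced by $i\lfloor\delta/2\rfloor^2$. In this range I would use Lemma~\ref{lem3} or Lemma~\ref{lem4} directly, since each of the rightmost $\delta-1$ columns has at least $n-k-\delta\ge k$ dots. The upper limit $\lfloor\delta/\lfloor\delta/2\rfloor\rfloor$ is exactly what keeps these vectors at Hamming distance $\ge2\delta$ from the first group.

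\textbf{Last term.} The final summand $q^{2\delta+1}$ comes from one extra identifying vector with all ones at the end, $v=(0^{n-k}1^k)$ or a nearby vector. It is far from everything else in Hamming distance. Its diagram is small, and a trivial or initially convex code (Lemma~\ref{lema}) of dimension $2\delta+1$ and distance $\delta$ suffices.

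\textbf{Main obstacle.} The bookkeeping of pairwise distances is the hardest step. I must check three things:
\begin{itemize}
\item every pair of chosen identifying vectors has $d_H\ge2\delta$, or else lies in the same pending-block class with the pending-block codes at rank distance large enough that Lemma~\ref{lem13} gives $2d+2\operatorname{rank}\ge2\delta$;
\item in each case the Ferrers diagram really matches the parameters of Theorem~\ref{theo1}, with $z=|\mathcal F_1|-(n-y)(\delta-2)$;
\item the exponent computed from $v_{\delta-2}$ equals the stated one.
\end{itemize}
I would first fix explicit vectors for $\delta$ even and $\delta$ odd, and verify these conditions on a small case such as $\delta=2$, $k=4$, before doing the general computation.
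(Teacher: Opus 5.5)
Your Level~0 step is fine, but there is a genuine gap at the last summand, and the tools for the two sums are misassigned.

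The term $q^{2\delta+1}$ cannot come from $(0^{n-k}1^k)$. There every non-pivot column lies to the left of all pivots, so $EF(v)$ has no dots, $\mathcal{F}_v$ is empty, and the lifted code is a single subspace. Saying ``a nearby vector'' leaves the actual content of this step unspecified, and Lemma~\ref{lema} will not supply it. The missing idea is a specifically engineered vector; in the paper it is $v''=(1^{\delta-2}\,0^{n-k-\delta-2}\,101\,0^{\delta}\,1^{k-\delta}\,0)$. It keeps $\delta$ pivots early, so that $\mathcal{F}_{v''}=[\delta-2,\ldots,\delta-2,\delta-1,\delta,\ldots,\delta,k]$ has $v_{\delta-2}=(\delta+1)+\delta=2\delta+1$ (rows $\delta-2$ and $\delta-1$, minus the last column). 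It parks the other $k-\delta$ pivots in the last $k-\delta+1$ coordinates, where every vector of the main family vanishes (this uses $n\ge 2k+\lceil\delta/2\rceil+1$). Hence $d_H\ge 2(k-\delta)\ge 2\delta$ by $k\ge 2\delta$. This diagram is not initially convex, since its last column jumps from $\delta$ to $k$. It is the diagram with a protruding last column, and this is exactly where the paper invokes Theorem~\ref{theo1}.

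For the sums, your family is not well defined: $\delta$ ones cannot be moved into $\lceil\delta/2\rceil$ positions. Also, neither pending blocks nor Theorem~\ref{theo1} is needed. With $c=\lceil\delta/2\rceil$ and $f=\lfloor\delta/2\rfloor$, the paper uses a single family $v_i=(1^{k-\delta-ic}\,0^{c}\,1^{ic}\,0^{f}\,1^{\delta-if}\,0^{f}\,1^{if}\,0^{n-k-\delta-f})$.
\begin{itemize}
\item The first $k$ coordinates contain exactly $k-\delta$ ones, so $d_H(v',v_i)=2\delta$.
\item A movable zero-window of length $c$ (resp.\ $f$) sits in the first $k$ (resp.\ last $n-k$) coordinates at a position depending on $i$. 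Distinct $i$ give disjoint windows, so $d_H(v_i,v_j)=2c+2f=2\delta$ directly.
\item The diagrams $\mathcal{F}_i$ are $k\times(n-k)$, with full columns of height $k$ on the right and no protruding column, so they do not have the shape of Theorem~\ref{theo1}.
\item The exponent is $v_{\delta-1}(\mathcal{F}_i)=v_0(\mathcal{F}_i^t)$, not $v_{\delta-2}$. It is attained by Lemma~\ref{lem3} applied to $\mathcal{F}_i^t$, because the top $\delta-1$ rows of $\mathcal{F}_i$ lie among the first $k-\delta$ rows, each of which has at least $n-k-c\ge k+1$ dots.
\end{itemize}
That last count is where $n\ge 2k+\lceil\delta/2\rceil+1$ really enters. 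Your claim $n-k-\delta\ge k$ does not follow from the hypotheses (take $\delta=4$, $n=2k+3$). The two sums are just the regimes $k-\delta-ic\ge\delta$ and $k-\delta-ic<\delta$ of this single count. The upper limit $\lfloor\delta/\lfloor\delta/2\rfloor\rfloor$ only keeps $\delta-if\ge 0$; it is not a distance condition.
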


\begin{proof}
The construction follows the multilevel framework. We first specify a family of identifying vectors whose pairwise Hamming distances are at least $2\delta$, and then assign to each identifying vector an appropriate lifted FDRM code.

Let $\mathcal{C}$ denote the resulting constant-dimension subspace code. Its set of identifying vectors consists of the following vectors:
\begin{align*}
&v^{\prime}
=(\underbrace{1\cdots1}_{k}\underbrace{0\cdots0}_{n-k}),\\
&v^{\prime\prime}
=(\underbrace{1\cdots1}_{\delta-2}
\underbrace{0\cdots0}_{n-k-\delta-2}
101
\underbrace{0\cdots0}_{\delta}
\underbrace{1\cdots1}_{k-\delta}
0),\\
&\mathcal{A}
=\Bigl\{
v_i=
(\underbrace{1\cdots1}_{k-\delta-i\lceil\delta/2\rceil}
\underbrace{0\cdots0}_{\lceil\delta/2\rceil}
\underbrace{1\cdots1}_{i\lceil\delta/2\rceil}
\underbrace{0\cdots0}_{\delta-\lceil\delta/2\rceil}
\underbrace{1\cdots1}_{\delta-i\lfloor\delta/2\rfloor}
\underbrace{0\cdots0}_{\lfloor\delta/2\rfloor}
\underbrace{1\cdots1}_{i\lfloor\delta/2\rfloor}
\underbrace{0\cdots0}_{n-k-\delta-\lfloor\delta/2\rfloor})
\\
&\hspace{7cm}
:\
0\le i\le
\min\!\left\{
\left\lfloor\frac{k-\delta}{\lceil\delta/2\rceil}\right\rfloor,
\left\lfloor\frac{\delta}{\lfloor\delta/2\rfloor}\right\rfloor
\right\}
\Bigr\}.
\end{align*}

For convenience, each identifying vector is partitioned according to the first $k$ and the last $n-k$ coordinates:
\[
v^{\prime}
=(v_1^{\prime}\mid v_2^{\prime}),\qquad
v^{\prime\prime}
=(v_1^{\prime\prime}\mid v_2^{\prime\prime}),\qquad
v_i=(v_{i,1}\mid v_{i,2}),
\]
where $v_1^{\prime},v_1^{\prime\prime},v_{i,1}\in\mathbb{F}_2^k$ and
$v_2^{\prime},v_2^{\prime\prime},v_{i,2}\in\mathbb{F}_2^{\,n-k}$.

Since the Hamming distance is additive over disjoint coordinate blocks, we immediately obtain
\[
d_H(v^{\prime},v^{\prime\prime})
=
d_H(v_1^{\prime},v_1^{\prime\prime})
+
d_H(v_2^{\prime},v_2^{\prime\prime})
=2\delta,
\]
and, similarly,
\[
d_H(v^{\prime},v_i)
=
d_H(v_1^{\prime},v_{i,1})
+
d_H(v_2^{\prime},v_{i,2})
=2\delta
\]
for every $v_i\in\mathcal{A}$.

Now let $v_i,v_j\in\mathcal{A}$ with $i\neq j$. By construction, the corresponding shifts in the first and second coordinate blocks contribute at least
$
2\left\lceil\frac{\delta}{2}\right\rceil\ \text{and}\ 
2\left\lfloor\frac{\delta}{2}\right\rfloor,
$
respectively. Therefore,
\[
d_H(v_i,v_j)
\ge
2\left\lceil\frac{\delta}{2}\right\rceil
+
2\left\lfloor\frac{\delta}{2}\right\rfloor
=
2\delta.
\]

Finally, we compare $v_i$ with $v^{\prime\prime}$. The first $n-k+\delta-1$ coordinates of $v_i$ contain exactly $k$ ones, whereas the corresponding prefix of $v^{\prime\prime}$ contains only $\delta$ ones. Furthermore, among the remaining $k-\delta+1$ coordinates, $v_i$ has $k-\delta+1$ zeros, while $v^{\prime\prime}$ contains $k-\delta$ ones. Consequently,
\[
d_H(v_i,v^{\prime\prime})
\ge
(k-\delta)+(k-\delta)
\ge
2\delta,
\]
where the last inequality follows from the assumption $k\ge2\delta$. Hence every pair of identifying vectors has Hamming distance at least $2\delta$.

The identifying vector $v^{\prime}$ gives rise to the echelon Ferrers form
\[
EF(v^{\prime})
=
\left(
\begin{array}{cc}
I_k & \mathcal{F}
\end{array}
\right),
\]
where $\mathcal{F}$ is the full $k\times(n-k)$ Ferrers diagram. Consequently, the corresponding lifted MRD code $\mathcal{C}_1$ has cardinality
$
|\mathcal{C}_1|
=
q^{(n-k)(k-\delta+1)}.
$

We next construct the second component of the multilevel code, denoted by $\mathcal{C}_2$, corresponding to the identifying vector $v^{\prime\prime}$. Its echelon Ferrers form is
\[
EF(v^{\prime\prime})=
\left(
\begin{array}{cccccccc}
I_{\delta-2}&\mathcal{F}_1'&0&\mathcal{F}_2'&0&\mathcal{F}_4'&0&\mathcal{F}_7'\\
0&0&1&\mathcal{F}_3'&0&\mathcal{F}_5'&0&\mathcal{F}_8'\\
0&0&0&0&1&\mathcal{F}_6'&0&\mathcal{F}_9'\\
0&0&0&0&0&0&I_{k-\delta}&\mathcal{F}_{10}'
\end{array}
\right),
\]
where the subdiagrams $\mathcal{F}_1',\ldots,\mathcal{F}_{10}'$ are full Ferrers diagrams of the prescribed dimensions.

By grouping these subdiagrams according to their relative positions, we obtain the auxiliary Ferrers diagram
\[
\mathcal{F}'=
\begin{array}{cccc}
\mathcal{F}_1'&\mathcal{F}_2'&\mathcal{F}_4'&\mathcal{F}_7'\\
&\mathcal{F}_3'&\mathcal{F}_5'&\mathcal{F}_8'\\
&&\mathcal{F}_6'&\mathcal{F}_9'\\
&&&\mathcal{F}_{10}'
\end{array},
\]
whose dot representation is illustrated in the figure:
$$
\tikzset{every picture/.style={line width=0.75pt}} 
\begin{tikzpicture}[x=0.59pt,y=0.635pt,yscale=-1,xscale=1]
	
	\draw   (457.83,99.14) .. controls (462.5,99.14) and (464.83,96.81) .. (464.83,92.14) -- (464.83,87.14) .. controls (464.83,80.47) and (467.16,77.14) .. (471.83,77.14) .. controls (467.16,77.14) and (464.83,73.81) .. (464.83,67.14)(464.83,70.14) -- (464.83,62.14) .. controls (464.83,57.47) and (462.5,55.14) .. (457.83,55.14) ;
	\draw   (458.83,222.14) .. controls (463.5,222.14) and (465.83,219.81) .. (465.83,215.14) -- (465.83,209.14) .. controls (465.83,202.47) and (468.16,199.14) .. (472.83,199.14) .. controls (468.16,199.14) and (465.83,195.81) .. (465.83,189.14)(465.83,192.14) -- (465.83,183.14) .. controls (465.83,178.47) and (463.5,176.14) .. (458.83,176.14) ;
	\draw   (310.83,46.14) .. controls (310.83,41.47) and (308.5,39.14) .. (303.83,39.14) -- (290.83,39.14) .. controls (284.16,39.14) and (280.83,36.81) .. (280.83,32.14) .. controls (280.83,36.81) and (277.5,39.14) .. (270.83,39.14)(273.83,39.14) -- (257.83,39.14) .. controls (253.16,39.14) and (250.83,41.47) .. (250.83,46.14) ;
	\draw   (424.83,45.14) .. controls (424.91,40.47) and (422.62,38.1) .. (417.95,38.02) -- (404.45,37.8) .. controls (397.78,37.69) and (394.49,35.31) .. (394.56,30.64) .. controls (394.49,35.31) and (391.12,37.58) .. (384.45,37.47)(387.45,37.52) -- (370.95,37.25) .. controls (366.28,37.18) and (363.91,39.47) .. (363.83,44.13) ;
	
	\draw (177,113.4) node [anchor=north west][inner sep=0.75pt]    {$\mathcal{F}^{'} =$};
	\draw (229,42.5) node [anchor=north west][inner sep=0.75pt]    {$\begin{array}{ c c c c c c c c }
			\bullet  & \cdots  & \bullet  & \bullet  & \bullet  & \cdots  & \bullet  & \bullet \\
			\vdots  &  & \vdots  & \vdots  & \vdots  &  & \vdots  & \vdots \\
			\bullet  & \cdots  & \bullet  & \bullet  & \bullet  & \cdots  & \bullet  & \bullet \\
			&  &  & \bullet  & \bullet  & \cdots  & \bullet  & \bullet \\
			&  &  &  & \bullet  & \cdots  & \bullet  & \bullet \\
			&  &  &  &  &  &  & \bullet \\
			&  &  &  &  &  &  & \vdots \\
			&  &  &  &  &  &  & \bullet 
		\end{array}$};
	\draw (482,71.4) node [anchor=north west][inner sep=0.75pt]    {$\delta -2$};
	\draw (550,120) node [anchor=north west][inner sep=0.75pt]    {$.$};
	\draw (482,190.4) node [anchor=north west][inner sep=0.75pt]    {$k-\delta $};
	\draw (237,9.4) node [anchor=north west][inner sep=0.75pt]  {$n-k-\delta -2$};
	\draw (388,10.4) node [anchor=north west][inner sep=0.75pt]    {$\delta $};
\end{tikzpicture}
$$

Observe that $\mathcal{F}'$ satisfies exactly the hypotheses of Theorem~\ref{theo1}. Therefore, Theorem~\ref{theo1} guarantees the existence of an optimal
$
[\mathcal{F}',\,2\delta+1,\,\delta]_q
$
FDRM code. Lifting this code through the multilevel construction produces the CDC $\mathcal{C}_2$, whose cardinality is
$
|\mathcal{C}_2|=q^{2\delta+1}.
$

We now turn to the family of codewords associated with the identifying vectors in the set $\mathcal{A}$. Let $\mathcal{C}_3$ denote the corresponding subcode. For each admissible index
\[
0\le i\le
\min\left\{
\left\lfloor\frac{k-\delta}{\lceil\delta/2\rceil}\right\rfloor,
\left\lfloor\frac{\delta}{\lfloor\delta/2\rfloor}\right\rfloor
\right\},
\]
the echelon Ferrers form associated with $v_i$ is
$$
EF(v_i)=
\left( \begin{array}{cccccccc}
	I_{k-\delta-i\lceil\frac{\delta}{2}\rceil} &\mathcal{F}_{i_1}&0&\mathcal{F}_{i_2}&0&\mathcal{F}_{i_4}&0&\mathcal{F}_{i_7}\\
	0&0&I_{i\lceil\frac{\delta}{2}\rceil}&\mathcal{F}_{i_3}&0&\mathcal{F}_{i_5}&0&\mathcal{F}_{i_8}\\
	0&0&0&0&I_{\delta-i\lfloor\frac{\delta}{2}\rfloor}&\mathcal{F}_{i_6}&0&\mathcal{F}_{i_9}\\
	0&0&0&0&0&0&I_{i\lfloor\frac{\delta}{2}\rfloor}&
	\mathcal{F}_{i_{10}}
\end{array}\right),
$$
where the blocks
$\mathcal{F}_{i_1},\ldots,\mathcal{F}_{i_{10}}$
are full Ferrers diagrams of the corresponding sizes.

Rearranging these blocks yields the auxiliary Ferrers diagram
\[
\mathcal{F}_i=
\begin{array}{cccc}
\mathcal{F}_{i_1}&\mathcal{F}_{i_2}&\mathcal{F}_{i_4}&\mathcal{F}_{i_7}\\
&\mathcal{F}_{i_3}&\mathcal{F}_{i_5}&\mathcal{F}_{i_8}\\
&&\mathcal{F}_{i_6}&\mathcal{F}_{i_9}\\
&&&\mathcal{F}_{i_{10}}
\end{array},
\]
which is a $k\times(n-k)$ Ferrers diagram.

The parameters of the optimal FDRM code supported on $\mathcal{F}_i$ depend on the relative size of the first block, leading naturally to two cases.

Case 1.
Assume
$
k-\delta-i\Bigl\lceil\frac{\delta}{2}\Bigr\rceil\ge\delta,
$
or equivalently,
$
0\le i\le
\left\lfloor
\frac{k-2\delta}{\lceil\delta/2\rceil}
\right\rfloor.
$
Since
$
k\le3\delta,
$
this interval is contained in the admissible range of $i$.
Applying Lemma~\ref{lem3} to $\mathcal{F}_i^t$ yields an optimal
\[
[\mathcal{F}_i,\,
(n-k)(k-\delta+1)
-i\Bigl\lceil\frac{\delta}{2}\Bigr\rceil^2
-\delta^2
-i\Bigl\lfloor\frac{\delta}{2}\Bigr\rfloor^2,
\,\delta]_q
\]
FDRM code.

Case 2.
Assume instead
$
k-\delta-i\Bigl\lceil\frac{\delta}{2}\Bigr\rceil<\delta.
$
Equivalently,
\[
\left\lfloor
\frac{k-2\delta}{\lceil\delta/2\rceil}
\right\rfloor+1
\le i\le
\min\left\{
\left\lfloor
\frac{k-\delta}{\lceil\delta/2\rceil}
\right\rfloor,
\left\lfloor
\frac{\delta}{\lfloor\delta/2\rfloor}
\right\rfloor
\right\}.
\]
A second application of Lemma~\ref{lem3} gives an optimal
\[
[\mathcal{F}_i,\,
(n-k-\delta)(k-\delta+1)
+(k-2\delta+1)\Bigl\lfloor\frac{\delta}{2}\Bigr\rfloor
-i\Bigl\lfloor\frac{\delta}{2}\Bigr\rfloor^2,
\,\delta]_q
\]
FDRM code.

Therefore,
$$|\mathcal{C}_3|=\sum_{i=0}^{\lfloor \frac{k-2\delta}{\lceil \frac{\delta}{2}\rceil}\rfloor} q^{(n-k)(k-\delta+1)-i\lceil \frac{\delta}{2}\rceil^2-\delta^2-i\lfloor\frac{\delta}{2}\rfloor^2}
+\sum_{{\lfloor \frac{k-2\delta}{\lceil \frac{\delta}{2}\rceil}\rfloor}+1}^{\min\left\lbrace {\frac{k-\delta}{\lceil\frac{\delta}{2}\rceil}},\frac{\delta}{\lfloor\frac{\delta}{2}\rfloor} \right\rbrace } q^{(n-k-\delta)(k-\delta+1)+(k-2\delta+1)\lfloor \frac{\delta}{2}\rfloor-i\lfloor \frac{\delta}{2}\rfloor^2}.$$

Combining the three mutually disjoint subcodes $\mathcal{C}_1$, $\mathcal{C}_2$, and $\mathcal{C}_3$, whose identifying vectors have pairwise Hamming distance at least $2\delta$, the multilevel construction yields an
$
(n,M,2\delta,k)_q
$
constant-dimension subspace code of cardinality
\[
|\mathcal{C}|=|\mathcal{C}_1|+|\mathcal{C}_2|+|\mathcal{C}_3|=M.
\]
Since $\mathcal{C}_1$ is the lifted MRD code associated with the identifying vector $v'$, it follows that $\mathcal{C}$ contains the lifted MRD code with parameter
$
\left(n,q^{(n-k)(k-\delta+1)},2\delta,k\right)_q
$
as a subcode, completing the proof.
\end{proof} 

The following corollary illustrates the effectiveness of Theorem~\ref{theo3} for concrete parameter sets and provides new explicit lower bounds on the maximum cardinality of constant-dimension subspace codes.

\begin{coro}\label{coro1}
Let $k=7$, $\delta=3$, and $n\in\{17,18,19\}$. Then
\[
\mathcal{A}_q(17,6,7)
\ge
q^{50}+q^{41}+q^{36}+q^{35}+q^7,
\]
\[
\mathcal{A}_q(18,6,7)
\ge
q^{55}+q^{46}+q^{41}+q^{40}+q^7,
\]
and
\[
\mathcal{A}_q(19,6,7)
\ge
q^{60}+q^{51}+q^{46}+q^{45}+q^7.
\]
Moreover, for every prime power $q\ge3$, each of these lower bounds is strictly larger than the corresponding bound reported in~\cite{ref19}.
\end{coro}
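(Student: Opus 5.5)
Corollary~\ref{coro1} follows by specializing Theorem~\ref{theo3} to $k=7$, $\delta=3$, then comparing with the table of \cite{ref19}. First I check the hypotheses: $\lceil\delta/2\rceil=2$, $\lfloor\delta/2\rfloor=1$, so $n\ge 2k+\lceil\delta/2\rceil+1=17$ holds for $n\in\{17,18,19\}$, and $2\delta=6\le 7\le 9=3\delta$, $\delta\ge2$. Theorem~\ref{theo3} then yields an $(n,M,6,7)_q$ code, so $\mathcal{A}_q(n,6,7)\ge M$.

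Next I evaluate $M$ term by term. The lifted MRD term is $q^{(n-7)\cdot 5}$, which gives $q^{50},q^{55},q^{60}$. The first sum runs over $0\le i\le\lfloor (7-6)/2\rfloor=0$, so it is the single term $q^{5(n-7)-9}$, namely $q^{41},q^{46},q^{51}$. The second sum runs from $i=1$ to $\min\{\lfloor 4/2\rfloor,\lfloor 3/1\rfloor\}=2$, with exponent $(n-10)\cdot5+(7-6+1)\cdot1-i=5(n-10)+2-i$. For $n=17$ this gives $q^{36}$ and $q^{35}$, and analogously $q^{41},q^{40}$ and $q^{46},q^{45}$ for $n=18,19$. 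The last term is $q^{2\delta+1}=q^7$. Summing reproduces exactly the three stated polynomials.

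The remaining and main step is the strict comparison with \cite{ref19}. I would take the lower bound reported there for each $(n,6,7)_q$ and write the difference as a polynomial in $q$. Both bounds share the leading MRD term $q^{5(n-7)}$ and presumably the next term $q^{5(n-7)-9}$, so the comparison reduces to lower-order terms. I would then show that the top surviving term of the difference, expected to come from $q^{5(n-10)+1}$ or $q^{5(n-10)}$ beating whatever the older multilevel/pending-block construction provides, dominates the remaining negative terms once $q\ge3$. Concretely, a term $q^a$ exceeds $\sum_{j<a} c_jq^j$ whenever the coefficients satisfy $c_j\le q-1$, with a little slack at $q\ge3$. The obstacle is this comparison: it depends on the exact expressions in \cite{ref19}, and the $q=2$ exclusion suggests the margin is a coefficient-type inequality that fails at $q=2$. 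I would verify it case by case for the three values of $n$.
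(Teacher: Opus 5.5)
Your proposal follows the paper's proof: specialize Theorem~\ref{theo3} to $k=7$, $\delta=3$, then compare with \cite{ref19}. Your hypothesis check and your exponents $5(n-7)$, $5(n-7)-9$, $5(n-10)+2-i$ for $i=1,2$, and $2\delta+1=7$ are all correct. The paper likewise only asserts the comparison, backing it with numerical tables for $q\in\{3,4,5,7,8,9\}$, where the gain is indeed essentially the new $q^{5(n-10)+1}$ term you anticipated; so the symbolic polynomial-in-$q$ comparison you sketch, but do not carry out, is what would actually be needed to cover every prime power $q\ge 3$.
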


\begin{proof}
Substituting $k=7$ and $\delta=3$ into Theorem~\ref{theo3} immediately yields the stated expressions for $M$. A straightforward comparison with the corresponding lower bounds in~\cite{ref19} shows that our constructions provide strictly larger cardinalities for every prime power $q\ge3$.
\end{proof}

The improvement obtained in Corollary~\ref{coro1} is by no means isolated. Combining the optimal FDRM codes established in Theorem~\ref{theo1} with the pending-block technique provides a flexible multilevel construction that applies to many additional parameter sets. Consequently, our approach yields new families of constant-dimension codes with improved cardinality lower bounds over a broad range of parameters, demonstrating the effectiveness and versatility of the proposed FDRM constructions.

 Our construction leverages the optimal FDRM codes established in Theorem \ref{theo1} together with pending blocks, allowing us to strengthen the known cardinality lower bounds for CDCs over a wide range of additional parameter configurations.

\begin{coro}\label{coro2}
For the parameter set
$
(n,k,\delta)=(19,9,4),
$
there exists a $(19,8,9)_q$ constant-dimension subspace code of cardinality
\[
q^{60}+q^{44}+q^{36}+q^{31}+q^6.
\]
Equivalently,
\[
\mathcal{A}_q(19,8,9)
\ge
q^{60}+q^{44}+q^{36}+q^{31}+q^6.
\]
\end{coro}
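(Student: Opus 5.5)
The plan is to specialize the multilevel strategy of Theorem~\ref{theo3} to $(n,k,\delta)=(19,9,4)$. Theorem~\ref{theo3} cannot be quoted directly, because its hypothesis $n\ge 2k+\lceil\delta/2\rceil+1=21$ fails for $n=19$. I would therefore rebuild the family of identifying vectors by hand and recompute each Ferrers-diagram bound. Here $\lceil\delta/2\rceil=\lfloor\delta/2\rfloor=2$. The code will be a union of five lifted FDRM codes, each indexed by one identifying vector of weight $9$ and length $19$. The five vectors must have pairwise Hamming distance at least $8$, so that Lemma~\ref{lem9} and Lemma~\ref{lem10} give subspace distance at least $8$.

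\textbf{Step 1: the MRD term and the first two terms from $\mathcal{A}$.} Take $v'=(1^9 0^{10})$. Its Ferrers diagram is the full $9\times10$ diagram, and the lifted MRD code gives $q^{10\cdot 6}=q^{60}$ codewords. Next take the vectors $v_i$ of Theorem~\ref{theo3} for $i=0$ and $i=1$, shortened so that the last zero block has length $19-9-4-2=4$. That is:
\[
v_0=(1^5\,0^2\,0^2\,1^4\,0^2\,0^4),\qquad v_1=(1^3\,0^2\,1^2\,0^2\,1^2\,0^2\,1^2\,0^4).
\]
For each diagram $\mathcal{F}_{v_i}$ I compute $v_{\min}(\mathcal{F}_{v_i},4)$ from Lemma~\ref{lem1}. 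I expect the values $60-16=44$ for $v_0$ and $60-4-16-4=36$ for $v_1$, in agreement with the Case~1 formula of Theorem~\ref{theo3}. Lemma~\ref{lem3}, applied to the transposed diagram, should then give optimal codes attaining these dimensions; this requires the rightmost $3$ columns of the transpose to contain at least $9$ dots, which I will verify.

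\textbf{Step 2: the $q^{31}$ and $q^{6}$ terms.} The third vector $v_2$ falls into Case~2 of Theorem~\ref{theo3}. The general formula there gives exponent $32$ for $n=21$-type lengths, so with $n=19$ I must recompute the diagram directly. I would choose the $v_2$-type vector, possibly with a slightly modified block layout forced by the shorter length, so that its diagram attains $v_{\min}=31$ via Lemma~\ref{lem3} or Lemma~\ref{lem4}.

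The last term, $q^6$, comes from a $v''$-type vector. It places ones early and in the tail, so that its Ferrers diagram has the three-part shape of Theorem~\ref{theo1} with $\delta=4$. Theorem~\ref{theo1} then yields an optimal code of dimension $v_{\delta-2}(\mathcal F,\delta)=6$. If the distance from $v''$ to some other vector is only $6$ instead of $8$, I would fall back on the pending-block refinement of Lemma~\ref{lem13}. In that case the leftmost columns of the two diagrams should form a common pending block, whose rank difference supplies the missing $2$ in the subspace distance.

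\textbf{Step 3 and the main obstacle.} Finally I check all ten pairwise Hamming distances. The pairs involving $v'$ give exactly $8$ by the block-additivity argument in the proof of Theorem~\ref{theo3}. The pairs among the $v_i$ give at least $2\cdot2+2\cdot2=8$. The main obstacle is fitting $v''$ and the Case~2 vector into length $19$ while keeping distance $8$ from everything else and keeping the hypotheses of Theorem~\ref{theo1} ($y\le\min\{n,n-x-\delta+2\}$, $s\ge\delta+x-2$). I would fix $v''$ first, by a small search over placements of its middle $1\,0\,1$ pattern, and then adjust $v_2$ around it. Summing the five lifted code sizes gives $q^{60}+q^{44}+q^{36}+q^{31}+q^6$.
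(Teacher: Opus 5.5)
\textbf{There is a genuine gap: the last two identifying vectors are never produced.}

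Your first three identifying vectors, $1^9 0^{10}$, $1^5 0^4 1^4 0^6$ and $1110011001100110000$, are exactly the paper's $v_1,v_2,v_3$. The exponents $60,44,36$ are right, obtained via Lemma~\ref{lem3} on the transposed diagrams. The problem is everything after that. The corollary is a concrete existence claim, and its proof \emph{is} the exhibition of all five vectors together with the distance checks and the FDRM dimensions. You leave the $q^{31}$ and $q^{6}$ vectors to ``a small search''.

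\textbf{The default you propose for the $q^{31}$ term fails at the step you name.} Theorem~\ref{theo3}'s vector $1001111000011110000$ does have $v_{\min}=31$ and distance $8$ from your three vectors. However, its diagram $[1,1,5,5,5,5,9,9,9,9]$ satisfies neither Lemma~\ref{lem3} nor Lemma~\ref{lem4}. In the $9\times 10$ orientation both conditions are impossible. The transpose $[4,4,4,4,8,8,8,8,10]$ has rightmost columns $8,8,10$, which are not all $\ge 9$, and it has $10<8+4$. So, as written, you have no $31$-dimensional code.

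\textbf{How the paper handles it.} The paper takes $v_4=1110000111100101000$. Its diagram $[3,3,3,3,7,7,8,9,9,9]$ has $v_{\min}=31$, and its first column is a pending block. After deleting that column, the transpose $[3,4,6,6,6,6,9,9,9]$ meets Lemma~\ref{lem3} with dimension $31$. The price is $d_H(v_3,v_4)=6$, which contradicts your assertion that all pairs among the $v_i$ are at distance $\ge 8$. The missing $2$ comes from Lemma~\ref{lem13}: $EF(v_3)$ and $EF(v_4)$ share the pending block at coordinate $4$, fixed to $X=(1,0,0)^T$ and $Y=0$. So the pending-block trick is needed between the $q^{36}$ and $q^{31}$ components, not for $v''$. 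The $q^6$ term comes from $v_5=1100010000001111110$, with diagram $[2,2,2,3,3,3,3,3,3,9]$ handled by Theorem~\ref{theo1}; it is at distance $\ge 8$ from $v_1,\dots,v_4$.

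\textbf{Your pending-block-free outline can in fact be completed.} Take $1101110000010111000$ and Theorem~\ref{theo3}'s $v''=1100001010000111110$.
\begin{itemize}
\item The diagram of $1101110000010111000$ is $[2,5,5,5,5,5,6,9,9,9]$, with $v_{\min}=31$. Its transpose $[3,3,3,4,9,9,9,10,10]$ satisfies Lemma~\ref{lem3}.
\item The diagram of $v''$ is $[2,2,2,2,3,4,4,4,4,9]$. It supports dimension $9\ge 6$ by Lemma~\ref{lem8} with $\mathcal F_1=[2,2,2,2,3]$, $\mathcal D=[4,4,4,4,8]$ and $\mathcal F_2=[1]$.
\end{itemize}
All ten pairwise distances among your three vectors and these two are then $\ge 8$. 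But finding and verifying such vectors is precisely the content of the proof, and your proposal does not do it.
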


\begin{proof}
The construction follows the multilevel framework. We successively construct a family of identifying vectors whose corresponding lifted Ferrers diagram rank-metric codes have pairwise subspace distance at least $2\delta=8$.

Let
$
\mathcal{B}=\{v_1,v_2,v_3,v_4,v_5\}
$
denote the resulting collection of identifying vectors.

We first choose
\[
v_1=(1111111110000000000).
\]
Its associated Ferrers diagram is the full $9\times10$ Ferrers diagram. Equipping it with an MRD code of minimum rank distance $4$ yields a lifted MRD code of cardinality
$
q^{(19-9)(9-4+1)}=q^{60}.
$

Next, consider
\[
v_2=(1111100001111000000).
\]
A direct computation shows that
$
d_H(v_1,v_2)=8,
$
and hence the corresponding lifted codes already satisfy the required minimum subspace distance.

We now introduce the identifying vector
\[
v_3=(1110011001100110000),
\]
which satisfies
$
d_H(v_i,v_3)=8,\  i=1,2.
$
Its echelon Ferrers form is
\[
EF(v_3)=
\begin{pNiceMatrix}
1&0&0&\Block[draw]{3-1}{}\bullet&\bullet&0&0&\bullet&\bullet&0&0&\bullet&\bullet&0&0&\bullet&\bullet&\bullet&\bullet\\
0&1&0&\bullet&\bullet&0&0&\bullet&\bullet&0&0&\bullet&\bullet&0&0&\bullet&\bullet&\bullet&\bullet\\
0&0&1&\bullet&\bullet&0&0&\bullet&\bullet&0&0&\bullet&\bullet&0&0&\bullet&\bullet&\bullet&\bullet\\
0&0&0&0&0&1&0&\bullet&\bullet&0&0&\bullet&\bullet&0&0&\bullet&\bullet&\bullet&\bullet\\
0&0&0&0&0&0&1&\bullet&\bullet&0&0&\bullet&\bullet&0&0&\bullet&\bullet&\bullet&\bullet\\
0&0&0&0&0&0&0&0&0&1&0&\bullet&\bullet&0&0&\bullet&\bullet&\bullet&\bullet\\
0&0&0&0&0&0&0&0&0&0&1&\bullet&\bullet&0&0&\bullet&\bullet&\bullet&\bullet\\
0&0&0&0&0&0&0&0&0&0&0&0&0&1&0&\bullet&\bullet&\bullet&\bullet\\
0&0&0&0&0&0&0&0&0&0&0&0&0&0&1&\bullet&\bullet&\bullet&\bullet
\end{pNiceMatrix}.
\]
The boxed entries constitute a pending block, denoted by $X$. We then use this pending block to increase the subspace distance between codewords without changing the cardinality of the lifted code.

Next, consider
\[
v_4=(1110000111100101000).
\]
We have
$
d_H(v_i,v_4)=8,\  i=1,2,
$
while
$
d_H(v_3,v_4)=6.
$
Thus the Hamming distance alone is not sufficient to guarantee the desired subspace distance.

The corresponding echelon Ferrers form is
\[
EF(v_4)=
\begin{pNiceMatrix}
1&0&0&\Block[draw]{3-1}{}\bullet&\bullet&\bullet&\bullet&0&0&0&0&\bullet&\bullet&0&\bullet&0&\bullet&\bullet&\bullet\\
0&1&0&\bullet&\bullet&\bullet&\bullet&0&0&0&0&\bullet&\bullet&0&\bullet&0&\bullet&\bullet&\bullet\\
0&0&1&\bullet&\bullet&\bullet&\bullet&0&0&0&0&\bullet&\bullet&0&\bullet&0&\bullet&\bullet&\bullet\\
0&0&0&0&0&0&0&1&0&0&0&\bullet&\bullet&0&\bullet&0&\bullet&\bullet&\bullet\\
0&0&0&0&0&0&0&0&1&0&0&\bullet&\bullet&0&\bullet&0&\bullet&\bullet&\bullet\\
0&0&0&0&0&0&0&0&0&1&0&\bullet&\bullet&0&\bullet&0&\bullet&\bullet&\bullet\\
0&0&0&0&0&0&0&0&0&0&1&\bullet&\bullet&0&\bullet&0&\bullet&\bullet&\bullet\\
0&0&0&0&0&0&0&0&0&0&0&0&0&1&\bullet&0&\bullet&\bullet&\bullet\\
0&0&0&0&0&0&0&0&0&0&0&0&0&0&0&1&\bullet&\bullet&\bullet
\end{pNiceMatrix}.
\]

The boxed entries determine another pending block, denoted by $Y$. Assigning
\[
X=(1,0,0)^T,\qquad
Y=(0,0,0)^T,
\]
gives
$
\operatorname{rank}(X-Y)=1.
$
Therefore, by the pending-block construction,
\[
d_S(\mathcal U',\mathcal V')
\ge
d_H(v_3,v_4)+2\operatorname{rank}(X-Y)
=
6+2
=
8.
\]
Hence the required minimum subspace distance is preserved.

Finally, consider
\[
v_5=(1100010000001111110).
\]
A direct verification shows that
$
d_H(v_i,v_5)\ge8,\ 1\le i\le4.
$
Its Ferrers diagram is
\[
\mathcal F_{v_5}=
\begin{array}{cccccccccc}
\bullet&\bullet&\bullet&\bullet&\bullet&\bullet&\bullet&\bullet&\bullet&\bullet\\
\bullet&\bullet&\bullet&\bullet&\bullet&\bullet&\bullet&\bullet&\bullet&\bullet\\
&&&\bullet&\bullet&\bullet&\bullet&\bullet&\bullet&\bullet\\
&&&&&&&&&\bullet\\
&&&&&&&&&\bullet\\
&&&&&&&&&\bullet\\
&&&&&&&&&\bullet\\
&&&&&&&&&\bullet\\
&&&&&&&&&\bullet
\end{array}.
\]

By Theorem~\ref{theo1},
$
v_{\min}(\mathcal F_{v_5},4)=6,
$
and consequently $\mathcal F_{v_5}$ supports an optimal
$
[\mathcal F_{v_5},6,4]_q
$
FDRM code.

Applying the multilevel construction to the identifying vector set $\mathcal B$ therefore produces a
\[
(19,q^{60}+q^{44}+q^{36}+q^{31}+q^6,8,9)_q
\]
constant-dimension subspace code. Equivalently,
$
|\mathcal C|
=
q^{60}+q^{44}+q^{36}+q^{31}+q^6.
$

For $q=3$, this yields
$
|\mathcal C|
=
42391159260137818006598949879,
$
which is strictly larger than the previously best known lower bound
$
42391159259987125499483652096
$
reported in~\cite{ref18}. Hence the proposed construction establishes a new best known lower bound for
$
\mathcal A_3(19,8,9),
$
thereby completing the proof.
\end{proof}

\section{Conclusions and Future Directions}

Ferrers diagram rank-metric (FDRM) codes are a fundamental ingredient in the multilevel construction of constant-dimension codes (CDCs). In this paper, we have developed several new infinite families of optimal FDRM codes by exploiting the algebraic structure of maximum rank-distance (MRD) codes together with novel techniques for combining Ferrers diagrams.

Our first main construction is based on newly designed MRD generator matrices. As a notable application, it completely resolves the open problem posed by Zhang \emph{et al.}~\cite{ref38} concerning the existence of optimal $[\mathcal{F},7]_q$ FDRM codes for the Ferrers diagram
$
\mathcal{F}=[1,2,3,4,8,8,8,8,8],$
for every $q\ge7$. The second and third constructions further exploit the systematic structure of MRD generator matrices to derive substantially broader classes of optimal FDRM codes. While the second construction requires each of the rightmost $\delta-2$ columns to contain at least $n-1$ dots, the third construction significantly weakens this condition by allowing these columns to contain only $n-r$ dots, where $r<\kappa=n-\delta+1$. Consequently, the class of Ferrers diagrams admitting optimal constructions is considerably enlarged.

A second main contribution is the introduction of new composition techniques for Ferrers diagrams. By combining smaller optimal FDRM codes through suitable diagram decompositions, we obtain larger optimal constructions that were previously out of reach. In particular, Theorem~\ref{theo7} provides a partial solution to the open problem raised by Etzion \emph{et al.}~\cite{ref 2}, establishing the optimality of $[\mathcal{F},\frac{n}{2}-1]_q$ FDRM codes for an infinite family of square Ferrers diagrams of even size.

The proposed FDRM constructions also translate into substantial improvements in the theory of constant-dimension subspace codes. Combined with pending blocks within the multilevel framework, they yield several new families of CDCs with improved cardinality lower bounds, including new record constructions for numerous parameter sets. The numerical comparisons reported in Tables~1 and 2 (see below) show that our constructions consistently outperform the previously best-known lower bounds for many values of $(n,k,\delta)$ over several finite fields, including relatively small field sizes. These computational results further demonstrate the practical effectiveness of the proposed methods and highlight the close interaction between rank-metric coding theory and subspace coding.

Beyond the specific constructions presented here, we believe that the main conceptual contribution of this work is to demonstrate that the internal algebraic structure of MRD codes can be systematically exploited to design optimal FDRM codes. Rather than treating MRD codes merely as auxiliary ingredients, our approach views appropriately designed generator matrices as a powerful source of new Ferrers diagram constructions.

The ideas developed in this paper naturally suggest several promising directions for future research.

\begin{itemize}
\item Develop new families of linear and nonlinear MRD codes whose generator matrices possess structural properties adapted to prescribed Ferrers diagram geometries. Such ``Ferrers-aware'' MRD codes may lead to many additional infinite families of optimal FDRM codes.

\item Extend the decomposition techniques introduced in this paper beyond the three- and four-block constructions. More general block decompositions, as well as recursive decomposition strategies, could considerably enlarge the class of Ferrers diagrams supporting optimal FDRM codes.

\item Obtain a complete characterisation of Ferrers diagrams admitting optimal FDRM codes. Identifying necessary and sufficient geometric conditions remains a central open problem in the area.

\item Investigate whether the proposed constructions can be generalized to rectangular diagrams with more relaxed density conditions or to other algebraic settings, including finite chain rings and modules.

\item Further exploit the proposed optimal FDRM codes in alternative constructions of constant-dimension codes, including linkage constructions, coset constructions, recursive constructions, and other hybrid frameworks, with the aim of establishing new record lower bounds.

\item Develop efficient encoding and decoding algorithms tailored to the new FDRM codes introduced in this work, and investigate their implementation complexity for applications in random network coding.

\item Explore computer-assisted and AI-assisted approaches for discovering new Ferrers diagram decompositions, optimized MRD generator matrices, and previously unknown infinite families of optimal FDRM codes.
\end{itemize}

Overall, the framework introduced in this paper considerably enlarges the current landscape of optimal Ferrers diagram rank-metric codes. We expect that the techniques developed here will stimulate further advances not only in the theory of FDRM codes, but also in rank-metric coding, finite geometry, and the construction of large constant-dimension subspace codes for network coding applications.

\begin{table}[H]
	\centering
	\caption{New lower bounds of $\bar{A}_q(n,2\delta,k)$}
	\vskip 2mm \setlength{\tabcolsep}{6pt}
		{
			\begin{NiceTabular}{|c|c|c|c|>{\centering\arraybackslash}m{2cm}|}
			    	\hline 
		        	$\bar{A}_q(n,2\delta,k)$&  New lower bounds &Old lower bounds& Differences & References \\
				    \cline{1-5}
			    	$\bar{A}_3(17,6,7)$ & \makecell[c]{717934\underline{66081441015555}\\\underline{7667}} & \makecell[c]{717934\underline{51394570107953}\\\underline{3438}} & \makecell[c]{14686870\\90760242\\29} & \\
			    	\cline{1-4}
				    $\bar{A}_4(17,6,7)$&\makecell[c]{12676554\underline{418344659636}\\\underline{00458563584}}&\makecell[c]{12676554\underline{3713871507030}\\\underline{0646782528}}&\makecell[c]{46957508\\93299811\\781056}&\\
				    \cline{1-4}
			    	$\bar{A}_5(17,6,7)$&\makecell[c]{888178874\underline{6220991015}\\\underline{4342651367265625}}&\makecell[c]{888178874\underline{47688543305}\\\underline{088386195735750}}&\makecell[c]{14521366\\84925426\\5729875}& \\
			    	\cline{1-4}
			        $\bar{A}_7(17,6,7)$&\makecell[c]{179846508721\underline{80830225}\\\underline{21601559013157065650}\\\underline{343}}&\makecell[c]{179846508721\underline{54326107}\\\underline{71314320648427507089}\\\underline{598}}&\makecell[c]{26504117\\50287238\\36472955\\8560745}&  \cite{ref19}\\
			    	\cline{1-4}
				    $\bar{A}_8(17,6,7)$&\makecell[c]{14272477033\underline{401489307}\\\underline{10478682747582669150}\\\underline{748672}}& \makecell[c]{14272477033\underline{398245043}\\\underline{17286893084085898498}\\\underline{974208}}&\makecell[c]{324426393\\191789663\\496771065\\1774464}&\\
				    \cline{1-4}
				    $\bar{A}_9(17,6,7)$&\makecell[c]{515377522062\underline{31582732}\\\underline{06243720263833202794}\\\underline{22841869}}&\makecell[c]{515377522062\underline{29330283}\\\underline{75569834242711771065}\\\underline{14588358}}&\makecell[c]{22524483\\06738860\\21121431\\72908253\\511}&\\
			    	\cline{1-5}
			    	$\bar{A}_3(18,6,7)$&\makecell[c]{174458\underline{12257790166779}\\\underline{9983827}}&\makecell[c]{174458\underline{08613395144836}\\\underline{4668984}}&\makecell[c]{36443950\\21943531\\4843}&\\
			    	\cline{1-4}
			    	$\bar{A}_4(18,6,7)$&\makecell[c]{12980791\underline{724384931467}\\\underline{26869552349184}}&\makecell[c]{12980791\underline{676032157433}\\\underline{06404506875456}}&\makecell[c]{48352774\\03420505\\045473728}&\\
			    	\cline{1-4}
			    	$\bar{A}_5(18,6,7)$&\makecell[c]{277555898\underline{31940596923}\\\underline{2320785522461015625}}&\makecell[c]{277555898\underline{27393199786}\\\underline{2851979819594173250}}&\makecell[c]{54739713\\69468805\\70286684\\2375}&\\
			    	\cline{1-4}
					$\bar{A}_7(18,6,7)$&\makecell[c]{302268027208\underline{74321359}\\\underline{52055740233413078854}\\\underline{4851143}}&\makecell[c]{302268027208\underline{29753783}\\\underline{82995551485383886659}\\\underline{7817940}}&\makecell[c]{44567575\\69060588\\02919332\\032003}& \cite{ref19}\\
					\cline{1-4}
					$\bar{A}_8(18,6,7)$&\makecell[c]{467680527430\underline{50000161}\\\underline{52096547627278890266}\\\underline{3015104512}}& \makecell[c]{467680527430\underline{39366343}\\\underline{45295882115780619566}\\\underline{7231220224}}&\makecell[c]{10633818\\06800665\\51149827\\06995783\\884288}& \\
					\cline{1-4}
					$\bar{A}_9(18,6,7)$&\makecell[c]{3043252730025\underline{7687287}\\\underline{45554854378590867917}\\\underline{9356964769069}}&\makecell[c]{ 3043252730025\underline{6357008}\\\underline{30805417830832821309}\\\underline{4769100199966}}&\makecell[c]{13302791\\474749436\\547748458\\78659103}&\\
					\cline{1-5}
					$\bar{A}_3(19,6,7)$&\makecell[c]{423933\underline{23786430105275}\\\underline{395540707}} & \makecell[c]{423933\underline{14923753645026}\\\underline{362778948}}&\makecell[c]{88626764\\76026049\\032761759}&\\
					\cline{1-5}
				\end{NiceTabular}}
			\end{table}

					\begin{table}[H]
						\centering
						\vskip 2mm \setlength{\tabcolsep}{6pt}
					    {
							\begin{NiceTabular}{|c|c|c|c|>{\centering\arraybackslash}m{2cm}|}
					\cline{1-5}				
					$\bar{A}_4(19,6,7)$&\makecell[c]{13292330\underline{725770169822}\\\underline{48314421588803584}}& \makecell[c]{13292330\underline{676252636403}\\\underline{62392736535656000}}&\makecell[c]{49517532\\21659168\\50531475\\84}&\\				
					\cline{1-4}
					$\bar{A}_5(19,6,7)$&\makecell[c]{867362182\underline{24814365385}\\\underline{10024547576904297656}\\\underline{25}}& \makecell[c]{ 867362182\underline{10603512579}\\\underline{29416078450676410482}\\\underline{50}}&\makecell[c]{14210805\\80608469\\12622788\\717375}&\\
					\cline{1-4}
					$\bar{A}_7(19,6,7)$&\makecell[c]{50802187332\underline{973471908}\\\underline{94620082610297361630}\\\underline{59472696743}}& \makecell[c]{ 50802187332\underline{898567076}\\\underline{16639974185897970380}\\\underline{70776963944}}&\makecell[c]{74904832\\77978018\\42439939\\12498869\\5732799}&  \cite{ref19}\\
					\cline{1-4}
					$\bar{A}_8(19,6,7)$&\makecell[c]{153249555228\underline{42624052}\\\underline{92718996726506746762}\\\underline{461610227269632}}&  \makecell[c]{153249555228391\underline{39561}\\\underline{49369446451311316396}\\\underline{345661389971968}}&\makecell[c]{34844914\\33495502\\75195430\\366115948\\837297664}& \\
					\cline{1-4}
					$\bar{A}_9(19,6,7)$&	\makecell[c]{1797010304552\underline{9161766}\\\underline{36962685962014121596}\\\underline{861849130224001869}}& \makecell[c]{ 1797010304552\underline{8376249}\\\underline{64875592607742705256}\\\underline{681437623562556750}}&\makecell[c] {78551672\\08709335\\42714163\\40180411\\50666144\\5119}&\\
					\cline{1-5}
						$\bar{A}_3(19,8,9)$&	\makecell[c]{423911592\underline{60137818006}\\\underline{598949879}}&
						 \makecell[c]{423911592\underline{59987125499}\\\underline{483652096}}&\makecell[c]{15069250\\71152977\\83}&\\
						\cline{1-4}
						$\bar{A}_4(19,8,9)$&	\makecell[c]{132922799609440\underline{56097}\\\underline{03321017077731328}} &
						 \makecell[c]{132922799609440\underline{08827}\\\underline{25152129005125632}}&\makecell[c]{47269781\\68888072\\605696}&\\
				    	\cline{1-4}
							$\bar{A}_7(19,8,9)$&\makecell[c]{5080218607396386\underline{5202}\\\underline{54720288496519552757}\\\underline{89440204395}}&	
							 \makecell[c]{ 5080218607396386\underline{3837}\\\underline{4792724238783200}\underline{6054}\\\underline{13176246272}}&\makecell[c]{13650679\\30461086\\87546703\\763958123}&\cite{ref18}\\
							\cline{1-4}
							$\bar{A}_8(19,8,9)$	&\makecell[c]{15324955408658943028\\76\underline{542290622903338622}\\\underline{283385688817664}}&				
							\makecell[c]{ 15324955408658943028\\76\underline{217762165724597612}\\\underline{458030474985472}}&\makecell[c]{32452845\\71787410\\09825355\\213832192}&  \\
							\hline
					\end{NiceTabular}}
				\end{table}

\section{Acknowledgement}	
	G. Wang and X. Gao are supported by the National Natural Science Foundation of China (No. 12301670). The French Agence Nationale de la Recherche partially supported the third author's work through ANR BARRACUDA (ANR-21-CE39-0009). Her work is partially supported by the Fundamental Research Funds for the Central Universities (Grant No. CCNU25JCPT031). F.-W Fu is supported by the National Key Research and Development Program of China (Grant No. 2022YFA1005000), the National Natural Science Foundation of China (Grant No. 62371259), the Fundamental Research Funds for the Central Universities of China (Nankai University), and the Nankai Zhide Foundation.

	\end{document}